\newcommand{\mathsc}[1]{{\normalfont\textsc{#1}}}
\renewcommand{\hat}{\widehat}
\renewcommand{\tilde}{\widetilde}
\newcommand{\norm}[1]{\lVert#1{\rVert}}
\newcommand{\normone}[1]{{\norm{#1}}_1}
\newcommand{\norminf}[1]{{\norm{#1}}_\infty}
\newcommand{\R}{{\mathbb{R}}}
\newcommand{\mone}{{\mathbbm{1}}}
\newcommand{\indic}[1]{{\mone\!\left[#1\right]}}
\providecommand{\Prob}[2]{\ensuremath{{\Pr_{#1}\!\left[#2\right]}}}
\providecommand{\expect}[2]{\ensuremath{{\mathbb{E}_{#1}\!\left[#2\right]}}}
\newcommand{\var}{\operatorname{Var}}
\newcommand{\cov}{\operatorname{Cov}}
\newcommand{\supp}{{\mathrm{supp}}}
\newcommand{\maxi}[1]{\mbox{maximize} & {#1 } & \\}
\newcommand{\st}{\mbox{subject to} }
\newcommand{\con}[1]{&#1 & \\}
\newcommand{\qcon}[2]{&#1, &#2.  \\}
\newenvironment{lp}{\begin{equation}  \begin{array}{lll}}{\end{array}\end{equation}}
\newenvironment{lp*}{\begin{equation*}  \begin{array}{lll}}{\end{array}\end{equation*}}
\definecolor{WildStrawberry}{RGB}{255,67,164}
\newcommand{\budget}{{b}}
\newcommand{\capsum}{{c}}
\newcommand{\vcap}{{s}}
\newcommand{\Rev}{{\mathsc{Rev}}}
\newcommand{\RevB}{{\mathsc{Rev}^\budget}}
\newcommand{\SRev}{{\mathsc{SRev}}}
\newcommand{\SRevB}{{\mathsc{SRev}^\budget}}
\newcommand{\BRev}{{\mathsc{BRev}}}
\newcommand{\BRevB}{{\mathsc{BRev}^\budget}}
\newcommand{\SINGLE}{{\mathsc{Single}}}
\newcommand{\CORE}{{\mathsc{Core}}}
\newcommand{\TAIL}{{\mathsc{Tail}}}
\newcommand{\VVAL}{{\Phi}} 
\newcommand{\vVAL}{{\varphi}} 
\newcommand{\vVALt}{{\tilde \varphi}} 
\newcommand{\thmskip}{\vspace{8pt plus 2pt minus 4pt}}
\begin{document}
\title{A Simple Mechanism \\ for a Budget-Constrained Buyer}
\author{Yu Cheng\inst{1}\and
Nick Gravin\inst{2} \and
Kamesh Munagala\inst{1} \and
Kangning Wang\inst{1}}
%

%
\institute{Duke University;
\email{\{yucheng,kamesh,knwang\}@cs.duke.edu}\and
Shanghai University of Finance and Economics;
\email{nikolai@mail.shufe.edu.cn}}

\maketitle



\begin{abstract}
We study a classic Bayesian mechanism design setting of monopoly problem for an additive buyer in the presence
of budgets. In this setting a monopolist seller with $m$ heterogeneous items faces a single buyer and seeks 
to maximize her revenue. The buyer has a budget and additive valuations drawn independently for each item from (non-identical) distributions.
We show that when the buyer's budget is publicly known, the better of selling each item separately and selling the grand bundle extracts a constant fraction of the optimal revenue.
When the budget is private, we consider a standard Bayesian setting where buyer's budget $b$ is drawn from a known distribution $B$. We show that if $b$ is independent of the valuations and distribution $B$
satisfies monotone hazard rate condition, then selling items separately or in a grand bundle 
is still approximately optimal. We give a complementary example showing that no constant approximation simple mechanism is possible if budget $b$ can be interdependent with valuations.
%
%
\end{abstract}


\section{Introduction}
Revenue maximization is one of the fundamental problems in auction theory.
The well-celebrated result of Myerson~\cite{Myerson81} characterized the revenue-maximizing mechanism when there is only one item for sale.
Specifically, in the single buyer case, the optimal solution is to post a take-it-or-leave-it price.
Since Myerson's work, the optimal mechanism design problem has been studied extensively in computer science literature and much progress has been made~\cite{CaiDW12a,CaiDW12b,CaiDW13a,CaiDW13b,AlaeiFHHM12,Daskalakis15}.
The problem of finding the optimal auction turned out to be so much more complex than the single-item case. 
Unlike the Myerson's single-item auction, the optimum can use randomized allocations and price bundles of items already for two items and a single buyer. It is also known that the gap between the revenue of the optimal randomized and optimal deterministic mechanism can be arbitrarily large~\cite{BriestCKW10,HartN13}, the optimal mechanism may require a menu with infinitely many options~\cite{manelli2007multidimensional,DaskalakisDT13}, and the revenue of the optimal auction may decrease when the buyer's valuation distributions move upwards (in the stochastic dominance sense).

In light of these negative results for optimal auction design, many recent papers focused on the design of \emph{simple} mechanisms that are \emph{approximately} optimal.
One such notable line of work initiated by Hart and Nisan~\cite{HartN17}
concerns a basic and natural setting of monopoly problem for the 
buyer with item values drawn independently from given distributions $D_1,\ldots,D_m$ and
whose valuation for the sets of items is additive\footnote{A buyer has additive valuations if his value for a set of items is equal to the sum of his values for the items in the set.} (linear). 
A remarkable result by Babaioff~et~al.~\cite{BabaioffILW14} showed 
that the better mechanism of either selling items separately, or selling the grand bundle extracts at least $(1/6)$-fraction of the optimal revenue.
It was also observed~\cite{HartN13,BabaioffILW14,RubinsteinW15} that the independence assumption on the items is essentially necessary and without it no simple (any deterministic) mechanism cannot be approximately optimal.


Auction design with budget constraints is an even harder problem.
Because buyer's utility is no longer quasi-linear, many standard concepts do not carry over\footnote{E.g., the classic VCG mechanism may not be implementable and social efficiency may not be achievable in the budgeted-setting~\cite{Singer10}.}. 
For example,
even for one buyer and one item, the optimal mechanism may require randomization when the budget is public~\cite{ChawlaMM11}, and may need an exponential-size menu when the budget is private~\cite{DevanurW17}.
Despite many efforts~\cite{LaffontR96,CheG00,GoldbergHW01,BorgsCIMS05,Abrams06,ChenGL11,DobzinskiLN12,BhattacharyaGGM10,BhattacharyaCMX10,ChakrabartyG10,ChawlaMM11,BeiCGL12,GoelML15,DevanurHH13,DaskalakisDW15,DevanurW17,Singer10}, the theory of optimal auction design with budgets is still far behind the theory without budgets.

In this paper, we investigate the effectiveness of simple mechanisms in the presence of budgets.
Our work is motivated by the following questions:

\begin{quote}
\em How powerful are simple mechanisms in the presence of budgets? In particular, is there a simple mechanism that is approximately optimal for a budget-constrained buyer with independent valuations?
\end{quote}

To this end we consider one of the most basic and natural 
settings of extensively studied monopoly problem for an additive buyer. 
In this setting, a monopolistic seller sells $m$ items to a single buyer.
The buyer has additive valuations drawn independently for each item from an arbitrary (non-identical) distribution.
We study two different budget settings: the \emph{public budget} case where the buyer has a fixed budget known to the seller, and the \emph{private budget} case where the buyer's budget is drawn from a distribution.
The seller wishes to maximize her revenue by designing an auction subject to individual rationality, incentive compatibility, and budget constraints.
We consider the Bayesian setting where the buyer knows his budget and his values for each item, but the seller only knows the prior distributions.

\subsection{Our Results and Techniques}
Our first result is that simple mechanisms remain approximately optimal when the buyer has a public budget.
\begin{theorem}
\label{thm:main-informal}
For an additive buyer with a known public budget and independent valuations, the better of selling each item separately and selling the grand bundle extracts a constant fraction of the optimal revenue.
\end{theorem}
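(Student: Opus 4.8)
The plan is to adapt the core--tail decomposition of Babaioff et al.~\cite{BabaioffILW14} to the budgeted setting, bringing in two extra ingredients. The first is elementary: in any budget-feasible mechanism the buyer pays at most $\budget$ at every realization, so $\RevB \le \budget$ (and likewise $\SRevB,\BRevB \le \budget$). I start from the dichotomy: if $\max(\SRevB,\BRevB) \ge \budget/c_0$ for a suitable absolute constant $c_0$, then $\RevB \le \budget \le c_0\max(\SRevB,\BRevB)$ and there is nothing to prove; so I may assume the budget is \emph{slack}, $\max(\SRevB,\BRevB) < \budget/c_0$. Throughout, write $\Rev^\budget_i := \max_{p\le\budget}p\,\Prob{v_i\sim D_i}{v_i \ge p}$ for the budget-capped single-item revenue of item $i$, and note $\SRevB \ge \max_i \Rev^\budget_i$ (put a prohibitive price on all items but one).

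The second ingredient, and a key step, is a bridging lemma: in the slack regime $\sum_i \Rev^\budget_i = O(\SRevB)$. The point is that the buyer's total desired spend $D := \sum_i p_i^\star\,\indic{v_i\ge p_i^\star}$ at the per-item revenue-optimal prices $p_i^\star \le \budget$ is a sum of \emph{independent} nonnegative contributions, each at most $\budget$, with $\expect{}{D} = \sum_i\Rev^\budget_i$ and $\var(D) \le \budget\,\expect{}{D}$. Since $\sum_i v_i \ge D$ pointwise, Paley--Zygmund shows that $\expect{}{D} > \budget$ would force the grand bundle, priced at $\Theta(\expect{}{D})$, to sell with constant probability, hence $\BRevB = \Omega(\budget)$, contradicting slackness; so $\expect{}{D} \le \budget$, and then the pointwise bound $(D-\budget)^+ \le D^2/(4\budget)$ gives $\expect{}{\min(D,\budget)} \ge \tfrac12\expect{}{D}$. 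Finally, selling each item at $p_i^\star$ recovers a constant fraction of $\expect{}{\min(D,\budget)}$: the only loss is that a rare win of an expensive item can crowd out cheaper ones, but the probability of ever winning an item priced above $\budget/2$ is $O(\expect{}{D}/\budget) = O(1)$, and this is controlled by separating the items into dyadic price buckets and arguing within each bucket (where all prices are within a factor of two). Hence $\SRevB = \Omega(\sum_i\Rev^\budget_i)$.

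With the lemma in hand I would fix a uniform threshold $t = \Theta(\SRevB)$, which is $\le\budget$ by slackness, and split $v_i = v^{\CORE}_i + v^{\TAIL}_i$ with $v^{\TAIL}_i := v_i\indic{v_i\ge t}$. The Hart--Nisan marginal-mechanism (sub-additivity) lemma --- which preserves budget-feasibility, as the two induced mechanisms only ever charge what the original one would --- gives $\RevB \le \RevB_{\CORE} + \RevB_{\TAIL}$. For the core, $X := \sum_i v^{\CORE}_i$ is a sum of independent variables each below $t$, with mean $\mu$ and $\var(X) \le t\mu$; by individual rationality $\RevB_{\CORE} \le \mu$, so we are done if $\mu = O(\SRevB)$, and otherwise $\mu$ is a large multiple of $t$, so $X$ concentrates by Chebyshev; here slackness forces $\mu < 2\budget$ (else pricing the bundle at $\budget$ would still sell with probability $>1/2$, giving $\BRevB>\budget/2$), so pricing the grand bundle at $\mu/2 < \budget$ sells with probability $\ge 1/2$ and yields $\BRevB \ge \mu/4$; either way $\RevB_{\CORE} = O(\max(\SRevB,\BRevB))$. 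For the tail I would follow the corresponding step of~\cite{BabaioffILW14}, now with every price truncated at $\budget$: since $\Prob{}{v_i\ge t}\le\Rev^\budget_i/t$, the tail valuation is active in few coordinates, and its optimal budget-feasible revenue reduces --- via a further sub-additive split over the active-coordinate patterns --- to $O(\sum_i\Rev^\budget_i)=O(\SRevB)$ by the bridging lemma. Adding the two estimates gives $\RevB = O(\max(\SRevB,\BRevB))$.

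I expect the main obstacle to be the two budget-specific points above: (i) the bridging lemma, i.e.\ showing that a single \emph{shared} budget costs only a constant factor for separate sales --- this is false outside the slack regime (think of $m$ items each deterministically worth $\budget$), and needs care when items have wildly different price scales; and (ii) making the tail bound go through with all menu prices capped at $\budget$, which is where one must be most careful about how the threshold $t$ is chosen so that the multi-active-coordinate correction --- which a priori carries a factor of $\budget$ --- stays $O(\SRevB)$. By contrast, re-deriving the Hart--Nisan decomposition and Myerson's characterization under the price cap, and the concentration estimates, should be routine. The resulting constant will be considerably worse than the $6$ of~\cite{BabaioffILW14} but still absolute.
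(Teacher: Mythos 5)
Your plan is a genuinely different route from the paper's. The paper avoids re-deriving core--tail for the budgeted setting entirely: in its first proof it truncates each coordinate $v_j$ at $\budget$ to obtain an \emph{independent} distribution $V'$, applies Babaioff et al.\ to $V'$ as a black box, and then relates $\SRev(V'),\BRev(V')$ back to $\SRevB(V),\BRevB(V)$ via a tail bound on $\normone{v'}$ (driven by the same slackness dichotomy you use, namely $\BRevB(V')<\budget/10$) and a coordinate-wise-dominance lemma showing $\SRevB(V')\le 2\SRevB(V)$. Its second proof truncates in $\ell_1$-norm at $\budget/2$, producing a \emph{weakly correlated} $\hat V$, and re-runs the Cai--Devanur--Weinberg duality on $\hat V$. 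You instead re-prove core--tail from scratch under the budget, with the novel ingredient being the ``bridging lemma'' $\sum_i \Rev^\budget_i = O(\SRevB)$ in the slack regime. This lemma is real and, I believe, provable: $\expect{}{D}\le\budget$ follows from Paley--Zygmund as you say, and the knapsack crowding-out loss is controlled by splitting items into ``cheap'' ($p_i^\star\le\budget/2$) and ``expensive'' ($p_i^\star>\budget/2$) rather than a full dyadic ladder --- for cheap items a maximal knapsack solution either buys everything desired or has spent at least $\budget/2$, while for expensive items the buyer pays $\ge\budget/2$ whenever any is desired, and in the slack regime $\sum_{\text{expensive}}\Rob{}{v_i\ge p_i^\star}$ is $O(1)$. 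Your route is more self-contained; the paper's is shorter precisely because it outsources core--tail to the budget-free theorem.

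Two places deserve more care than your sketch gives them. First, the clause ``the Hart--Nisan marginal-mechanism lemma \ldots gives $\RevB\le\RevB_\CORE+\RevB_\TAIL$'' is stated too loosely: $v^\CORE$ and $v^\TAIL$ are \emph{deterministically coupled} (exactly one is nonzero per coordinate), so the product-form marginal-mechanism lemma does not apply directly. You need the Li--Yao / Cai--Devanur--Weinberg form: condition on the active tail set $S$, exploit that $V_S\mid S$ and $V_{-S}\mid S$ are independent, apply the marginal-mechanism bound to each conditional, and sum over $S$ --- giving $\RevB\le\sum_S\Pr[S]\bigl(\Rev^\budget(V_S\mid S)+\mathrm{Val}(V_{-S}\mid S)\bigr)$. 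Budget-feasibility of the induced tail mechanism does carry over, but not for the reason you state: the induced mechanism's payment is $p(v)$ \emph{minus} the realized value of the off-$S$ allocation, hence is at most $p(v)\le\budget$ (it can be negative, which is fine). Second, your per-item bound should be $\Prob{}{v_i\ge t}\le\Rev^\budget_i/t$, which only holds when $t\le\budget$ --- this is exactly where the slack assumption $t=\Theta(\SRevB)\le\budget/c_0$ is used, so be explicit about that dependence. Neither issue is fatal, but both are load-bearing, and I'd expect the tail analysis under the price cap (your point (ii)) to need the conditional decomposition above rather than a generic ``sub-additive split,'' since the multi-active-coordinate terms have to be routed through the bridging lemma.

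One small typo check: you wrote $\Rob{}{\cdot}$ above only because I was mirroring your probability notation; in the paper's macros this is $\Prob{}{\cdot}$.
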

Theorem~\ref{thm:main-informal} is among the few positive results in budget-constrained settings that hold for arbitrary distributions.
Before our work, it is not clear that any mechanism extracting a constant fraction of the optimal revenue can be computed in polynomial time.

In Sections~\ref{sec:proof1}~and~\ref{sec:overview},
  we present two different approaches to prove Theorem~\ref{thm:main-informal}.
Both approaches truncate the valuation distribution $V$ according to the budget $b$ (in different ways) and then relate the revenues of the optimal/simple mechanisms on the truncated distribution to the revenues on the original valuations.
The first approach uses the main result of~\cite{BabaioffILW14} in a black-box way, and the second approach adapts the duality-based framework developed in~\cite{CaiDW16}.

It is worth pointing out that many of our structural lemmas hold for correlated valuations as well.
Using these lemmas, we can generalize Theorem~\ref{thm:main-informal} with minimum effort to allow the buyer to have weakly correlated valuations.
We call a distribution $\hat V$ \emph{weakly correlated} if it is the result of conditioning an independent distribution $V$ on the sum of $v \sim V$ being at most $\capsum$: $\hat V = V_{|(\sum v_i \le c)}$ (See Definition~\ref{def:weakly} for the formal definition).
\begin{corollary}
\label{cor:weakly}
Let $\hat V$ be a weakly correlated distribution. 
For an additive buyer with a public budget and valuations drawn from $\hat V$, the better of selling separately and selling the grand bundle extracts a constant fraction of the optimal revenue.
\end{corollary}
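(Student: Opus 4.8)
The plan is not to treat Theorem~\ref{thm:main-informal} as a black box --- a naive reduction, simulating an optimal mechanism for $(\hat V,\budget)$ against a buyer with independent valuations $V$ and the same budget, only gives $\RevB(\hat V)=O\bigl(\RevB(V)/\Prob{v\sim V}{\sum_i v_i\le\capsum}\bigr)$, which is far too weak when the cap $\{\sum_i v_i\le\capsum\}$ removes most of the mass, and moreover does not let us lower bound the simple-mechanism revenues of $\hat V$ by those of $V$. Instead I would re-run the \emph{proof} of Theorem~\ref{thm:main-informal} on $\hat V$ itself, using that almost every step there is a statement about a single (possibly correlated) valuation distribution and never appeals to independence of the coordinates. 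First make the harmless normalization $\budget\le\capsum$: if $\budget>\capsum$ the budget is vacuous for a buyer whose realized total value never exceeds $\capsum$, so we may replace $\budget$ by $\capsum$.

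Now follow the proof of Theorem~\ref{thm:main-informal}: truncate $\hat V$ according to $\budget$ into a ``core'' and a ``tail''. Crucially, the core is obtained by conditioning $\hat V$ on its total value being small, and $\hat V$ is itself $V$ conditioned on total value at most $\capsum$, so the core is again a product distribution conditioned on a single cap on its total value --- weakly correlated in the sense of Definition~\ref{def:weakly}, but not independent. The structural lemmas that bound $\RevB(\hat V)$ by the core revenue plus the tail contribution, and that charge the tail contribution to $\SRevB(\hat V)+\BRevB(\hat V)$ (by selling separately, since when the total value is large one can bundle profitably), are distribution-agnostic and apply verbatim. What remains is to bound the revenue of the core, and this is the one place where the independent case appeals to Babaioff~et~al.~\cite{BabaioffILW14} (first approach) or to the dual certificate of Section~\ref{sec:overview} (second approach); this is the step I expect to be the main obstacle.

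For the core bound I would try two routes. The first is to exploit the choice of core/tail threshold: if it is calibrated so that the core's total value is at most a constant times $\BRevB(\hat V)$, then the core revenue, being bounded by the expected total allocated value, is automatically $O(\BRevB(\hat V))$, with no appeal to independence. The second, if that slack is not available, is to check that the duality argument for the core bound uses only the boundedness of the total value on the core together with the product structure of the distribution \emph{before} conditioning on the cap --- both of which the core of $\hat V$ inherits --- so the dual certificate can be built as in Section~\ref{sec:overview}. Either way, combining the correlation-robust core/tail lemmas with the core bound gives $\RevB(\hat V)=O\bigl(\max\{\SRevB(\hat V),\BRevB(\hat V)\}\bigr)$, which is Corollary~\ref{cor:weakly}.
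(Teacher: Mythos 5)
Your high-level plan matches the paper's: you correctly observe that the structural lemmas (the decomposition of $\RevB$ into a truncated ``core'' plus a tail, and the monotonicity of $\SRev$ and $\BRev$) never use independence, that $\ell_1$-truncating a weakly correlated distribution at a smaller cap yields another weakly correlated distribution (this is exactly the paper's ``weakly-correlated is closed under further capping the sum,'' cf.\ Lemma~\ref{CLM:C1-PREC-C2}), and that the normalization $\budget\le\capsum$ is harmless. You also correctly identify the crux: the only step in the proof of Theorem~\ref{thm:main-informal} that genuinely uses independence is the budget-free bound on the revenue of the truncated distribution, i.e.\ the analogue of~\cite{BabaioffILW14}.

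The gap is that you do not actually supply that core bound, and your two suggested routes for it do not close it. Route 1 (calibrate the truncation threshold so the core's total value is $O(\BRevB(\hat V))$) does not work: the truncation threshold must stay at $\Theta(\budget)$ for Lemma~\ref{lem:revbvsplit} to charge the discarded tail to $\BRevB$, and $\budget$ can be arbitrarily larger than $\BRevB(\hat V)$, so you cannot simultaneously make the core's total value $O(\BRevB)$ and keep the tail cheap. Route 2 (the duality argument only needs product structure before conditioning plus boundedness of the total) is the right intuition and is what the paper does, but it is by no means automatic. The paper's Lemma~\ref{lem:hatv-simple} is the main technical contribution for this corollary: it re-derives the Cai--Devanur--Weinberg dual certificate for a \emph{correlated} distribution and then needs three separate arguments (Lemmas~\ref{lem:vhat-single},~\ref{lem:vhat-tail},~\ref{lem:vhat-core}), each of which exploits a specific, non-obvious property of $\hat V=V_{|\normone{v}\le\capsum}$: that conditioning on $t_{-j}$ carries no more information than conditioning on $\normone{t_{-j}}$, that $\hat V_{-j\mid t_j}$ is monotone in $t_j$ in the stochastic-dominance order, and that the capped coordinates are pairwise negatively correlated so that the total value concentrates. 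None of these is ``inherited for free'' merely from boundedness plus pre-conditioning independence; each requires a short but genuine argument that your proposal leaves unaddressed. So the proposal identifies the right reduction and the right obstacle, but leaves the central lemma unproved.
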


In Section~\ref{sec:private}, we examine the private budget setting.
The budget $b$ is no longer fixed but is drawn from a distribution $B$.
The seller only knows the prior distribution $B$ but not the value of $b$.
We first show that if the valuations can be correlated with the budget, the problem is at least as hard as budget-free mechanism design with correlated valuations, where simple mechanisms are known to be ineffective.
In light of this negative result, we focus on the setting where the budget distribution $B$ is independent of the valuations $V$.
In this setting, we show that simple mechanisms are approximately optimal when the budget distribution satisfies the monotone hazard rate (MHR) condition.
\begin{theorem}
\label{thm:private-mhr}
When the budget distribution $B$ is MHR, the better mechanism of pricing items separately and selling a grand bundle  achieves a constant fraction of the optimal revenue.
\end{theorem}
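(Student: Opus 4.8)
The plan is to reduce to the public-budget setting and then average over the realized budget, using the monotone hazard rate (MHR) condition to control the average. Write $\bar B(t)=\Prob{b\sim B}{b\ge t}$ for the survival function of $B$. First, any incentive-compatible, individually-rational, budget-feasible mechanism for the private-budget instance $(V,B)$ is also feasible for the public-budget instance with any fixed budget $b$ (revealing $b$ to the seller can only help), so conditioned on the realized budget being $b$ its revenue is at most $\Rev^b(V)$; hence
\[
\OPT(V,B)\ \le\ \expect{b\sim B}{\Rev^b(V)}.
\]
Applying Theorem~\ref{thm:main-informal} for each fixed $b$ and taking expectations, it suffices to prove
\[
\expect{b\sim B}{\BRev^b(V)}=O(1)\cdot\BRev(V,B)
\quad\text{and}\quad
\expect{b\sim B}{\SRev^b(V)}=O(1)\cdot\SRev(V,B),
\]
where $\BRev(V,B)$ and $\SRev(V,B)$ denote the best grand-bundle and best separate-prices revenue against the private budget.

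For the bundle bound I would use the closed forms: with $f(p)=p\cdot\Prob{v\sim V}{\sum_i v_i\ge p}$ we have $\BRev^b(V)=\max_{p\le b}f(p)$, which is non-decreasing in $b$ and satisfies $\BRev^b(V)\le b$ (any price used is at most $b$, and $f(p)\le p$); moreover $\BRev(V,B)=\max_p f(p)\bar B(p)$ by independence of $b$ and $v$, and a price $p$ witnessing $\BRev^b(V)$ obeys $p\ge\BRev^b(V)$, so $\bar B(p)$ is bounded by the quantile of $B$ at level $\BRev^b(V)$. Feeding these facts into the MHR tail bound --- $\bar B$ has an exponentially decaying tail beyond a typical scale $q$ (say $q=\expect{b\sim B}{b}$, for which $\bar B(q)=\Omega(1)$ while $\bar B(\lambda q)$ decays exponentially in $\lambda$) --- shows that $\expect{b}{\BRev^b(V)}$ is, up to a constant, the contribution of budgets of order $q$. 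Since $\bar B(q)=\Omega(1)$, pricing the bundle around $q$ already gives $\BRev(V,B)=\Omega(q)$ whenever the buyer's total value reaches that scale with constant probability; in the complementary regime, where the typical budget dwarfs the relevant values, the budget constraint is slack with constant probability, so $\BRev(V,B)=\Omega(\BRev(V))\ge\Omega(\expect{b}{\BRev^b(V)})$. Combining the two regimes yields the bundle inequality.

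The separate-prices inequality follows the same template but lacks a single-price formula, since a budget-limited buyer facing many posted prices picks a budget-feasible subset. Here I would instead use that the buyer's total payment is bounded by both the budget $b$ and the realized total value $\sum_i v_i$ (individual rationality plus additivity), together with a core--tail split of the item values at the scale $q$: the core part is charged to pricing at scale $q$ (again using $\bar B(q)=\Omega(1)$), while the tail part is absorbed through the MHR tail bound exactly as in the bundle case. As in the public-budget analysis, passing between $\SRev(V,B)$ and the budget-free $\SRev(V)$ in the slack regime invokes the guarantee of Babaioff et al.~\cite{BabaioffILW14}.

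I expect the last two lemmas --- bounding $\expect{b}{\BRev^b(V)}$ and $\expect{b}{\SRev^b(V)}$ by the budgeted simple revenues --- to be the main obstacle. The difficulty is two-sided: heavy-budget realizations must be tamed by the MHR tail, but the opposite regime, in which the budget constraint is essentially never binding, has to be isolated and handed back to the budget-free result, and the two regimes must be merged while losing only constant factors. Of the two, separate selling is the more delicate, because the interaction between independent posted prices and a single budget has no closed form and forces a core--tail argument in place of the direct quantile estimate available for the grand bundle.
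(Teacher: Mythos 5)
Your opening reduction---any private-budget mechanism is feasible for the public-budget instance at each realized $b$, so $\Rev^B(V)\le\expect{b\sim B}{\Rev^{b}(V)}$---is exactly how the paper starts, and the MHR fact $\Pr_{b\sim B}[b\ge b^*]\ge e^{-1}$ with $b^*=\expect{b\sim B}{b}$ is also the engine of the paper's argument. The divergence is in where you apply Theorem~\ref{thm:main-informal}. You push it inside the expectation, getting $\expect{b}{\Rev^b(V)}\lesssim\expect{b}{\SRev^b(V)}+\expect{b}{\BRev^b(V)}$, and then you must show $\expect{b}{\SRev^b(V)}=O(\SRev^B(V))$ and $\expect{b}{\BRev^b(V)}=O(\BRev^B(V))$. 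You correctly flag these two bounds as ``the main obstacle,'' and then you do not actually prove them: the argument via ``MHR tail beyond a typical scale,'' the two ``regimes,'' and the core--tail split for $\SRev$ are sketched but never made precise, and the merging of regimes ``losing only constant factors'' is asserted, not argued. That is a genuine gap.

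The paper avoids this obstacle with a cleaner decomposition: it first proves the single-quantity bound $\Rev^b(V)\le\max\bigl(1,\,b/b^*\bigr)\cdot\Rev^{b^*}(V)$, which combines the trivial monotonicity $\Rev^b(V)\le\Rev^{b^*}(V)$ for $b<b^*$ with the fact that $\Rev^b(V)/b$ is nonincreasing in $b$ (take the menu achieving $\Rev^b(V)$ and scale every allocation and payment by $b^*/b$; the buyer's preferences among options are unchanged, every price is now $\le b^*$, and the expected payment is $b^*/b$ times the original). Integrating gives $\expect{b}{\Rev^b(V)}\le 2\Rev^{b^*}(V)$ directly, no tail estimate needed. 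Theorem~\ref{thm:main-informal} is then applied once, at $b=b^*$, and the MHR bound $\Pr[b\ge b^*]\ge e^{-1}$ passes $\SRev^{b^*}(V)$ and $\BRev^{b^*}(V)$ to $\SRev^B(V)$ and $\BRev^B(V)$ (posting the $b^*$-optimal prices, a buyer with realized budget $b\ge b^*$ pays at least as much as a budget-$b^*$ buyer would). This sidesteps the need to compare $\SRev^b$ across budgets at all. Incidentally, your two target inequalities are true and can be established by the very same scaling lemma applied to $\SRev^b$ and $\BRev^b$ (a short exchange argument shows that scaling all item prices by $\alpha<1$ makes the buyer pay weakly more in original-price units, so $\SRev^b(V)/b$ is also nonincreasing); so the obstacle is not as hard as you fear, but it does require this precise lemma rather than a tail/core--tail argument, and as written your proposal does not supply it. Finally, your remark about ``invoking the guarantee of Babaioff et al.'' to pass between $\SRev(V,B)$ and $\SRev(V)$ is misplaced: that result compares $\Rev$ to $\SRev+\BRev$, not two variants of $\SRev$, and no such passage is needed once the scaling lemma is in hand.
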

We will show that it is sufficient to pretend the buyer has a public budget $b^* = \expect{b \sim B}{b}$.
The proof of Theorem~\ref{thm:private-mhr} uses the MHR condition, as well as the fact that for a public budget $b$, the (budget-constrained) optimal revenue is nondecreasing in $b$, but optimal revenue divided by $b$ is nonincreasing in $b$.

\subsection{Related Work}
The most closely related to ours are the following two lines of work.
\paragraph{\bf Simple Mechanisms.}
In a line of work initiated by Hart and Nisan~\cite{HartN17,LiY13,BabaioffILW14}, \cite{BabaioffILW14} first showed that for an additive buyer with independent valuations, either selling separately or selling the grand bundle extracts a constant fraction of the optimal revenue.
This was later extended to multiple buyers~\cite{Yao15}, as well as buyers with more general valuations (e.g., sub-additive~\cite{RubinsteinW15}, valuations with a common-value component~\cite{BateniDHS15}, and valuations with complements~\cite{EdenFFTW17}).
Others have studied the trade-off between the complexity and approximation ratio of an auction, along with the design of small-menu mechanisms in various settings~\cite{HartN13,TangW17,DughmiHN14,ChengCDEHT15,BabaioffGN17}.



\paragraph{\bf Auctions for Budget-Constrained Buyers.}
There has been a lot of work studying the impact of budget constraints on mechanism design.
Most of the earlier work required additional assumptions on the valuations distributions, like regularity or monotone hazard rate~(\cite{LaffontR96,CheG00,BhattacharyaGGM10,PaiV14}).
We mention a few results that work for arbitrary distributions.
For public budgets, \cite{ChawlaMM11} designed approximately optimal mechanisms for several single-parameter settings and multi-parameter settings with unit-demand buyers.
For private budgets, \cite{DevanurW17} characterized the structure of the optimal mechanism for one item and one buyer.
\cite{DaskalakisDW15} gave a constant-factor approximation for additive bidders whose private budgets can be correlated with their values. However, they require the buyers' valuation distribution to be given explicitly, which is of exponential size in our setting.
There are also approximation and hardness results in the prior-free setting~\cite{BorgsCIMS05,Abrams06,DevanurHH13}, as well as designing Pareto optimal auctions~\cite{DobzinskiLN12,GoelML15}.

\paragraph{\bf Other Related Work.} Our work concerns revenue maximization for additive buyer. 
Another natural and basic scenario extensively studied in the literature concerns buyers with 
unit-demand preferences~\cite{ChawlaHK07,ChawlaHMS10,ChawlaMS15}.
Our work studies monopoly problem for additive budgeted buyer in the standard Bayesian approach. In this framework, the prior distribution is known to the seller and typically is assumed to be independent. Parallel to this framework, the (budgeted) additive monopoly problem has been studied in a new robust optimization framework~\cite{carroll2017robustness,GravinL18}. Another group of papers on budget feasible mechanism design~\cite{BeiCGL12,ChenGL11,SinglaK13,Singer10} studies different reverse auction settings and are concerned with value maximization.


\section{Preliminaries}
\subsection{Optimal Mechanism Design}
We study the design of optimal auctions with one buyer, one seller, and $m$ heterogeneous items labeled by $[m] = \{1, \ldots, m\}$.
There is exactly one copy of each item, and the items are indivisible.
The buyer has additive valuation ($v(S)=\sum_{j\in S}v(\{i\})$ for any set $S\subseteq [m]$) and a publicly known budget $b$~\footnote{
In this paper, we mostly focus on the public budget case. So we define notations and discuss backgrounds assuming the buyer has a public budget.}.

We use $v \in \R^m$ to denote the buyer's valuations, where $v_j$ is the buyer's value for item $j$.
We consider the Bayesian setting of the problem, in which the buyer's values are drawn from a discrete\footnote{Like previous work on simple and approximately optimal mechanisms, our results extend to continuous types as well (see, e.g.,~\cite{CaiDW16} for a more detailed discussion).} distribution $V$. 
Let $T = \supp(V)$ be the set of all possible valuation profiles in $V$.
We use $f(t)$ for any $t\in T$ to denote the probability mass function of $V$: $f(t) = \Prob{v \sim V}{v = t}$.
Let $T_j = \supp(V_j)$.
We say the valuation distribution $V$ is independent across items if it can be expressed as $V = \times_j V_{j}$.

We assume the buyer is risk-neutral and has quasi-linear utility when the payment does not exceed his budget.
Let $\pi: T \rightarrow [0,1]^m$ and $p:T \rightarrow \R$ denote the allocation and payment rules of a mechanism respectively.
That is, when the buyer reports type $t$, the probability that he will receive item $j$ is $\pi_j(t)$, and his expected payment is $p(t)$ (over the randomness of the mechanism).
Thus, if the buyer has type $t$, his (expected) value for reporting type $t'$ is exactly $\pi(t')^\top t$,~\footnote{We use $x^\top y = \sum_{i=1}^m x_i y_i$ to denote the inner product of two vectors $x$ and $y$.} and
  his (expected) utility for reporting type $t'$ is
\[
u(t, t') = \begin{cases}
\pi(t')^\top t - p(t') & \text{if } p(t') \le \budget, \text{ and } \\
-\infty & \text{otherwise.}
\end{cases}
\]

By the revelation principle, it is sufficient to consider mechanisms that are incentive compatible (i.e., ``truthful'').
A mechanism $M = (\pi, p)$ is (interim) incentive-compatible (IC) if the buyer is incentivized to tell the truth (over the randomness of mechanism), and (interim) individually rational (IR) if the buyer's expected utility is non-negative whenever he reports truthfully.
We use $\varnothing$ for the option of not participating in the auction ($\pi(\varnothing) = 0, p(\varnothing) = 0$), and let $T^+ = T \cup \{\varnothing\}$.
Then, the IC and IR constraints can be unified as follows:
\[
u(t, t) \ge u(t, t') \quad \forall t \in T, t' \in T^+.
\]
To summarize, when the seller faces a single buyer with budget $b$ and valuation drawn from $V$, the optimal mechanism $M^* = (\pi^*, p^*)$ is the optimal solution to the following (exponential-size) linear program (LP):
\begin{lp}
\label{lp:budget-reduceform}
\maxi{\sum_{t \in T} f(t) p(t)}
\st \qcon{\pi(t')^\top t - p(t') \le \pi(t)^\top t - p(t)}{\forall t \in T, t' \in T^+}
    \qcon{0 \le \pi_j(t) \le 1}{\forall t \in T, j \in [m]}
    \qcon{p(t) \le \budget}{\forall t \in T}
    \con{\pi(\varnothing) = 0, \; p(\varnothing) = 0.}
\end{lp}

A mechanism is called ex-post IC, ex-post IR, or ex-post budget-preserving respectively, if the corresponding constraints hold for all possible outcomes, without averaging over the randomness in the mechanism.
We will show the better of pricing each item separately and pricing the grand bundle, which is a deterministic ex-post mechanism, can extract a constant fraction of the revenue of any interim mechanism.

\subsection{Simple Mechanisms}

For a buyer with valuation distribution $V$, we frequently use the following notations in our analysis:
\begin{itemize}
\item $\Rev(V)$: the revenue of the optimal truthful mechanism.
\item $\SRev(V)$: the maximum revenue obtainable by pricing each item separately.
\item $\BRev(V)$: the maximum revenue obtainable by pricing the grand bundle.
\item $\RevB(V)$: the revenue of the optimal truthful mechanism, when the buyer has a budget $\budget$.
\item $\SRevB(V)$: the maximum revenue that can be extracted by pricing each item separately, when the buyer has a public budget $\budget$.
\item $\BRevB(V)$: the maximum revenue that can be extracted by pricing the grand bundle, when the buyer has a public budget $\budget$.
\end{itemize}
We know that $\SRev(V)$ is obtained by running Myerson's optimal auction separately for each item, 
  and $\BRev(V)$ is obtained by running Myerson's auction viewing the grand bundle as one item.
Similarly, $\BRevB(V)$ is a single-parameter problem as well, with the minor change that the posted price is at most $\budget$.

The case of $\SRevB(V)$ is more complicated. For example, when a budgeted buyer of type $t \in \R^m$ participates in an auction with posted price $p_j$ for each item $j$, he will maximize his utility by solving a $\mathsc{Knapsack}$ problem.
There exists a poly-time computable mechanism that extracts a constant fraction of $\SRevB(V)$ (e.g.,~\cite{BhattacharyaCMX10}).
We focus on the structural result that the better of $\SRevB(V)$ and $\BRevB(V)$ is a constant approximation of $\RevB(V)$.
A better approximation for $\SRevB(V)$ is an interesting open problem that is beyond the scope of this paper.

\subsection{Weakly Correlated Distributions}
We call a distribution like $\hat V$ \emph{weakly correlated} if the only condition causing the correlation is a cap on its sum.
\begin{definition}
\label{def:weakly}
\
For an $m$-dimensional independent distribution $V$ and a threshold $\capsum > 0$, we remove the probability mass on any $t \in \supp(V)$ with $\normone{t} > \capsum$ and renormalize.
Let $\hat V := V_{|(\normone{v} \le \capsum)}$ denote the resulting distribution. Formally,
\[
\Prob{\hat v \sim \hat V}{\hat v = t} = \Prob{v \sim V}{v = t \mid \normone{v} \le \capsum}, \; \forall t \in \supp(V).
\]
\end{definition}
Weakly correlated distributions arise naturally in our analysis.
We will show that if the buyer's valuations are weakly correlated, then the better of selling separately and selling the grand bundle is approximately optimal, and this holds with or without a (public) budget constraint.

\subsection{First-Order Stochastic Dominance}
\label{sec:prelim-preceq}
Stochastic dominance is a partial order between random variables. A random variable $X$ with $\supp(X) \subseteq \R$ \emph{(weakly) first-order stochastically dominates} another random variable $Y$ with $\supp(Y) \subseteq \R$ if and only if
\[ \Prob{}{X \ge a} \ge \Prob{}{Y \ge a} \text{ for all } a \in \R. \]


This notion of stochastic dominance can be extended to multi-dimensional distributions.
In this paper, we use the notion of coordinate-wise dominance.
\begin{definition}
Given two $m$-dimensional distributions $D_1$ and $D_2$, we say $D_1$ \emph{coordinate-wise stochastic dominates} $D_2$ ($D_1 \succeq D_2$ or $D_2 \preceq D_1$) if there exists a randomized mapping $f: \supp(D_1) \rightarrow \supp(D_2)$ such that $f(x) \sim D_2$ when $x \sim D_1$, and $f(x) \leq x$ coordinate-wise for all $x \in \supp(D_1)$ with probability $1$.
\end{definition}
This notion helps us express the monotonicity of optimal revenues in some cases.
For example, we can show that $\SRev(V_1) \geq \SRev(V_2)$ when $V_1 \succeq V_2$.
The mapping $f$ allows us to couple the draws $v_1 \sim V_1$ and $v_2 \sim V_2$, so that for a set of fixed prices, if the buyer buys an item under $v_2$, he will also buy it under $v_1$.


\section{Public Budget}
\label{sec:proof1}

In this section, we focus on the public budget case and prove our main result (Theorem~\ref{thm:main-informal}).
The buyer has a fixed budget $b$ and valuations drawn from an independent distribution $V$.

\thmskip
{\noindent \bf Theorem~\ref{thm:main-informal}.~}
{\em
$\RevB(V) \le 8 \SRevB(V) + 24 \BRevB(V)$.
}
\thmskip

It follows that the better of $\SRevB(V)$ and $\BRevB(V)$ is at least $\frac{\RevB(V)}{32}$.~\footnote{We do not optimize the constants in our proofs. In Section~\ref{sec:overview}, we will give an alternative proof of Theorem~\ref{thm:main-informal} that shows $\RevB(V) \le 5 \SRevB(V) + 6 \BRevB(V)$, thus improving this constant from 32 to 11. 
}

\paragraph{\bf Overview of Our Approach.}
Instead of taking the Lagrangian dual of LP~\eqref{lp:budget-reduceform} to derive an upper bound on the optimal objective value $\RevB(V)$, we adopt a more combinatorial approach.
Intuitively, we come up with a charging argument that splits $\RevB(V)$ and charges each part to either $\SRevB(V)$ or $\BRevB(V)$.

First, we partition the buyer types $t \in \supp(V)$ into two sets: \emph{high-value} types where $\norminf{t} \ge \budget$ and \emph{low-value} types where $\norminf{t} < \budget$.
Note that we can already charge the revenue of high-value types to $\BRevB(V)$: If we sell the grand bundle at price $\budget$, all high-value types will exhaust their budgets.

We now examine the low-value types.
Let $V'$ denote the valuation distribution conditioned on the buyer having a low-value type.
Observe that $V'$ is independent because it is defined using $\ell_\infty$-norm, and we can remove the budget to upper bound its revenue.
For a budget-free additive buyer with independent valuations, we can apply the main result of~\cite{BabaioffILW14}, which states that either selling separately or grand bundling works for $V'$: $\Rev(V') = O(\SRev(V') + \BRev(V'))$.

Next, we will relate $\SRev(V'), \BRev(V')$ to $\SRevB(V'), \BRevB(V')$.
We can assume the sum of $v' \sim V'$ is usually much smaller than $\budget$.
Similar to standard tail bounds, if the sum $\normone{v'}$ is often small and the random variables are independent and bounded (each $v'_j$ is at most $\budget$), then $\normone{v'}$ must have an exponentially decaying tail.
Therefore, we can add back the budget, because the sum $\normone{v'}$, which upper bounds the buyer's payment, is rarely very large.

Finally, we will show that $\SRevB(V') = O(\SRevB(V))$ and $\BRevB(V') \le \BRevB(V)$.
The $\BRev$ statement is easy to verify, but the $\SRev$ statement is more tricky.
The monotonicity of $\SRev(V)$ in the budget-free case (see Section~\ref{sec:prelim-preceq}) no longer holds when there is a budget.
Fortunately, we can pay a factor of two and circumvent this non-monotonicity due to budget constraints.

We will now make our intuitions formal and present three key lemmas. 
Throughout the paper, we will always use $V' = V_{|\norminf{v}\le b}$ as defined below.

\begin{definition}
Fix an $m$-dimensional distribution $V = \times V_j$.
Let $V'$ be the independent distribution where every coordinate of $V$ is capped at $\budget$.
That is, $V' = \times_j V'_j$, and $V'_j$ is given by $\Prob{V'_j}{x} = \Prob{v_j \sim V_j}{\min(v_j, \budget) = x}$.
\end{definition}

\begin{lemma}
\label{lem:opt-revvp}
$\RevB(V) \le \Rev(V') + \BRevB(V)$.
\end{lemma}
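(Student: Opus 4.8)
\textbf{Proof proposal for Lemma~\ref{lem:opt-revvp}.}

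The plan is to take an optimal budget-$\budget$ mechanism $M^* = (\pi^*, p^*)$ for the original distribution $V$ and split its revenue into a part coming from ``high-value'' types (those $t$ with $\norminf{t} \ge \budget$) and a part coming from ``low-value'' types (those with $\norminf{t} < \budget$). The high-value part is bounded by $\BRevB(V)$ directly: if the seller instead posts the grand bundle at price exactly $\budget$, every high-value type has some coordinate $t_j \ge \budget$, hence values the bundle at least $\budget$ and (being budget-feasible at price $\budget$) will purchase it, paying the full budget $\budget$. Since $M^*$ can never extract more than $\budget$ from any single type, the contribution $\sum_{t : \norminf{t}\ge\budget} f(t)\,p^*(t)$ to $\RevB(V)$ is at most $\budget \cdot \Prob{v\sim V}{\norminf{v}\ge\budget} \le \BRevB(V)$.

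For the low-value part, I would construct from $M^*$ a (budget-free) mechanism $M'$ on the capped distribution $V'$ and argue its revenue is at least $\sum_{t : \norminf{t}<\budget} f(t)\,p^*(t)$. The key observation is the coupling between $V$ and $V'$ induced by coordinate-wise capping at $\budget$: a draw $v \sim V$ maps to $v' = (\min(v_j,\budget))_j \sim V'$, and on the event $\norminf{v} < \budget$ we have $v' = v$. So I define $M'$ on types $t' \in \supp(V')$ by letting it imitate $M^*$ on the corresponding original type when that type is low-value, and offer the null option $\varnothing$ otherwise; concretely, for $t'$ with $\norminf{t'} < \budget$ set $\pi'(t') = \pi^*(t')$ and $p'(t') = p^*(t')$, and for $t'$ with $\norminf{t'} = \budget$ set $\pi'(t') = 0$, $p'(t') = 0$. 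Because $M^*$ was budget-feasible, $p^*(t) \le \budget$ for all $t$, so $M'$ is automatically budget-feasible; and since $M'$ is budget-free we need only IC/IR in the quasi-linear sense. The revenue of $M'$ under $V'$ equals $\sum_{t' : \norminf{t'}<\budget} \Prob{V'}{t'}\, p^*(t')$, which by the coupling equals exactly the low-value contribution to $\RevB(V)$; combined with the high-value bound this gives $\RevB(V) \le \Rev(V') + \BRevB(V)$.

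The main obstacle I expect is verifying that $M'$ is genuinely IC (and IR) on $V'$ — that collapsing all high-value types to the null option, and restricting $M^*$ to the low-value types, does not create a profitable deviation. A low-value type $t'$ could in principle prefer to report some high-value type's menu entry of $M^*$; but in $M^*$ those entries satisfied $p^*(\cdot) \le \budget$ while the high-value buyer's surplus there was nonnegative, and a low-value type $t'$ (with every coordinate $<\budget$) gets value $\pi^*(\cdot)^\top t' \le \pi^*(\cdot)^\top t^{\mathrm{high}}$ from the same allocation whenever $t' \le t^{\mathrm{high}}$ coordinate-wise — which need not hold in general, so the cleanest route is to instead observe that in $M'$ we have simply \emph{deleted} the high-value menu entries (replacing them by $\varnothing$, which is always available anyway), and \emph{kept} the low-value menu entries unchanged; deleting options from a truthful menu preserves truthfulness, and the entries we keep were mutually IC among themselves in $M^*$. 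Thus $M'$ is a sub-menu of $M^*$'s menu (plus $\varnothing$), hence IC and IR, and the payment collected from each surviving low-value type is unchanged. I would also double-check that the restriction to $\supp(V')$ (rather than $\supp(V)$) causes no issue: on the low-value event the two supports agree pointwise under the coupling, so every low-value type of $V$ is a type of $V'$ with the same conditional-free probability weight contribution, and the accounting goes through.
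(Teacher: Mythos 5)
Your proof is correct and follows the same high-level approach as the paper: partition $\RevB(V)$ by high- versus low-value types, charge the high part to $\BRevB(V)$ via the grand bundle at price $\budget$, and charge the low part to $\Rev(V')$ by restricting the optimal budgeted mechanism $M^*$ to the low-value types, observing that budget feasibility of $M^*$ makes the resulting mechanism budget-free. The paper actually proves a slightly more general statement (Lemma~\ref{lem:revbvsplit}) with the split taken on $\normone{t}$ at a threshold $\capsum \le \budget$ rather than on $\norminf{t}$ at $\budget$, so the same lemma can be reused verbatim for the $\ell_1$-truncation in Section~\ref{sec:overview}; and it gets the low-value bound a bit more directly than you do by \emph{not} deleting any menu options at all — it simply notes that $M^*$ remains a feasible (IC and IR) solution to the LP on the restricted support $\supp(V') \subseteq \supp(V)$, so $\Rev(V') \ge \sum_{t\in \supp(V')}\hat f(t)\,p^*(t) \ge \sum_{t:\normone{t}\le \capsum} f(t)\,p^*(t)$. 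This avoids the sub-menu/IC subtlety you correctly flagged and then had to patch: rather than arguing that dropping high-value menu entries preserves truthfulness and carefully accounting for what the surviving high-value types in $\supp(V')$ do, the paper just keeps the whole menu, uses that $M^*$ satisfies a superset of the needed constraints, and drops the (nonnegative) revenue $M^*$ collects from the capped types.
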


\begin{lemma}
\label{lem:revvp-revbvp}
Assume $\BRev^\budget(V') < \frac{\budget}{10}$.
Then, $\BRev(V') \le 3 \BRevB(V')$ and $\SRev(V') \le \SRev^{\budget}(V') + 4\BRev^{\budget}(V')$.
\end{lemma}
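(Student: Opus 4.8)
The goal is to bound $\BRev(V')$ and $\SRev(V')$ — revenues without a budget — by their budgeted counterparts, using the hypothesis $\BRevB(V') < \budget/10$. The intuition is that if even the budget-capped bundle price is small, then the mass of $\normone{v'}$ that lies above $\budget$ is tiny (exponentially small), so the budget constraint almost never binds and the budgeted and unbudgeted revenues are within constant factors. The plan is to first turn the hypothesis into a tail bound on $\normone{v'}$, and then use that tail bound for both the $\BRev$ and the $\SRev$ statements.

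\textbf{Step 1: a tail bound on $\normone{v'}$.} Let $p^* = \BRevB(V')$, which is achieved by posting some bundle price $q \le \min(p^*/\beta,\budget)$ for an appropriate constant (more simply, by posting price $q$ the seller gets $q\cdot\Prob{}{\normone{v'}\ge q}$, so $\Prob{}{\normone{v'} \ge q} \le p^*/q$ for every $q \le \budget$; in particular $\Prob{}{\normone{v'}\ge p^*} \le 1$, which is vacuous, so we instead argue via a doubling/Markov-type argument). The clean route: since each coordinate $v'_j \le \budget$ and the $v'_j$ are independent, and since $\expect{}{\normone{v'}}$ is controlled — indeed by posting a price $x\le\budget$ we get revenue $x\Prob{}{\normone{v'}\ge x}$, so $\expect{}{\min(\normone{v'},\budget)} = \int_0^\budget \Prob{}{\normone{v'}\ge x}\,dx \le$ (something like) $\BRevB(V')\cdot(1+\ln(\cdot))$, which is $O(\budget)$ and in fact $\ll \budget$. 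Then a Chernoff/Hoeffding bound for sums of independent $[0,\budget]$-bounded variables gives $\Prob{}{\normone{v'} \ge \budget} \le e^{-c}$ for a constant $c$ that grows as $\BRevB(V')/\budget$ shrinks; with the threshold $\budget/10$ this constant is large enough to make $\Prob{}{\normone{v'}\ge k\budget}$ decay geometrically in $k$. The main obstacle is getting the constants in this tail bound to line up with the claimed factors $3$ and $4$; I expect to use a union-bound over the dyadic shells $\{k\budget \le \normone{v'} < (k+1)\budget\}$ and sum a geometric series.

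\textbf{Step 2: $\BRev(V') \le 3\BRevB(V')$.} The optimal unbudgeted bundle price is some $r$; its revenue is $r\Prob{}{\normone{v'}\ge r}$. If $r \le \budget$ this revenue is also achievable with the budget, so $\BRev(V') = \BRevB(V')$ and we are done. If $r > \budget$, split the bundle-buyers by which dyadic shell $[k\budget,(k+1)\budget)$ their value sum falls in; the contribution of shell $k$ to $\BRev(V')$ is at most $(k+1)\budget\cdot\Prob{}{\normone{v'}\ge k\budget}$, and by Step~1 this is at most $(k+1)\budget\cdot e^{-ck}$. Meanwhile $\BRevB(V') \ge \budget\Prob{}{\normone{v'}\ge\budget} $ captures the $k\ge 1$ mass at price $\budget$ up to a factor. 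Summing $\sum_{k\ge 1}(k+1)e^{-ck}$ and comparing to $\Prob{}{\normone{v'}\ge\budget}$ (or just to $\budget$, noting $\BRev(V')\le \budget/10 +$ tail is circular — so I must instead bound $\BRev(V')$ directly by $\budget\cdot\sum_k (k+1)\Prob{}{\normone{v'}\ge k\budget}$ and show the whole sum is at most $3$ times the $k=1$ term $\BRevB(V')/\budget \ge \Prob{}{\normone{v'}\ge\budget}$, wait — need $\BRevB\ge\budget\Pr[\ge\budget]$, true). Choosing the constant in the hypothesis ($\budget/10$) makes $c$ large enough that the geometric sum is at most $3$.

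\textbf{Step 3: $\SRev(V') \le \SRevB(V') + 4\BRevB(V')$.} Let $\{p_j\}$ be the optimal separate prices for the unbudgeted $V'$. Run the same prices on the budgeted buyer. A budgeted buyer of type $t$ solves a knapsack and pays $\min$ of his knapsack-optimal payment and — well, he pays at most $\budget$. The lost revenue, compared to the unbudgeted buyer (who pays $\sum_{j: t_j\ge p_j} p_j$), is incurred only on types with $\sum_{j:t_j\ge p_j}p_j > \budget$, hence with $\normone{t} > \budget$. On such a type the unbudgeted payment is at most $\normone{t}$, so the total lost revenue is at most $\expect{}{\normone{v'}\cdot\indic{\normone{v'}>\budget}} = \sum_{k\ge1}\int$ over shell $k$, bounded via Step~1 by $\budget\sum_{k\ge1}(k+1)\Prob{}{\normone{v'}\ge k\budget} \le 4\budget\Prob{}{\normone{v'}\ge\budget}\le 4\BRevB(V')$. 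Therefore $\SRev(V') \le (\text{budgeted revenue of prices }\{p_j\}) + 4\BRevB(V') \le \SRevB(V') + 4\BRevB(V')$. The delicate point here is the knapsack behavior of the budgeted buyer: I must verify that running the unbudgeted-optimal prices on the budgeted buyer and letting him choose his favorite affordable subset still yields at least the unbudgeted revenue minus the lost-mass term; this is where the argument is least mechanical, and I would handle it by comparing, type by type, the unbudgeted purchase set with any affordable subset of it (dropping items greedily until the cost is under $\budget$ only when $\normone{t}>\budget$).

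Overall, Step~1 (the tail bound and pinning down its constant from the $\budget/10$ hypothesis) is the crux; Steps~2 and~3 are dyadic-shell bookkeeping on top of it, with Step~3 additionally requiring a careful but routine argument that the knapsack buyer's revenue under fixed prices degrades gracefully.
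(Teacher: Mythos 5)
Your overall plan --- derive a geometric tail bound on $\normone{v'}$ from the hypothesis $\BRevB(V') < \budget/10$, then do dyadic-shell bookkeeping to control the revenue lost by imposing the budget --- matches the paper's proof in structure. Your Step~3 in particular is sound: the budgeted buyer's knapsack problem under the unbudgeted-optimal prices $\{p_j\}$ coincides with the unbudgeted buyer's choice whenever the unbudgeted purchase set is affordable, so the loss is supported on $\{\normone{v'}>\budget\}$ and bounded by $\expect{}{\normone{v'}\cdot\indic{\normone{v'}>\budget}}$, exactly as in the paper.

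However, your Step~1 --- which you yourself flag as the crux --- has a genuine gap as sketched. A Chernoff/Hoeffding bound for a sum of $m$ independent $[0,\budget]$-bounded variables requires control of $\expect{}{\normone{v'}}$ (or of the variance), and the hypothesis $\BRevB(V') < \budget/10$ does \emph{not} give that: the revenue bound $p\cdot\Prob{}{\normone{v'}\ge p}\le \budget/10$ only holds for $p\le\budget$, so the unbounded tail above $\budget$ can make $\expect{}{\normone{v'}}$ arbitrarily large, and Hoeffding's bound is additionally dimension-dependent through $m\budget^2$ in the exponent. You would be trying to use the conclusion (a small tail) to establish the hypothesis (a small mean) of the concentration inequality, which is circular. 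The paper instead proves a dimension-free \emph{multiplicative} tail bound directly (Lemma~\ref{lem:sum-vp-tail}): for independent coordinates bounded by $c$,
\[
\Prob{}{\normone{v'}\ge x+y+c}\le \Prob{}{\normone{v'}\ge x}\cdot\Prob{}{\normone{v'}\ge y},
\]
obtained by conditioning on the first index $j$ at which the prefix sum crosses $x$ (which costs at most an extra $c$, since $v'_j\le c$) and then using independence of coordinates $j{+}1,\dots,m$. Iterating with $x=c$, $y=(2k{-}1)c$ gives $\Prob{}{\normone{v'}\ge(2k{+}1)c}\le q^k\Prob{}{\normone{v'}\ge c}$ with $q=1/10$, which is exactly what your dyadic bookkeeping in Steps~2 and~3 needs to produce the constants $3$ and $4$ (e.g.\ $\sum_{k\ge0}(2k{+}3)q^k\approx 3.58<4$). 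Replacing your Hoeffding sketch with this self-contained argument closes the gap.
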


\begin{lemma}
\label{lem:vp-sbrev}
$\BRevB(V') \le \BRevB(V)$ and $\SRevB(V') \le 2\SRevB(V)$.
\end{lemma}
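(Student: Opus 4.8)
The goal is to prove Lemma~\ref{lem:vp-sbrev}, namely that capping each coordinate at $\budget$ only loses a bounded factor in both $\BRevB$ and $\SRevB$. My plan is to treat the two inequalities separately, since they have different flavors.

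\textbf{The bundle inequality.} For $\BRevB(V') \le \BRevB(V)$ the plan is to use the coordinate-wise stochastic dominance machinery from Section~\ref{sec:prelim-preceq}. By construction $V \succeq V'$, witnessed by the coordinate-wise map $v \mapsto (\min(v_j,\budget))_j$. For the grand bundle with a single posted price $q \le \budget$, coupling a draw $v \sim V$ with its image $v' \sim V'$ gives $\normone{v} \ge \normone{v'}$, so whenever the buyer of type $v'$ can afford and wants to buy the bundle (i.e.\ $\normone{v'} \ge q$, and $q\le \budget$ ensures budget-feasibility), so does the buyer of type $v$. Hence the same price $q$ extracts at least as much revenue under $V$, giving $\BRevB(V') \le \BRevB(V)$. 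This step is routine.

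\textbf{The separate-selling inequality.} This is the harder part, precisely because (as the overview notes) $\SRev$ need not be monotone under $\succeq$ once there is a budget: raising a coordinate can make the buyer spend more of his budget on that item and hence buy fewer others, lowering revenue. The plan is to fix the optimal separate-prices $(p_j)_j$ for $V'$ achieving $\SRevB(V')$, and to analyze how a budgeted buyer with the \emph{uncapped} valuation $v \sim V$ responds to these same prices. Couple $v$ with $v' = (\min(v_j,\budget))_j \sim V'$. The only items on which the two buyers differ are those $j$ with $v_j > \budget$; call these the ``saturated'' items. Note any saturated item alone costs the full budget if priced at $p_j$ — actually $p_j \le \budget$ always, and on a saturated coordinate $v'_j = \budget \ge p_j$, so the capped buyer is willing to buy $j$. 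I would split $\SRevB(V')$ into the revenue $R_{\mathrm{sat}}$ collected (under $V'$, prices $p$) from buyers who buy at least one saturated item, and the rest $R_{\mathrm{unsat}}$.

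For $R_{\mathrm{sat}}$: whenever the capped buyer buys a saturated item $j$, we have $v_j > \budget \ge p_j$; so consider instead the single-item price $\budget$ on item $j$ — wait, $p_j$ may be well below $\budget$. Better: for each $j$ separately, the event ``$v_j > \budget$'' already lets us charge $\Pr[v_j > \budget]\cdot\budget$ to... hmm, but a buyer with $v_j>\budget$ might not want to spend on $j$ under the knapsack. The cleaner route is: the contribution of saturated items to $\SRevB(V')$ is at most $\sum_j p_j \Pr[v_j \ge \budget] \le \budget \sum_j \Pr_{v'_j}[v'_j = \budget]$, and each such term is a lower bound (up to the knapsack subtlety) on what pricing the grand bundle at $\budget$ extracts — so $R_{\mathrm{sat}} = O(\BRevB(V))$; combined with the first lemma's $\BRevB$ term this is absorbed. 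Alternatively, and this is likely what the authors do, one shows $R_{\mathrm{sat}} \le \SRevB(V)$ directly by noting the uncapped buyer at prices $p$ still buys any item he strictly wanted, and pays \emph{at least} as much on non-saturated items. For $R_{\mathrm{unsat}}$: on the event that the $V'$-buyer buys no saturated item, his purchase set and payment at prices $p$ are feasible and identical for the $V$-buyer (same values on non-saturated coords, same budget), except the $V$-buyer might additionally afford a saturated item — which only weakly increases revenue since all $p_j \ge 0$... except the knapsack could reshuffle. To handle the reshuffling I would argue the $V$-buyer's optimal knapsack value is at least what he gets by mimicking the $V'$-buyer's choice, so $R_{\mathrm{unsat}} \le \SRevB(V)$.

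\textbf{Main obstacle.} The crux is controlling the buyer's knapsack best-response: adding or raising a coordinate can cause the buyer to drop other items, so revenue comparisons are not coordinate-monotone. The factor of $2$ in the statement is exactly the price paid to split into the saturated/unsaturated cases and bound each by a clean copy of $\SRevB(V)$ (or of $\BRevB(V)$, reabsorbed via Lemma~\ref{lem:opt-revvp}); making the saturated-case bookkeeping rigorous — in particular that every unit of revenue from a saturated purchase under $V'$ can be matched under $V$ without double-counting — is where the real care is needed.
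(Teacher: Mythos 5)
Your proof of the $\BRevB$ inequality is fine and matches the paper's (single posted price, coupling $v' = \min(v,\budget)$ with $v$, monotonicity of $\normone{\cdot}$). The issue is the $\SRevB$ inequality, where your argument has a genuine gap.

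You correctly identify the obstacle — the budgeted buyer solves a knapsack, so raising a coordinate can cause him to drop a more expensive item and pay \emph{less} — but your proposed resolution does not overcome it. In the unsaturated case you write that ``the $V$-buyer's optimal knapsack value is at least what he gets by mimicking the $V'$-buyer's choice, so $R_{\mathrm{unsat}} \le \SRevB(V)$.'' This conflates buyer utility with seller revenue: the $V$-buyer's knapsack optimum is indeed at least the utility of the mimicking bundle, but the bundle he actually selects may well have a \emph{lower} total price. Nothing in your argument rules this out, so $R_{\mathrm{unsat}} \le \SRevB(V)$ does not follow. The saturated case is in even worse shape: you end up routing $R_{\mathrm{sat}}$ through $\BRevB(V)$, which changes the statement being proved (Lemma~\ref{lem:vp-sbrev} asserts $\SRevB(V') \le 2\SRevB(V)$ on its own, with no $\BRevB$ term).

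The paper's resolution is different and cleaner: it does not split by saturated/unsaturated items at all. Instead, take the optimal item prices for $V'$ and \emph{cap every price at $\budget/2$}; run those capped prices against $V$. Now couple $v' \le v$. If the revenue from $v$ is at least $\budget/2$, you are already within a factor $2$ of anything $V'$ could pay (which is at most $\budget$). If the revenue from $v$ is below $\budget/2$, then because every price is at most $\budget/2$ the $v$-buyer has enough residual budget to afford any single remaining item, hence he has bought \emph{everything he weakly prefers to price} — so no knapsack reshuffling occurs, and since $v \ge v'$ coordinate-wise he buys (at the same, un-capped-in-this-case prices) a superset of what the $v'$-buyer bought. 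The $\budget/2$ cap is precisely what eliminates the reshuffling you flagged, and it is the missing ingredient in your proposal.
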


We defer the proofs of these lemmas to Sections~\ref{sec:opt-revvp},~\ref{sec:revvp-revbvp},~and~\ref{sec:vp-sbrev}, and first use them to prove Theorem~\ref{thm:main-informal}.

\begin{proof}[of Theorem~\ref{thm:main-informal}]
If $\BRevB(V') \ge \frac{b}{10}$, then the theorem holds because the optimal revenue $\RevB(V)$ is at most the budget $\budget$.
By Lemma~\ref{lem:vp-sbrev}, $\BRevB(V) \ge \BRevB(V') \ge \frac{b}{10} \ge \frac{\RevB(V)}{10}$.

We now assume $\BRevB(V') < \frac{b}{10}$. 
The theorem follows straightforwardly from Lemmas~\ref{lem:opt-revvp},~\ref{lem:revvp-revbvp},~\ref{lem:vp-sbrev}, and a black-box use of the main result of~\cite{BabaioffILW14}.
\begin{align*}
\RevB(V) & \le \Rev(V') + \BRevB(V) \tag{Lemma~\ref{lem:opt-revvp}} \\
  & \le 4\SRev(V') + 2\BRev(V') + 2 \BRevB(V) \tag{\cite{BabaioffILW14}} \\
  & \le 4\SRevB(V') + 22 \BRev^{\budget}(V') + 2 \BRevB(V). \tag{Lemma~\ref{lem:revvp-revbvp}} \\
  & \le 8\SRevB(V) + 24 \BRevB(V). \tag*{(Lemma~\ref{lem:vp-sbrev}) \qed}
\end{align*}
\end{proof}

\subsection{Proof of Lemma~\ref{lem:opt-revvp}}
\label{sec:opt-revvp}
We will prove the following lemma, which is a generalization of Lemma~\ref{lem:opt-revvp}.

\begin{lemma}
\label{lem:revbvsplit}
Fix $b > 0$ and $0 < \capsum \le \budget$.
For any distribution $\hat V$ with $\supp(\hat V) \subseteq \supp(V)$ and $\Prob{\hat V}{t} \ge \Prob{V}{t}$ for any $\normone{t} \le \capsum$,
  we have $\RevB(V) \le (\budget / \capsum) \cdot \BRevB(V) + \Rev(\hat V)$.
\end{lemma}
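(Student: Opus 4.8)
\textbf{Proof proposal for Lemma~\ref{lem:revbvsplit}.}

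The plan is to take an optimal budget-constrained mechanism $M^* = (\pi^*, p^*)$ for $V$ with budget $\budget$, and split its revenue into two parts according to whether the reported type $t$ has large or small $\ell_1$-norm, charging the ``large'' part to $\BRevB(V)$ and the ``small'' part to $\Rev(\hat V)$. Concretely, partition $\supp(V)$ into $H = \{t : \normone{t} > \capsum\}$ and $L = \{t : \normone{t} \le \capsum\}$, and write $\RevB(V) = \sum_{t \in H} f(t) p^*(t) + \sum_{t \in L} f(t) p^*(t)$.

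For the high-norm part, the key observation is that any type $t \in H$ is willing to pay $\capsum$ for the grand bundle: since the buyer is additive, his value for receiving all items is $\normone{t} > \capsum$, and $\capsum \le \budget$ so this is within budget. Hence posting the grand bundle at price $\capsum$ extracts at least $\capsum \cdot \Prob{v \sim V}{\normone{v} > \capsum}$, which gives $\sum_{t \in H} f(t) p^*(t) \le \budget \cdot \Prob{v \sim V}{\normone{v} > \capsum} \le (\budget/\capsum) \cdot \BRevB(V)$, using $p^*(t) \le \budget$ for the first inequality.

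For the low-norm part, I want to exhibit a (truthful, IR) mechanism for the buyer with valuation $\hat V$ and \emph{no} budget whose revenue is at least $\sum_{t \in L} f(t) p^*(t)$. The natural candidate is the restriction of $M^*$ to the type set $L$: define $\hat\pi(t) = \pi^*(t)$, $\hat p(t) = p^*(t)$ for $t \in L$, and $\hat\pi = \hat p = 0$ off $L$ (i.e. treat types outside $L$ as non-participation). This is IC/IR for a buyer whose type is always in $L$: the original IC constraints $u(t,t) \ge u(t,t')$ for $t, t' \in L$ survive, and reporting ``outside $L$'' is dominated by $\varnothing$ which was already an allowed deviation in LP~\eqref{lp:budget-reduceform}; the budget constraint is irrelevant since we dropped it. Its revenue under $\hat V$ is $\sum_{t \in L} \Prob{\hat V}{t}\, p^*(t) \ge \sum_{t \in L} \Prob{V}{t}\, p^*(t)$, where the inequality is exactly the hypothesis $\Prob{\hat V}{t} \ge \Prob{V}{t}$ on $L$ together with $p^*(t) \ge 0$ (which follows from IR at $\varnothing$: $u(t,t) \ge u(t,\varnothing) = 0$, and... actually one should double-check nonnegativity of payments — IR only gives $\pi^*(t)^\top t \ge p^*(t)$, not $p^*(t)\ge 0$; if some $p^*(t) < 0$ on $L$ we can just drop those types from $L$, i.e. set $\hat p(t)=0$ there, only increasing revenue). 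Combining the two bounds yields $\RevB(V) \le (\budget/\capsum)\BRevB(V) + \Rev(\hat V)$.

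The main obstacle I anticipate is the bookkeeping around the restricted mechanism: verifying that restricting $M^*$ to $L$ genuinely preserves incentive compatibility and individual rationality for the $\hat V$ buyer (handling the ``report a high type'' deviation correctly, and the sign of payments), and confirming that the hypothesis is applied only where it holds, namely on $L$. Everything else — the grand-bundle argument for $H$ and the final summation — is routine. Lemma~\ref{lem:opt-revvp} then follows by taking $\capsum = \budget$ and $\hat V = V'$, since capping each coordinate at $\budget$ only moves mass onto types with $\normone{v} \le m\budget$... more precisely, $V' = V_{|\norminf{v}\le b}$ is obtained by a coordinate-wise cap, and for any $t$ with $\normone{t} \le \budget$ we have $\norminf{t} \le \budget$ so no mass leaves such $t$ and possibly mass arrives, giving $\Prob{V'}{t} \ge \Prob{V}{t}$ there, which is the required condition with $\capsum = \budget$.
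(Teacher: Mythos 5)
Your decomposition of $\RevB(V)$ by $\ell_1$-norm and the charge of the high-norm part to $\BRevB(V)$ match the paper's proof exactly, and you correctly flag the need for $p^*\ge 0$. The gap is in how you witness the low-norm part. The mechanism you construct by zeroing out $M^*$ outside $L = \{t : \normone{t} \le \capsum\}$ need not be incentive compatible for a buyer drawn from $\hat V$: the lemma's hypotheses do not give $\supp(\hat V) \subseteq L$, and indeed in the intended application $\hat V = V' = V_{|\norminf{v}\le\budget}$ one can have $t \in \supp(V')$ with $\normone{t}$ close to $m\budget > \capsum = \budget$. For such a type $t$, your mechanism gives utility $0$ under truthful reporting, while misreporting some $t' \in L$ yields $\pi^*(t')^\top t - p^*(t')$, which can be strictly positive; the IC constraint you inherit from $M^*$ controls $u(t,t) - u(t,t')$, not $0 - u(t,t')$. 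So the restricted mechanism is not a legitimate lower-bound witness for $\Rev(\hat V)$ as defined.

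The fix is to not restrict at all, which is what the paper does: $M^*$ itself is already feasible for $\hat V$'s budget-free LP, since $\supp(\hat V) \subseteq \supp(V)$ means every IC/IR constraint of $\hat V$'s LP is among $V$'s constraints, and the budget constraint $p^*(t) \le \budget$ is harmless to drop. Its revenue on $\hat V$ is $\sum_{t \in \supp(\hat V)} \hat f(t) p^*(t)$; using $p^* \ge 0$ (WLOG, as you observed), drop the terms with $\normone{t} > \capsum$ and then apply the hypothesis $\hat f(t) \ge f(t)$ on $\{t : \normone{t} \le \capsum\}$ to get $\Rev(\hat V) \ge \sum_{\normone{t}\le\capsum} f(t) p^*(t)$, which is exactly what you need. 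The rest of your proposal, including the sanity check that $\hat V = V'$ satisfies the hypothesis with $\capsum = \budget$, is fine.
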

  
Lemma~\ref{lem:opt-revvp} follows immediately from Lemma~\ref{lem:revbvsplit} by choosing $c = \budget$ and $\hat V = V'$, because capping each coordinate at $c$ does not create new support, and does not decrease probability mass on any type $t$ whose sum is at most $c$.

Intuitively, Lemma~\ref{lem:revbvsplit} upper bounds the optimal revenue by splitting the buyer types $t$ into two sets:
  when $\normone{t} > c$, we upper bound the seller's revenue by the budget $b$;
  when $\normone{t} \le c$, we run the optimal mechanism for $\RevB(V)$.

\begin{proof}[of Lemma~\ref{lem:revbvsplit}]
Let $T$ and $\hat T$, $f$ and $\hat f$ denote the support and probability density function of $V$ and $\hat V$ respectively.
Let $M^* = (\pi^*, p^*)$ be the optimal mechanism that obtains $\RevB(V)$.
Recall that $\pi^*$ and $p^*$ are the allocation and payment rules, and $(\pi^*, p^*)$ is the optimal solution to LP~\eqref{lp:budget-reduceform} for $f$ and $T$.

We split the optimal revenue into two parts:
\[ \RevB(V) = \sum_{t \in T} f(t) p^*(t) = \sum_{\normone{t} > \capsum} f(t) p^*(t) + \sum_{\normone{t} \le \capsum} f(t) p^*(t). \]

Since $p^*(t) \le b$, the first term is at most $\budget \sum_{\normone{t} > \capsum} f(t) = b \cdot \Prob{}{\normone{v} > \capsum} \le (b/c)\BRevB(V)$, because we can sell the grand bundle at price $p = \capsum$.

The second term is at most $\Rev(\hat V)$, because $M^*$ is a feasible solution to the LP for $\hat T \subseteq T$.
In other words, $M^*$ satisfies the IC and IR constraints for $\hat V$.
The revenue of $\hat V$ is at least the revenue of $M^*$ on $\hat V$:
\[
\Rev(\hat V) \ge \sum_{t \in \hat T} \hat f(t) p^*(t) \ge \sum_{t \in T, \normone{t} \le \capsum} \hat f(t) p^*(t) \ge \sum_{t \in T, \normone{t} \le \capsum} f(t) p^*(t).
\]

Combining the upper bounds, we get
$\RevB(V) \leq  (\budget / \capsum) \BRevB(V) + \Rev(\hat V)$.
\end{proof}

\subsection{Proof of Lemma~\ref{lem:revvp-revbvp}}
\label{sec:revvp-revbvp}
Lemma~\ref{lem:revvp-revbvp} states that when the sum of $v' \sim V'$ is often small, the budget does not matter too much for $V'$.
Intuitively, because each coordinate of $v' \sim V'$ is independent and upper bounded by $b$, a concentration inequality implies that the sum has an exponentially decaying tail.
Therefore, the budget constraint is less critical because it is very unlikely that the buyer's value for the grand bundle is much larger than the budget.

We formalize this intuition by proving the following lemma, which is similar to standard tail bounds.
The main difference is that, instead of knowing the mean of $\normone{v'}$ is small, we only know that $\BRevB(V')$ is small.
\begin{lemma}
\label{lem:sum-vp-tail}
If $V'$ is independent and $\norminf{v'} \le c$ for all $v' \sim V'$, then
\[
\Prob{v' \sim V'}{\normone{v'} \ge x + y + c} \le \Prob{v' \sim V'}{\normone{v'} \ge x} \cdot \Prob{v' \sim V'}{\normone{v'} \ge y} \quad \text{for all } x, y > 0.
\]
In particular, if $\Prob{v' \sim V'}{\normone{v'} \ge c} \le q$, then for all integer $k \ge 0$,
\[
\Prob{v' \sim V'}{\normone{v'} \ge (2k+1) c} \le q^k \Prob{v' \sim V'}{\normone{v'} \ge c}.
\]
\end{lemma}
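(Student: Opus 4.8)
The plan is to establish the first (multiplicative) tail inequality by a union-bound-style decomposition over the coordinate at which the partial sums of $v'$ first exceed $x$, and then iterate it to get the second statement. Concretely, order the coordinates $1,\dots,m$ and let $\tau$ be the smallest index such that $v'_1 + \dots + v'_\tau \ge x$ (with $\tau = \infty$ if no such index exists). On the event $\{\normone{v'} \ge x + y + c\}$ we must have $\tau < \infty$, and because each coordinate is bounded by $c$ we have $v'_1 + \dots + v'_\tau < x + c$ (the partial sum overshoots $x$ by less than one increment). Hence the remaining coordinates $v'_{\tau+1} + \dots + v'_m$ must be at least $(x+y+c) - (x+c) = y$. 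Now condition on $\tau = i$: the event $\{\tau = i\}$ depends only on $v'_1,\dots,v'_i$, while the event $\{v'_{i+1} + \dots + v'_m \ge y\}$ depends only on $v'_{i+1},\dots,v'_m$, so by independence they factor, and $\Prob{}{v'_{i+1}+\dots+v'_m \ge y} \le \Prob{v'\sim V'}{\normone{v'} \ge y}$ (dropping a nonnegative partial sum on the left only increases the event). Summing over $i$ and using $\sum_i \Prob{}{\tau = i} = \Prob{}{\tau < \infty} = \Prob{}{\exists i:\ v'_1+\dots+v'_i \ge x} \le \Prob{v'\sim V'}{\normone{v'}\ge x}$ yields the claimed product bound.

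For the second statement, set $q$ so that $\Prob{v'\sim V'}{\normone{v'} \ge c} \le q$ and induct on $k$. The base case $k = 0$ is trivial. For the inductive step, apply the first inequality with $x = (2k-1)c$ and $y = c$ (so that $x + y + c = (2k+1)c$): we get
\[
\Prob{v'\sim V'}{\normone{v'} \ge (2k+1)c} \le \Prob{v'\sim V'}{\normone{v'} \ge (2k-1)c}\cdot \Prob{v'\sim V'}{\normone{v'}\ge c} \le q^{k-1}\Prob{v'\sim V'}{\normone{v'}\ge c}\cdot q,
\]
where the second factor is bounded by $q$ and the first by the inductive hypothesis; this gives $q^{k}\Prob{v'\sim V'}{\normone{v'}\ge c}$, completing the induction.

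The step I expect to require the most care is the decomposition via the first-passage index $\tau$: one must argue cleanly that (i) on the target event $\tau$ is finite, (ii) the overshoot is strictly less than $c$ so that at least $y$ mass remains in the tail coordinates, and (iii) the two events $\{\tau = i\}$ and $\{\sum_{j>i} v'_j \ge y\}$ genuinely depend on disjoint blocks of independent coordinates so that the factorization is valid. Everything else is a routine union bound and induction. A minor point to handle is the edge case where some coordinate individually already exceeds $x$ (then $\tau = 1$ and the partial sum before adding $v'_1$ is $0 < x$, so the overshoot bound $v'_1 < x + c$ still holds since $v'_1 \le c \le x + c$), and the degenerate cases $x \le 0$ or $y \le 0$, which are excluded by the hypothesis $x, y > 0$.
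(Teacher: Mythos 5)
Your proof is correct and takes essentially the same approach as the paper: both decompose over the first index at which the partial sums cross $x$, observe that the $\ell_\infty$ bound caps the overshoot by $c$ so the remaining coordinates must contribute at least $y$, and then use independence of the two blocks to factor the probability. The inductive step for the second claim is also identical up to the symmetric roles of $x$ and $y$.
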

%
%
We defer the proof of Lemma~\ref{lem:sum-vp-tail} to Appendix~\ref{apx:tail-bound}, and first use this tail bound to prove Lemma~\ref{lem:revvp-revbvp}.

\begin{proof}[of Lemma~\ref{lem:revvp-revbvp}]
Let $c = b$ and $q = \frac{1}{10}$.
We know that $\Prob{}{\normone{v'} \ge c} \le q$ from the assumption $\BRevB(V') \le \frac{\budget}{10}$.

First observe that
$
\BRev^\budget(V') = \max_{p \le \budget} \left(p \cdot \Prob{}{\normone{v'} \ge p}\right).
$
If we price the grand bundle at price $p$ where $(2k+1)c < p \le (2k+3) c$ for some $k \ge 0$,
  by Lemma~\ref{lem:sum-vp-tail}, the revenue is at most
\[
p \cdot \Prob{}{\normone{v'} \ge (2k+1)c} \le (2k+3)c \cdot q^{k} \Prob{}{\normone{v'} \geq c} \le (2k+3)q^{k}\BRevB(V').
\]
For $\SRev(V')$, similar to Lemma~\ref{lem:opt-revvp}, we can upper bound the revenue by allowing the seller to extract full revenue if $\normone{v'} > c$,
  and running the optimal budget-constrained mechanism when $\normone{v'} \le c$:
\begin{align*}
\SRev(V') & \le \SRev^{c}(V') + \expect{}{\normone{v'} \;\middle|\; \normone{v'} \ge c} \\
  & \le \SRev^{c}(V') + \sum_{k=1}^{\infty} (2k+3) c \cdot \Prob{}{\normone{v'} \ge (2k+1)c} \\
  & \le \SRev^{c}(V') + \sum_{k=0}^{\infty} (2k+3)q^{k} \BRev^{c}(V') \\
  & \le \SRev^{c}(V') + 4\BRev^{c}(V'). \tag*{\qed}
\end{align*}
\end{proof}

\subsection{Proof of Lemma~\ref{lem:vp-sbrev}}
\label{sec:vp-sbrev}

Lemma~\ref{lem:vp-sbrev} states that $\SRevB(V)$ and $\BRevB(V)$ are both (up to constant factors) monotone in $V$.
We prove a more general version of the lemma that does not require $V$ to be independent.
Recall that $\hat V \preceq V$ means $\hat V$ is coordinate-wise stochastically dominated by $V$.

\begin{lemma}
\label{lem:vhat-sbrev}
Fix $b > 0$ and $0 < \capsum \le \budget$.
For any distribution $\hat V \preceq V$, $\BRev^c(\hat V) \le \BRevB(V)$ and $\SRev^c(\hat V) \le \max\left(1, \frac{2\capsum}{\budget}\right)\SRevB(V)$.
\end{lemma}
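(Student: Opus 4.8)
The statement has two parts, and I would prove each by exhibiting a coupling. Throughout, let $f:\supp(V)\to\supp(\hat V)$ be the randomized map witnessing $\hat V\preceq V$, so that $f(v)\sim\hat V$ when $v\sim V$ and $f(v)\le v$ coordinate-wise almost surely. The guiding principle in both parts is: any mechanism designed for $\hat V$ can be ``pulled back'' to a mechanism for $V$ of comparable revenue, by having the $V$-buyer of type $v$ pretend to be the $\hat V$-buyer of type $f(v)$.

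\textbf{The bundle pricing bound.} For $\BRev^c(\hat V)\le\BRevB(V)$, let $p\le c$ be the optimal bundle price for $\hat V$, so $\BRev^c(\hat V)=p\cdot\Prob{\hat v\sim\hat V}{\normone{\hat v}\ge p}$. Since $f$ is coordinate-wise non-increasing, $\normone{f(v)}\le\normone{v}$, hence $\Prob{v\sim V}{\normone{v}\ge p}\ge\Prob{v\sim V}{\normone{f(v)}\ge p}=\Prob{\hat v\sim\hat V}{\normone{\hat v}\ge p}$. Because $p\le c\le b$, pricing the grand bundle at $p$ is a feasible budget-respecting mechanism for the $V$-buyer, so $\BRevB(V)\ge p\cdot\Prob{v\sim V}{\normone{v}\ge p}\ge\BRev^c(\hat V)$. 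This part is routine.

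\textbf{The separate pricing bound.} This is the part I expect to be the main obstacle, precisely because—as the overview flags—separate pricing is \emph{not} monotone under budgets: raising a buyer's values (or equivalently, the buyer we pull back to has a larger budget $b\ge c$) can change which subset of items he buys under a Knapsack computation, so the naive coupling argument that works in the budget-free case breaks. Let $(p_j)_{j\in[m]}$ be the optimal separate prices for $\hat V$ under budget $c$, with revenue $\SRev^c(\hat V)=\expect{\hat v\sim\hat V}{\normone{p_{S(\hat v)}}}$ where $S(\hat v)$ is the (revenue-maximizing tie-broken) optimal Knapsack bundle at budget $c$. When we feed type $v$ (with $f(v)=\hat v$) and budget $b$ to this same menu, the $V$-buyer's chosen set $S'(v)$ satisfies $v(S'(v))\ge v(S(\hat v))$ but may spend up to $b$ rather than $c$; the issue is that his payment $\normone{p_{S'(v)}}$ could in principle be \emph{smaller} than $\normone{p_{S(\hat v)}}$ even though his utility is larger. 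The fix, yielding the factor $\max(1,2c/b)$: if $b\ge 2c$ the factor is $1$ and I would argue directly that the $V$-buyer, facing a budget at least as large as $c$ and values at least as large as $\hat v$ coordinate-wise, can always afford and weakly prefers $S(\hat v)$, so by optimality his actual purchase pays at least $\normone{p_{S(\hat v)}}$—wait, this still needs care about the payment versus utility gap, so more honestly: I would split the optimal bundle $S(\hat v)$ by a threshold and use that the buyer spends nearly his full budget, or invoke the standard trick (as in~\cite{BhattacharyaCMX10}) that halving the prices and using budget $b\ge c$ recovers a constant fraction. For general $b\le c$, scale the prices by $b/c$: the menu $(\frac{b}{c}p_j)_j$ is budget-feasible for the $V$-buyer, every bundle affordable under $(p_j)$ at budget $c$ is affordable under the scaled menu at budget $b$, the coupling $f(v)\le v$ makes $v$ weakly prefer whatever $\hat v$ bought, and one checks the realized revenue is at least $\frac{b}{c}\SRev^c(\hat V)$, i.e.\ $\SRev^c(\hat V)\le\frac{c}{b}\SRevB(V)\le\max(1,\frac{2c}{b})\SRevB(V)$; combining the two regimes (and being slightly wasteful when $b$ is between $c$ and $2c$) gives the stated bound. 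Finally, Lemma~\ref{lem:vp-sbrev} is the special case $\hat V=V'$, $c=b$: capping coordinates at $b$ produces $V'\preceq V$ (the map $v\mapsto(\min(v_j,b))_j$ is coordinate-wise non-increasing and pushes $V$ to $V'$), and $2c/b=2$, so $\BRevB(V')\le\BRevB(V)$ and $\SRevB(V')\le 2\SRevB(V)$.
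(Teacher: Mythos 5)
Your \textsc{BRev} argument is correct and essentially matches the paper's: couple $\hat v = f(v) \le v$, so $\normone{\hat v} \le \normone{v}$, and any bundle price $p \le c \le b$ is budget-feasible for the $V$-buyer and sells at least as often. No issues there.

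The \textsc{SRev} part has a real gap. You correctly flag the central difficulty (the Knapsack non-monotonicity: higher values can lower the payment), and you correctly sense that something ``threshold-like'' is needed, but you never land on a concrete mechanism that closes the gap — and you say so yourself (``wait, this still needs care''). The two candidate fixes you float do not deliver the stated bound. Halving the prices gives away a factor of $2$ even in the regime $b \ge 2c$ (you only ``recover a constant fraction''), whereas the lemma claims factor $1$ there. Scaling the prices up by $b/c$ for the regime $c \le b < 2c$ makes every item strictly less attractive to the $V$-buyer; the coupling $\hat v \le v$ then says nothing about what the $V$-buyer chooses at the inflated prices, and the asserted ``one checks the realized revenue is at least $\frac{b}{c}\SRev^c(\hat V)$'' is exactly the unverified claim that the Knapsack non-monotonicity undermines. (You also briefly consider ``$b \le c$,'' which is vacuous here since the lemma assumes $c \le b$.)

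The missing idea is to run $\hat M$'s prices \emph{capped at $b/2$} against the $V$-buyer with budget $b$. Then there is a clean dichotomy: either the $V$-buyer's payment is at least $b/2$, which is already a $\frac{b}{2c}$-fraction of $\hat M$'s revenue since the latter never exceeds the budget $c$; or his payment is strictly below $b/2$, in which case every remaining item (price $\le b/2$) is still affordable, so the budget is slack, he buys every item he wants under the capped prices, and because $v \ge \hat v$ he buys a superset of $S(\hat v)$ at uncapped prices (any item with $p_j > b/2$ that he bought would have already pushed his payment to $b/2$, contradiction). Combining the two cases gives factor $\min\left(1, \frac{b}{2c}\right)$, i.e.\ $\SRev^c(\hat V) \le \max\left(1, \frac{2c}{b}\right)\SRevB(V)$. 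Your write-up circles this but does not state or verify the cap-at-$b/2$ construction, so as it stands the \textsc{SRev} half is not a proof.
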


Lemma~\ref{lem:vp-sbrev} follows directly from Lemma~\ref{lem:vhat-sbrev}, by choosing $c = b$ and $\hat V = V'$.

Intuitively, we would like to prove that $\SRevB(\hat V) \le \SRevB(V)$ for any $\hat V \preceq V$.
While this is true in the budget-free case (See Section~\ref{sec:prelim-preceq}), it is actually false in the presence of a budget.
We give a counterexample in Appendix~\ref{apx:srevb-not-monotone}.
Fortunately, we can prove $\SRevB(V') \le 2\SRevB(V)$.  The intuition is that we can cap the price of each item at $\budget/2$, then the buyer either spend at least $\budget/2$, or he will purchase everything he likes.

\begin{proof}[of Lemma~\ref{lem:vhat-sbrev}]
First consider $\BRev$. Because $\capsum \le \budget$ and $\hat V \preceq V$,
\[
\BRev^{\capsum}(\hat V) = \max_{p \le \capsum} \left(p\cdot \Prob{\hat v \sim \hat V}{\normone{\hat v} \ge \capsum}\right) \le \max_{p \le \budget} \left(p\cdot \Prob{v \sim V}{\normone{v} \ge \budget}\right) = \BRevB(V).
\]

For $\SRev$,
  let $\hat M$ be the optimal mechanism that achieves $\SRev^c(\hat V)$ by pricing each item separately.
We construct a mechanism $M$ to mimic $\hat M$ except the prices are capped at $\budget / 2$.
Consider applying $M$ to a buyer with valuation drawn from $V$ and a budget $\budget$.
As $\hat V \preceq V$, we can couple the realizations $\hat v \sim \hat V$ and $v \sim V$ such that $\hat v \leq v$ (coordinate-wise).
For every $(\hat v, v)$ pair:
\begin{itemize}
\item If $M$ gets a revenue of at least $\frac{\budget}{2}$ on $v$. This is at least $\frac{\budget}{2\capsum}$-fraction of the revenue $\hat M$ gets on $\hat v$, because the latter is at most $\capsum$.
\item If $M$ gets a revenue less than $\frac{\budget}{2}$ on $v$, then the buyer has enough budget left to buy any item. 
Therefore, the buyer can buy everything he wants. Because $\hat v \leq v$, the revenue of $M$ on $v$ is at least that of $\hat M$ on $\hat v$.
\end{itemize}
Thus, $M$ can get at least $\min(1, \frac{\budget}{2\capsum})$-fraction of the revenue that $\hat M$ gets on $\hat v$, which implies $\SRev(\hat V) \le \max\left(1, \frac{2\capsum}{\budget}\right) \SRevB(V)$.
\qed
\end{proof}


\section{Public Budget and Weakly Correlated Valuations}
\label{sec:overview}
In this section, we present an alternative approach to prove our main result (Theorem~\ref{thm:main-informal}).
Recall that the buyer has a public budget $b$ and valuations drawn from an independent distribution $V$.

\thmskip
{\noindent \bf Theorem~\ref{thm:main-informal}.~}
{\em 
$\RevB(V) \le 5 \SRevB(V) + 6 \BRevB(V)$.
}

\paragraph{\bf Overview of Our Approach.}
We will truncate the input distribution $V$ in a different way: instead of truncating $v \sim V$ in $\ell_\infty$-norm (as in Section~\ref{sec:proof1}), we will truncate $v$ in $\ell_1$-norm.
This truncation produces a correlated distribution $\hat V$.
The upshot of truncating in $\ell_1$-norm is that we always have $\normone{\hat v} \le \budget$, so $\hat V$ can ignore the budget. 
In addition, as in Section~\ref{sec:proof1}, we can relate the optimal revenue to the revenue of $\hat V$ (Lemma~\ref{lem:revbvsplit}), and we can relate the revenue of simple mechanisms on $\hat V$ back to revenue of simple mechanisms on $V$ (Lemma~\ref{lem:vhat-sbrev}).

We still need to argue that simple mechanisms work well for $\hat V$.
This is the main challenge in this approach.
Because $\hat V$ is correlated, we cannot apply the result of~\cite{BabaioffILW14} in a black-box way.
Instead, we need to modify the analysis of previous work~\cite{LiY13,BabaioffILW14,CaiDW16} and build on the key ideas like ``core-tail''  decomposition.
More specifically, we generalize the duality-based framework developed in~\cite{CaiDW16} to handle the specific type of correlation $\hat V$ has.

%

\paragraph{\bf Weakly Correlated Valuations.}
It is worth mentioning that our structural lemmas (Lemmas~\ref{lem:revbvsplit}~and~\ref{lem:vhat-sbrev}) 
  do not require the input distribution to be independent.
This is why our techniques can be applied to more general settings.
For example, in this section, we will generalize Theorem~\ref{thm:main-informal} with minimum effort to handle weakly correlated valuations (see Definition~\ref{def:weakly} for the formal definition).

\thmskip
{\noindent \bf Corollary~\ref{cor:weakly}.~}
{\em
Let $\hat V$ be a weakly correlated distribution (Definition~\ref{def:weakly}).
We have $\RevB(\hat V) \le 5 \SRevB(\hat V) + 6 \BRevB(\hat V)$.
}
\thmskip

Our main contribution in this section is Lemma~\ref{lem:hatv-simple}.
Lemma~\ref{lem:hatv-simple} shows that for any weakly correlated distribution $\hat V$ (see Definition~\ref{def:weakly}), the better of $\SRev(\hat V)$ and $\BRev(\hat V)$ is a constant approximation to the optimal revenue $\Rev(\hat V)$.

%

\begin{lemma}
\label{lem:hatv-simple}
Fix $c > 0$. Let $\hat V = V_{|(\normone{v} \le \capsum)}$ for an independent distribution $V$.
We have $\Rev(\hat V) \le 5 \SRev(\hat V) + 4\BRev(\hat V)$.
\end{lemma}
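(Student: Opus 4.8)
The plan is to adapt the duality-based ``core-tail'' framework of~\cite{CaiDW16} to the weakly correlated distribution $\hat V = V_{|(\normone v \le \capsum)}$. The central point is that although $\hat V$ is correlated, it is obtained from the independent $V$ by conditioning on a \emph{downward-closed} event $\{\normone v \le \capsum\}$; this event only \emph{decreases} each coordinate's mass relative to $V$ in a controlled, product-like way, and in particular conditioning on $\normone v \le \capsum$ and then further on the realized values of all-but-one coordinate leaves a truncation of the independent marginal $V_j$. So the Lagrangian/virtual-value machinery from~\cite{CaiDW16}, which sets up a dual with a flow per item and decomposes the type space into a ``core'' (all coordinates small relative to their own threshold) and a ``tail'' (some coordinate large), should go through with the bookkeeping that the relevant per-coordinate marginals are now \emph{conditioned} marginals rather than the original ones.

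Concretely, the steps I would carry out are: (1) Fix per-item thresholds $r_j$ (as in~\cite{CaiDW16}, chosen so that $\Prob{\hat v\sim\hat V}{\hat v_j \ge r_j}$ is a suitable constant fraction of $\SRev(\hat V)$ divided by $m$, or the analogous normalization), and split each type $\hat v$ according to which coordinates exceed their threshold. (2) Bound the \emph{tail} contribution $\mathsc{Tail}$: the revenue extracted from types with at least one large coordinate is charged to $\SRev(\hat V)$, exactly as in the budget-free independent case, since the single-coordinate marginal argument is unaffected by the conditioning (we just use that $\Prob{\hat V}{\hat v_j \ge r_j}$ is what appears in the separate-selling revenue). (3) Bound the \emph{core} contribution $\mathsc{Core}$: this is where~\cite{CaiDW16}'s duality gives a bound in terms of $\BRev$ plus lower-order terms; here I must re-verify that the key concentration/variance estimate for $\sum_j \hat v_j \indic{\hat v_j < r_j}$ still holds. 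This is the crucial place where correlation could bite — but because conditioning on $\normone v \le \capsum$ is a conditioning on a monotone event, $\hat V$ is \emph{negatively associated} (or at least the truncated coordinates remain ``sub-independent'' enough that $\var{\sum_j \hat v_j \indic{\hat v_j<r_j}} \le \sum_j \var{v_j\indic{v_j<r_j}}$-type bounds survive, possibly up to the normalization factor $1/\Prob{V}{\normone v \le \capsum}$, which is itself controlled because otherwise $\BRev(\hat V)$ or $\SRev(\hat V)$ would already be large). (4) Combine the core and tail bounds and track constants to land at $\Rev(\hat V) \le 5\SRev(\hat V) + 4\BRev(\hat V)$.

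The main obstacle I expect is step (3): making the core bound work despite correlation. In the independent case one uses that the marginal of $\hat v_j$ is a clean product factor, so both the duality virtual-value decomposition and the second-moment (or Cauchy–Schwarz / subadditivity of variance) arguments apply termwise. Under the cap on $\normone v$, the truncated quantities $\hat v_j\indic{\hat v_j<r_j}$ are no longer independent, so I need to show they are \emph{negatively correlated} — which should follow because conditioning a product measure on a coordinate-monotone event like $\{\normone v\le\capsum\}$ preserves negative association (this is a standard fact about conditioning on monotone events). With negative association in hand, $\cov(\hat v_i\indic{\hat v_i<r_i},\,\hat v_j\indic{\hat v_j<r_j})\le 0$, so $\var{\sum_j \hat v_j\indic{\hat v_j<r_j}}\le\sum_j\var{\hat v_j\indic{\hat v_j<r_j}}$, and the rest of the~\cite{CaiDW16} core estimate goes through. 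A secondary subtlety is the normalization constant $Z := \Prob{v\sim V}{\normone v\le\capsum}$: every $\hat V$-probability is a $V$-probability divided by $Z$, so bounds stated for $V$ pick up a factor $1/Z$; but when $Z$ is small, $\normone v$ is large with good probability, which forces $\BRev(\hat V)$ to be a constant fraction of $\capsum$ and hence the inequality holds trivially (as in the $\BRevB(V')\ge b/10$ branch of the Section~\ref{sec:proof1} proof). I would handle this by a case split on whether $Z$ is bounded below by an absolute constant, doing the duality argument in the main case and a direct argument in the degenerate case.
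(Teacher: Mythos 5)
You take essentially the same approach as the paper: adapt the duality/``core-tail'' framework of~\cite{CaiDW16} to $\hat V$, observe that conditioning the product measure $V$ on the monotone event $\{\normone{v}\le c\}$ yields negative correlation among (truncated) coordinates, and use that to recover the variance/Chebyshev step in the core bound. Your intuition that this is the crux is correct, and the paper proves exactly the covariance bound you want in Lemma~\ref{clm:vhat-varc}.

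However, your sketch describes a two-way core/tail split and omits the third piece of the decomposition, which is where the correlation actually bites hardest. The paper splits the dual upper bound into $\SINGLE$, $\CORE$, and $\TAIL$, where $\SINGLE=\sum_t f(t)\sum_j \pi_j(t)\,\vVALt(t_j\mid t_{-j})\indic{t\in R_j}$ involves the \emph{conditional} Myerson virtual value $\vVALt(t_j\mid t_{-j})$ — conditional on the entire remainder $t_{-j}$, because the canonical-flow dual routes flow within $R_j$ along lines with fixed $t_{-j}$. For an independent distribution this is just $\vVALt_j(t_j)$; for $\hat V$ it is not, and a priori $\sum_{t_{-j}}f(t_{-j})\Rev(t_j\mid t_{-j})$ could be much larger than $\SRev(\hat V)$. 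The paper's Lemma~\ref{lem:vhat-single} handles this by exploiting the specific structure of $\hat V$: conditioning on $t_{-j}$ only affects the marginal of $t_j$ through $\normone{t_{-j}}$, so when $\normone{t_{-j}}<\capsum/2$ and $t_j<\capsum/2$ the conditional marginal equals the unconditional one (charge to $\SRev$), and otherwise $\normone{t}\ge\capsum/2$ (charge to $\BRev$ via a grand-bundle price of $\capsum/2$). Your proposal does not address this term at all, and it does not follow from negative association; it needs its own argument.

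Two smaller points. Your claim that the $\TAIL$ bound is ``unaffected by the conditioning'' is not quite right: in the independent case $f(t_{-j}\mid t_j)=f(t_{-j})$, but here the paper needs that $\hat V_{-j\mid t_j}$ is stochastically decreasing in $t_j$ (Lemma~\ref{CLM:C1-PREC-C2}), together with $\Prob{}{\hat v_j<2r}\ge 1/2$, to control $f(t_{-j}\mid t^*_j)$ for large $t^*_j$ up to a factor $2$. And the case split on $Z=\Prob{v\sim V}{\normone{v}\le\capsum}$ is unnecessary: the paper's argument is carried out entirely in terms of $\hat V$-probabilities, so $Z$ never appears; you only pick up a $1/Z$ factor if you try to relate back to $V$, which the proof avoids.
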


We defer the proof of Lemma~\ref{lem:hatv-simple} to Appendix~\ref{sec:vhat}.
We first use these lemmas to prove Theorem~\ref{thm:main-informal} and Corollary~\ref{cor:weakly}.

\begin{proof}[of Theorem~\ref{thm:main-informal} and Corollary~\ref{cor:weakly}]
If $\min_{v \sim V} \normone{v} \ge \budget / 2$, then the seller can price the grand bundle at $\budget / 2$ and the buyer always buys it.
In this case, the revenue is $\budget/2$ and $\RevB(V) \le \budget \le 2 \BRevB(V)$.
Thus, we focus on the more interesting case where $\Prob{v \sim V}{\normone{v} \leq \budget / 2} > 0$.~\footnote{Throughout the paper, when we consider the conditional distribution $\hat V := V_{|(\normone{v} \le \capsum)}$, we will always have $\capsum > \min_{v \in \supp(V)} \normone{v}$, so that the event we condition on happens with non-zero probability.}

Let $\hat V := V_{|(\normone{v} \leq \capsum)}$ for $\capsum = \budget / 2$.
We will reuse Lemmas~\ref{lem:revbvsplit}~and~\ref{lem:vhat-sbrev} from Section~\ref{sec:proof1}.
We can reuse both lemmas because they do not require $V$ or $\hat V$ to be independent, $\hat V$ does not modify the small-sum part of $V$, and $\hat V \preceq V$ (which we will prove as Lemma~\ref{CLM:HATV-PREC-V} in Appendix~\ref{apx:hatv-preceq-v}).
\begin{align*}
\RevB(V) & \le (\budget / \capsum) \cdot \BRevB(V) + \Rev(\hat V) \tag{Lemma~\ref{lem:revbvsplit}}\\
  & \le 2 \BRevB(V) + 5 \SRev(\hat V) + 4\BRev(\hat V) \tag{Lemma~\ref{lem:hatv-simple}} \\
  & = 2 \BRevB(V) + 5 \SRev^\capsum(\hat V) + 4\BRev^\capsum(\hat V) \tag{$\normone{\hat v} \le \capsum$} \\
  & \le 2 \BRevB(V) + 5 \SRevB(V) + 4\BRevB(V). \tag{Lemma~\ref{lem:vhat-sbrev}}
\end{align*}

We now prove Corollary~\ref{cor:weakly}.
Intuitively, Corollary~\ref{cor:weakly} holds because simple mechanisms work well for weakly correlated valuations, and the 
  the weakly-correlated notion is closed under further capping the sum.
  
Let $V = V_{|(\normone{v} \le \capsum_2)}$ be the input distribution.
If $c_2 \le b$, then we can remove the budget constraint and apply Lemma~\ref{lem:hatv-simple} directly.
If $c_2 > b$, then we can cap $V$ at $c_1 = b/2$ to obtain a weakly correlated distribution $\hat V$.
One can verify that Lemmas~\ref{lem:revbvsplit}~and~\ref{lem:vhat-sbrev} still hold for $V$ and $\hat V$, and Lemma~\ref{lem:hatv-simple} holds for $\hat V$.
The only difference is that we need to show $V_{|(\normone{v} \le \capsum_1)} \preceq V_{|(\normone{v} \le \capsum_2)}$ for $\capsum_1 \le \capsum_2$.
We will prove this (Lemma~\ref{CLM:C1-PREC-C2}) in Appendix~\ref{apx:hatv-preceq-v}.
\qed
\end{proof}

\section{Private Budget}
\label{sec:private}
In this section, we consider the case where the budget $b$ is no longer fixed but instead drawn from a distribution $B$.
One natural model is that the buyer's budget $b$ is first drawn from $B$, and then depending on the value of $b$, the buyer's valuations are drawn independently for each item.

We show that in this case, the problem is at least as hard as finding (approximately) optimal mechanisms for correlated valuations in the budget-free setting.
Consider an instance in which all possible budgets are larger than $\max_{v \sim V} \normone{v}$ so they are irrelevant.
However, the budget can still be used as a signal (or a correlation device) to produce correlated valuations.
It is known that for correlated distributions, the better of selling separately and bundling together~\cite{HartN12}, or even the best partition-based mechanism~\cite{BabaioffILW14}, does not offer a constant approximation.

This negative result motivates us to study the private budget setting when the budget distribution $B$ is independent of the valuation distributions $V$.

%

\subsection{Monotone-Hazard-Rate Budgets}
We focus on the case where the budget is independent of valuations, and it is drawn from a continuous\footnote{If the distribution is a discrete MHR distribution, 
  similar results still hold. For discrete distributions we have $\Pr_{b \sim B}[b \geq \lfloor b^* \rfloor] \geq e^{-1}$ instead of $\Pr_{b \sim B}[b \geq b^*] \geq e^{-1}$.}
  monotone-hazard-rate (MHR) distribution. Let $g(\cdot)$ and $G(\cdot)$ be the probability density function and cumulative distribution function of $B$. The MHR condition says $\frac{g(b)}{1 - G(b)}$ is non-decreasing in $b$.

\begin{lemma}
\label{lem:private_MHR}
Let $b^*$ be the expectation of an MHR distribution $B$.
Let $M^*$ be the optimal mechanism for a buyer with a public budget $b^*$.
Then in expectation, $M^*$ extracts at least $\frac{1}{2e}$-fraction of the expected optimal revenue when the buyer has a private budget drawn from $B$.
\end{lemma}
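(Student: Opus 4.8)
\textbf{Proof plan for Lemma~\ref{lem:private_MHR}.}

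The plan is to compare the revenue of $M^*$ under the private-budget instance against the expected optimal revenue $\expect{b \sim B}{\RevB(V)}$, using two monotonicity facts about $\RevB(V)$ as a function of the public budget $b$: (i) $\RevB(V)$ is nondecreasing in $b$, and (ii) $\RevB(V)/b$ is nonincreasing in $b$. Both facts should follow from straightforward scaling/feasibility arguments on LP~\eqref{lp:budget-reduceform}: enlarging the budget only relaxes the constraints $p(t) \le b$ (giving (i)), while for (ii) one takes the optimal mechanism for budget $b_2 > b_1$ and scales both allocation and payment by $b_1/b_2$ to obtain a feasible mechanism for budget $b_1$ with revenue $(b_1/b_2)\RevB[b_2](V)$, hence $\RevB[b_1](V)/b_1 \ge \RevB[b_2](V)/b_2$.

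The second ingredient is the MHR tail bound at the mean: for a continuous MHR distribution $B$ with mean $b^*$, we have $\Prob{b \sim B}{b \ge b^*} \ge e^{-1}$. This is a standard fact; I would either cite it or include the one-line proof via $1 - G(b^*) = \exp\!\left(-\int_0^{b^*} h(s)\,ds\right) \ge \exp\!\left(-b^* h(b^*)\right)$ combined with the inequality $b^* \le 1/h(b^*)$ that holds for MHR distributions (which itself follows from $\expect{}{b} = \int_0^\infty (1-G) \le \int_0^\infty e^{-h(b^*) s}\,ds$ after lower-bounding the hazard rate by $h(b^*)$ on $[0,\infty)$ — more carefully, one uses that $b \mapsto e^{-h(b^*)b}$ and $1-G(b)$ cross exactly once).

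Now the main estimate. Split the expected optimal revenue into the contributions from $b < b^*$ and $b \ge b^*$. For the low-budget part, monotonicity fact (ii) gives $\RevB(V) \le (b/b^*)\,\RevB[b^*](V)$, so $\expect{b \sim B}{\RevB(V)\cdot\indic{b < b^*}} \le \RevB[b^*](V) \cdot \expect{b}{(b/b^*)\indic{b<b^*}} \le \RevB[b^*](V)$. For the high-budget part, I would like to say the seller running $M^*$ still collects $\RevB[b^*](V)$ whenever $b \ge b^*$ — this needs a small argument that $M^*$, designed for budget $b^*$, remains IC/IR/budget-feasible for a buyer with a larger budget (it does: all payments are at most $b^* \le b$, and the buyer's incentives are unchanged since no report ever hits the budget cap), while fact (i) says we are only losing by comparing against $\RevB[b^*](V)$ rather than $\RevB(V)$ — but that loss is exactly what we cannot afford on the high side. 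The correct accounting is: the expected revenue of $M^*$ on the private-budget instance is at least $\Prob{b \ge b^*}\cdot\RevB[b^*](V) \ge e^{-1}\RevB[b^*](V)$, whereas the expected optimal revenue is $\expect{b}{\RevB(V)} = \expect{b}{\RevB(V)\indic{b<b^*}} + \expect{b}{\RevB(V)\indic{b \ge b^*}} \le \RevB[b^*](V) + \RevB[b^*](V) = 2\RevB[b^*](V)$, where the second term is bounded using fact (i) together with $\RevB(V) \le b$ and... no — here I must instead bound $\expect{b}{\RevB(V)\indic{b \ge b^*}}$ by $\RevB[b^*](V)\cdot\expect{b}{(b/b^*)} $ via fact (ii) again, which is $\RevB[b^*](V)\cdot(b^*/b^*) = \RevB[b^*](V)$ only after noting $\expect{b}{b} = b^*$ so $\expect{b}{(b/b^*)\indic{b\ge b^*}} \le 1$. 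Combining, $\expect{b}{\RevB(V)} \le 2\RevB[b^*](V)$, and dividing gives the ratio $\frac{1}{2e}$.

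\textbf{Main obstacle.} The delicate point is fact (ii), the sub-linearity $\RevB(V)/b$ nonincreasing, and making the scaling argument fully rigorous for interim (rather than ex-post) mechanisms — one must check that scaling $\pi$ down keeps $\pi_j(t) \in [0,1]$ (it does, since we scale by a factor $< 1$) and that the IC/IR constraints, which are linear and homogeneous in $(\pi,p)$, are preserved under uniform scaling (they are, because the option $\varnothing$ scales to itself). Beyond that, verifying that $M^*$ stays valid when the buyer's realized budget exceeds $b^*$ is routine but needs to be stated. The rest is bookkeeping with the MHR tail bound at the mean.
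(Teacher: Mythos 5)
Your overall decomposition and bookkeeping match the paper's proof almost exactly: split the private-budget optimum at $b^*$, use the MHR tail bound $\Prob{b\sim B}{b\ge b^*}\ge e^{-1}$ on the $M^*$ side, and bound $\Rev^B(V)\le 2\Rev^{b^*}(V)$ using the two monotonicity facts about $\Rev^b(V)$. One slip: for the low-budget part you invoke fact (ii) to write $\Rev^b(V)\le (b/b^*)\Rev^{b^*}(V)$ for $b<b^*$, but fact (ii) ($\Rev^b/b$ nonincreasing) gives the \emph{reverse} inequality $\Rev^b(V)\ge (b/b^*)\Rev^{b^*}(V)$ when $b<b^*$ — it is a lower bound there, not an upper bound. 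The correct tool for $b<b^*$ is fact (i) ($\Rev^b(V)$ nondecreasing), giving $\Rev^b(V)\le\Rev^{b^*}(V)$ and hence $\expect{b}{\Rev^b(V)\indic{b<b^*}}\le\Rev^{b^*}(V)\Prob{}{b<b^*}\le\Rev^{b^*}(V)$, which is exactly what the paper does. Your final arithmetic is unaffected because the conclusion you reach for the low-budget term is still correct, just not via the fact you cite. One genuine point of difference worth keeping: you establish fact (ii) by uniformly scaling $(\pi,p)$ by $b_1/b_2$, which is a clean LP-feasibility argument and arguably more robust than the paper's proof (which caps menu prices at $b^*$ and reasons about how the buyer's choice changes, implicitly relying on tie-breaking in the seller's favor). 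Your scaling argument preserves feasibility, IC, and IR verbatim and multiplies revenue by exactly $b_1/b_2$, so it directly yields $\Rev^{b_1}(V)\ge (b_1/b_2)\Rev^{b_2}(V)$ without case analysis.
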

\begin{proof}
Let $R(b, V)$ denote the expected revenue of $M^*$ when the buyer has a public budget $b$ and valuations drawn from $V$. 
Let $R(B, V) = \expect{b \sim B}{R(b, V)}$ denote the expected revenue of $M^*$ when the buyer's budget is drawn from $B$.
\begin{align*}
R(B, V) &= \int_b g(b) R(b,V) \mathrm{d} b
  \ge \int_{b \geq b^*} g(b) R(b, V) \mathrm{d} b \\
  &= \int_{b \geq b^*} g(b) R(b^*, V) \mathrm{d} b \ge e^{-1} \cdot R(b^*, V).
\end{align*}
The second last step uses $R(b, V) = R(b^*, V)$ when $b \geq b^*$, because $M^*$ provides a menu of allocation/payment pairs for the buyer to choose from;
A buyer with budget $b \ge b^*$ can afford any option on the menu so he will choose the same option as if he had budget $b^*$.
The last inequality comes from the fact that for any MHR distribution $B$, $\Pr_{b \sim B}[b \geq b^*] \geq e^{-1}$ (see, e.g.,~\cite{barlow1965tables}).

Let $\Rev^B(V)$ denote the optimal revenue we can extract when the buyer has private budgets drawn from $B$.
\begin{align*}
\Rev^B(V) &\leq 
  \int_{b < b^*} g(b) \Rev^{b}(V) \mathrm{d} b + \int_{b \geq b^*} g(b) \Rev^{b}(V) \mathrm{d} b\\
&\leq \int_{b < b^*} g(b) \Rev^{b^*}(V) \mathrm{d} b + \int_{b \geq b^*} g(b) \cdot \frac{b}{b^*} \cdot \Rev^{b^*}(V) \mathrm{d} b\\
&\leq \Rev^{b^*}(V) + \frac{\int_{b} g(b) b \mathrm{d} b}{b^*} \cdot \Rev^{b^*}(V)
  = 2\Rev^{b^*}(V).
\end{align*}
The first line is because the seller can only do better if she knows the buyer's budget $b$.
The third line is because $b^* = \expect{}{b}$.
The second line uses the fact that $\Rev^b(V) \le \Rev^{b^*}(V)$ when $b < b^*$ and $\Rev^b(V) \le \frac{b}{b^*} \Rev^{b^*}(V)$ when $b > b^*$.

We have $\Rev^b(V) \le \Rev^{b^*}(V)$ when $b < b^*$ because a buyer with budget $b^*$ can afford all options from the menu that achieves $\Rev^b(V)$.
When $b > b^*$, consider the menu that achieves $\Rev^b(V)$ and cap all prices at $b^*$.
A buyer with budget $b > b^*$ either chooses the same option as if he had budget $b^*$, or chooses a different option whose price must be $b^*$, and therefore $\Rev^b(V) \le \frac{b}{b^*} \Rev^{b^*}(V)$.

By definition $R(b^*, V) = \Rev^{b^*}(V)$. Therefore, $R(B, V) \geq \frac{1}{2e} \Rev^{B}(V)$.
\qed
\end{proof}

{\noindent \bf Theorem~\ref{thm:private-mhr}.~}
{\em
When the budget distribution $B$ is MHR, the better of pricing items separately and bundling them together achieves a constant fraction of the optimal revenue.
}
\begin{proof}
By pretending the budget is $b^*$,
\begin{align*}
\SRev^B(V) & \geq \int_{b \geq b^*} g(b) \SRev^{b^*}(V) \mathrm{d} b
  \geq \frac{1}{e} \SRev^{b^*}(V).
\end{align*}
Similarly, $\BRev^B(V) \geq \frac{1}{e} \BRev^{b^*}(V)$.
Therefore, by Theorem~\ref{thm:main-informal} and Lemma~\ref{lem:private_MHR}, $\SRev^B(V) + \BRev^B(V) = \Omega(\SRev^{b^*}(V) + \BRev^{b^*}(V)) = \Omega(\Rev^{b^*}(V)) = \Omega(\Rev^B(V))$.
\qed
\end{proof}


\section{Conclusion and Future Directions}
\label{sec:conclusion}
In this paper, we investigated the effectiveness of simple mechanisms in the presence of budgets, and showed that for an additive buyer with independent valuations and a public budget, either selling separately or selling the grand bundle gives a constant approximation to optimal revenue.

The area of designing simple and approximately optimal auctions with budget constraints is still largely unexplored.
Our work leaves many natural follow-up questions.
We only considered selling to a single buyer.
An immediate open question is whether our results can be extended to multiple bidders.
A generalization to multiple bidders is known in the budget-free case~\cite{Yao15,CaiDW16}.

\thmskip
{\noindent \bf Question 1.~}
{\em
Is there a simple mechanism that is approximately optimal for multiple additive buyers, when each buyer has the same public budget $b$?
}
\thmskip

For private budgets where the budget is independent of the valuations, we showed that if the budget distribution satisfies monotone hazard rate, then we can extract a constant fraction of the revenue.
The general case with arbitrary budget distributions appears to be nontrivial and is an interesting avenue for future work.

\thmskip
{\noindent \bf Question 2.~}
{\em
Is there a simple mechanism that is approximately optimal for an additive buyer with private budgets, when the budget distribution is independent of the valuations?
}

%


\paragraph{\bf Acknowledgements.}
Yu Cheng is supported by NSF grants CCF-1527084, CCF-1535972, CCF-1637397, CCF-1704656, IIS-1447554, and NSF CAREER Award CCF-1750140. Kamesh Munagala is supported by NSF grants CCF-1408784, CCF-1637397, and IIS-1447554; and by an Adobe Data Science Research Award. Kangning Wang is supported by NSF grants CCF-1408784 and CCF-1637397.


\bibliographystyle{acm}
\bibliography{names,conferences,main}

\newpage
\appendix
\section{Proof of the Concentration Lemma in Section~\ref{sec:revvp-revbvp}}
\label{apx:tail-bound}
In this section, we prove Lemma~\ref{lem:sum-vp-tail}.
We first restate it for convenience.

\thmskip
{\noindent \bf Lemma~\ref{lem:sum-vp-tail}.~}
{\em
If $V'$ is independent and $\norminf{v'} \le c$ for all $v' \sim V'$, then
\[
\Prob{v' \sim V'}{\normone{v'} \ge x + y + c} \le \Prob{v' \sim V'}{\normone{v'} \ge x} \cdot \Prob{v' \sim V'}{\normone{v'} \ge y} \quad \text{for all } x, y > 0.
\]
In particular, if $\Prob{v' \sim V'}{\normone{v'} \ge c} \le q$, then for all integer $k \ge 0$,
\[
\Prob{v' \sim V'}{\normone{v'} \ge (2k+1) c} \le q^k \Prob{v' \sim V'}{\normone{v'} \ge c}.
\]
}
\begin{proof}
Consider the probability of $\normone{v'} \ge x + y + c$ conditioned on $\normone{v'} \ge x$.
We will show this probability is at most the probability of $\normone{v'} \ge y$.

For every $v'$ with $\normone{v'} \ge x$, there is a unique $j \in [m]$ where $\sum_{i=1}^{j-1} v'_i < x$ but $\sum_{i=1}^j v'_i \ge x$.
Now $v'_j$ is at most $c$, so for the total sum to be at least $x+y+c$, the remaining sum $\sum_{i=j+1}^m v'_i$ must be at least $y$.
Due to the independence of $V'$, this probability is the same conditioned on any values of $(v'_1, \ldots, v'_j)$.
Formally,
\begin{align*}
\Prob{}{\normone{v'} \ge x + y + c} & = \sum_{j=1}^m \Prob{}{\normone{v'} \ge x + y + c \;\wedge\; \sum_{i=1}^{j-1} v'_i < x \;\wedge\; \sum_{i=1}^j v'_i \ge x} \\
& \le \sum_{j=1}^m \Prob{}{\sum_{i=j+1}^m v'_i \ge y \;\wedge\; \sum_{i=1}^{j-1} v'_i < x \;\wedge\; \sum_{i=1}^j v'_i \ge x} \\
& = \sum_{j=1}^m \Prob{}{\sum_{i=j+1}^m v'_i \ge y} \cdot \Prob{}{\sum_{i=1}^{j-1} v'_i < x \;\wedge\; \sum_{i=1}^j v'_i \ge x} \\
& = \Prob{}{\normone{v'} \ge y} \cdot \Prob{}{\normone{v'} \ge x}.
\end{align*}

The second statement can be proved inductively using the first statement.
The inductive step chooses $x = c$ and $y = (2k-1) c$.
\qed
\end{proof}

\section{Revenue Non-monotonicity for Separate Selling to a Budgeted Buyer}
\label{apx:srevb-not-monotone}
We provide an example where $V_1 \preceq V_2$ but $\SRevB(V_1) > \SRevB(V_2)$.
Intuitively, a budget-constrained buyer solves a $\mathsc{Knapsack}$ problem when deciding which items to purchase, and the total volume (i.e, payment) of the optimal $\mathsc{Knapsack}$ solution is not monotone in the item values.
Increasing the value of a cheap item might incentivize the buyer to purchase this item instead of a more expensive one, if the buyer does not have enough budget to buy both items.

Consider an auction with $3$ items.
$T_1$ and $T_2$ are two matrices defined as
\begin{align*}
T_1 = \begin{bmatrix}
       2 & 0 & 0 \\
       0 & 1 & 1 \\
       2 & 1 & 0 \\
     \end{bmatrix}, \quad
T_2 = \begin{bmatrix}
       2 & 0 & 0 \\
       0 & 1 & 1 \\
       2 & 2 & 0 \\
     \end{bmatrix}.
\end{align*}
The rows of $T_1$ are the support of $V_1$, and similarly, the rows of $T_2$ are the support of $V_2$.
Each row is associated with a probability of $\frac{1}{3}$.
Assume the budget $b = 2$.

One of the optimal mechanisms that obtains $\SRevB(V_1) = 2$ is to price the items at $(2,1,1)$.
A buyer of type $1$ and $3$ will buy the first item, and a buyer of type $2$ will buy the last two items.
This is optimal because $\SRevB$ cannot exceed the budget.

However, $\SRevB(V_2) < 2$. To prove it, we first notice $\SRevB(V_2) \leq 2$ because $b = 2$. It means we must get a revenue of $2$ from all buyer types to make $\SRevB(V_2) = 2$. Thus, we must price the first item at $2$, and each of the last two items at $1$ to satisfy this constraint for the first two types. Nevertheless, with this pricing strategy, we only get revenue $1$ for the last buyer type $(2, 2, 0)$, because he will only buy the second item.
This shows $\SRevB(V_2) < 2 = \SRevB(V_1)$.




\section{Simple Mechanisms for Weakly Correlated Valuations}
\label{sec:vhat}
This section is devoted to proving Lemma~\ref{lem:hatv-simple}.
Lemma~\ref{lem:hatv-simple} states that simple mechanisms (more specifically, the better of $\SRev(\hat V)$ and $\BRev(\hat V)$) are approximately optimal for any weakly correlated distribution $\hat V = V_{|(\normone{v} \le c)}$.
Note that there is no budget in this section, only a cap $\capsum$ on the $\ell_1$-norm of $\hat v \sim \hat V$.

Our approach builds on the ideas like ``core-tail decomposition'' from previous works that show $\Rev(V) = O(\SRev(V) + \BRev(V))$ for independent valuations $V$~\cite{HartN17,LiY13,BabaioffILW14,CaiDW16}.
More specifically, we generalize the duality-based framework developed in~\cite{CaiDW16} to handle weakly correlated distributions.
The idea of~\cite{CaiDW16} is to Lagrangify only the incentive constraints, then guess the Lagrangian multipliers to derive an upper bound on the maximum revenue.

We first highlight some of the difficulties in extending previous works to correlated distributions.
\begin{enumerate}
\item When the distribution is independent, one can upper bound the maximum revenue by Myerson's \emph{virtual value} of the bidder's favorite item, plus the sum of the \emph{values} of the remaining items. For correlated distributions, the virtual value of the favorite item depends on the other items. 
\item In the core part of core-tail decomposition, we need the total value of the low-value items to concentrate around its expectation, so we can upper bound their values by $\BRev$ (by selling the grand bundle at a price slightly lower than that expectation).
When the valuations are independent, we can show the variance is small, which may not be true for correlated distributions.
\end{enumerate}

These difficulties are not surprising, because Lemma~\ref{lem:hatv-simple} cannot hold for arbitrary correlated distributions.
As shown in~\cite{BriestCKW10,HartN13}, for correlated distributions, the gap between the best deterministic and randomized mechanisms can be unbounded. 
Hence, we have to take advantage of the special properties of $\hat V$.


\paragraph{\bf Notations.}
In this Section, because $\hat V = V_{|(\normone{v} \le c)}$ is the distribution we focus on, we use $T$ to denote the support of $\hat V$.
Given a (correlated) distribution $D$, we use $D_j$ to denote $D$'s marginal distribution on the $j$-th coordinate, and $D_{-j}$ to denote $D$'s marginal (joint) distribution on coordinates other than $j$.
Let $T_j$ and $T_{-j}$ be the support of $\hat V_j$ and $\hat V_{-j}$ respectively.
In addition, we will make use of the conditional distributions $\hat V_{j | \hat v_{-j} = t_{-j}}$ and $\hat V_{-j | \hat v_{j} = t_{j}}$;
The former is the distribution of $\hat v_j$ for $\hat v \sim \hat V$ conditioned on $\hat v_{-j} = t_{-j}$, and the latter is the distribution of $\hat v_{-j}$ conditioned on $v_j = t_j$.
Abusing notation, we use $f(t)$, $f(t_j)$, $f(t_{-j})$, $f(t_j | t_{-j})$, and $f(t_{-j} | t_j)$ to denote the probability mass function of the correlated and conditional distributions we mentioned in this section.
When the value of item $j$ is drawn from $f(t_j | t_{-j})$, we use $\vVALt(t_j | t_{-j})$ to denote item $j$'s (ironed) Myerson's virtual value.

For a bidder type $t \in T \subseteq \R^m$, the favorite item of $t$ is the one with the highest value (with ties broken lexicographically). 
We write $t \in R_j$ if and only if $j$ is the favorite item of type $t$ after tie-breaking.
Formally,
\[
R_j = \{t \in T \mid j \text{ is the smallest index with } t_j \ge t_k \text{ for all } k \in [m]\}.
\]

\paragraph{\bf Proof of Lemma~\ref{lem:hatv-simple}.}
We extend the duality framework in~\cite{CaiDW16} to handle correlated distributions.
As we will see in Section~\ref{sec:canon-flow}, we can upper bound the optimal revenue of $\hat V$ into three components.
Recall that $\pi$ is the allocation rule.
For notational convenience, we write $r$ for $\SRev(\hat V)$.
Notice that in $\SINGLE$ we get $\vVALt(t_j | t_{-j})$ rather than $\vVALt(t_j)$. 
\begin{align*}
\Rev(\hat V) & \le \sum_{t \in T} f(t) \sum_{j \in [m]} \pi_j(t) \vVALt(t_j | t_{-j}) \cdot \indic{t \in R_j} \tag{\SINGLE} \\
  & \quad + \sum_{t \in T} f(t) \sum_{j \in [m]} \pi_j(t) t_j \cdot \indic{t \notin R_j} \cdot \indic{t_j \le 2r} \tag{\CORE} \\
  & \quad + \sum_{t \in T} f(t) \sum_{j \in [m]} \pi_j(t) t_j \cdot \indic{t \notin R_j} \cdot \indic{t_j > 2r}. \tag{\TAIL}
\end{align*}

Lemma~\ref{lem:hatv-simple} follows directly from the statements of Lemmas~\ref{lem:vhat-single},~\ref{lem:vhat-tail},~and~\ref{lem:vhat-core}.

\begin{lemma}
\label{lem:vhat-single}
$\SINGLE \le 2 \BRev(\hat V) + \SRev(\hat V)$.
\end{lemma}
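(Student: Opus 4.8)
\textbf{Proof proposal for Lemma~\ref{lem:vhat-single}.}

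\medskip

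The plan is to decompose the $\SINGLE$ term item by item and show that each item $j$ contributes at most (its own separate-selling revenue under $\hat V$) plus (a quantity that telescopes into a grand-bundle bound). First I would rewrite
\[
\SINGLE = \sum_{j \in [m]} \sum_{t \in R_j} f(t)\, \pi_j(t)\, \vVALt(t_j \mid t_{-j}),
\]
and, for a fixed $j$, condition on the realization $t_{-j}$ of the other coordinates. Given $t_{-j}$, the value $t_j$ is drawn from $\hat V_{j \mid \hat v_{-j} = t_{-j}}$, and $\vVALt(t_j \mid t_{-j})$ is exactly the ironed Myerson virtual value for that one-dimensional conditional distribution. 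The key observation is that restricting the sum to $t \in R_j$ only shrinks the region (we only sum over $t_j$ that are the coordinate-wise maximum), and $0 \le \pi_j(t) \le 1$, so
\[
\sum_{t_j \,:\, t \in R_j} f(t_j \mid t_{-j})\, \pi_j(t)\, \vVALt(t_j \mid t_{-j}) \;\le\; \max_{p \ge 0}\; p \cdot \Prob{t_j \sim \hat V_{j \mid t_{-j}}}{t_j \ge p},
\]
i.e.\ the optimal single-item (Myerson) revenue for the conditional distribution of item $j$ given $t_{-j}$, since the virtual-value sum is bounded by the revenue of the Myerson-optimal posted price for that conditional distribution.

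\medskip

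The subtle point — and where I expect the real work to be — is that averaging the \emph{conditional} optimal single-item revenues over $t_{-j}$ need not be bounded by $\SRev(\hat V)$ directly, because the conditional distributions $\hat V_{j \mid t_{-j}}$ can be more favorable than the marginal $\hat V_j$. Here I would use the special structure of $\hat V = V_{|(\normone{v}\le c)}$: conditioning on $\hat v_{-j} = t_{-j}$ (with $\normone{t_{-j}} \le c$) makes the conditional law of $\hat v_j$ equal to $V_j$ truncated to $[0,\, c - \normone{t_{-j}}]$, which is \emph{first-order stochastically dominated} by $V_j$ — and more to the point, by $\hat V_j$ on a suitably coupled event. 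A cleaner route: couple $\hat v_j \mid t_{-j}$ with a draw from $\hat V_j$ using the fact that the joint $\hat V$ is itself obtained from the product $V$ by an $\ell_1$ cap, so the marginal $\hat V_j$ stochastically dominates every conditional $\hat V_{j\mid t_{-j}}$. Then a fixed posted price $p$ on item $j$ sells at least as often under $\hat V_j$ as under any $\hat V_{j\mid t_{-j}}$, so
\[
\expect{t_{-j} \sim \hat V_{-j}}{\;\max_{p} p\cdot \Prob{t_j \sim \hat V_{j\mid t_{-j}}}{t_j \ge p}} \;\le\; 2\,\mathsc{SRev}_j(\hat V) \;+\; (\text{bundle-type leftover}),
\]
where $\mathsc{SRev}_j$ is the separate revenue from item $j$ alone; the factor-$2$ and the leftover term arise because the maximizing price can differ across $t_{-j}$, and one standard trick (used in \cite{CaiDW16,BabaioffILW14}) is to charge the part of the conditional revenue coming from high values of $t_j$ to $\BRev(\hat V)$ — selling the grand bundle captures any type whose favorite coordinate is large.

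\medskip

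Summing the per-item bounds over $j$, the $\SRev$ contributions add up to at most $\SRev(\hat V)$ (one can run all the per-item prices simultaneously), and the bundle contributions add up to at most $2\BRev(\hat V)$, giving $\SINGLE \le 2\BRev(\hat V) + \SRev(\hat V)$ as claimed. The main obstacle is precisely the dependence of the conditional virtual value $\vVALt(t_j \mid t_{-j})$ on $t_{-j}$: I need to argue that although the virtual-value \emph{function} changes with $t_{-j}$, the resulting \emph{optimal revenue} is controlled by the marginal's separate revenue plus a bundle term, which is where the $\ell_1$-truncation structure (stochastic domination of the marginal over all conditionals) is essential and where a naive independent-distribution argument breaks down.
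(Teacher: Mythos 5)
Your high-level decomposition is the same as the paper's: rewrite $\SINGLE$ as $\sum_j \sum_{t_{-j}} f(t_{-j}) \Rev(t_j \mid t_{-j})$, then control the averaged conditional single-item revenues by $\SRev(\hat V)$ plus a tail term charged to $\BRev(\hat V)$. But the technical claim you use to do the comparison is false, and this breaks the argument.

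You assert that ``the marginal $\hat V_j$ stochastically dominates every conditional $\hat V_{j\mid t_{-j}}$.'' This is wrong, and in fact the inequality typically goes the other way for small $t_{-j}$. Since $\hat V_j$ is a mixture of the conditionals $\hat V_{j\mid t_{-j}}$, the claim could only be true if all conditionals were identical. Concretely, take $m=2$, $V_1=V_2$ uniform on $\{0,1\}$, $c=1$, so $\hat V$ is uniform on $\{(0,0),(0,1),(1,0)\}$. Then $\Pr[\hat v_1 = 1] = 1/3$ but $\Pr[\hat v_1 = 1 \mid \hat v_2 = 0] = 1/2$, i.e.\ $\hat V_{1\mid \hat v_2=0}$ strictly dominates $\hat V_1$. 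Correspondingly $\Rev(\hat V_{1\mid \hat v_2 = 0}) = 1/2 > 1/3 = \Rev(\hat V_1)$: the per-$t_{-j}$ optimal reserve can genuinely earn more on a conditional than on the marginal, which is exactly what you need to rule out.

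The paper avoids this by never comparing a conditional to the full marginal. It uses the sharper structural fact that the $\ell_1$-cap makes the \emph{truncated} conditional exactly \emph{equal} (not merely dominated) to the truncated marginal: when $\normone{t_{-j}} < \capsum/2$, one has $f(t_j \mid t_{-j},\, t_j < \capsum/2) = f(t_j \mid t_j < \capsum/2)$. So fix $p^*_j$ as the optimal reserve for $\hat V_j$ restricted to $\{t_j < \capsum/2\}$; on the event $\{t_j < \capsum/2,\, \normone{t_{-j}} < \capsum/2\}$ the conditional revenue of the per-$t_{-j}$ optimal price is at most that of $p^*_j$, and summing over $j$ these add up to at most $\SRev(\hat V)$. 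Everything outside this event has $\normone{t} \ge \capsum/2$, so the surviving $\sum_j t_j \le \capsum$ is charged to $\BRev(\hat V)$ via $\BRev(\hat V) \ge (\capsum/2)\Prob{}{\normone{t}\ge \capsum/2}$, giving the $2\BRev(\hat V)$ term. Your ``bundle-type leftover'' intuition is in the right direction, but without the $\capsum/2$ threshold and the exact-equality-after-truncation fact the charging does not close.
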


\begin{lemma}
\label{lem:vhat-tail}
$\TAIL \le \SRev(\hat V)$.
\end{lemma}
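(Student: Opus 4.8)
\textbf{Proof proposal for Lemma~\ref{lem:vhat-tail}.}

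The plan is to bound the $\TAIL$ term by reducing it to a separate-selling argument, one item at a time. Fix a coordinate $j$. The contribution of item $j$ to $\TAIL$ is $\sum_{t \in T} f(t)\, \pi_j(t)\, t_j \cdot \indic{t \notin R_j} \cdot \indic{t_j > 2r}$, where $r = \SRev(\hat V)$. Since $\pi_j(t) \le 1$, this is at most $\sum_{t \in T} f(t)\, t_j \cdot \indic{t \notin R_j} \cdot \indic{t_j > 2r}$. The first step is to observe that on the event $\{t \notin R_j\}$ there is some other coordinate $k$ with $t_k \ge t_j > 2r$; so whenever $t_j$ is large and is \emph{not} the favorite, the favorite value $\norminf{t}$ is also at least $2r$. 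I would use this to relate the tail mass of $t_j$ on the non-favorite event to the probability that \emph{some} coordinate exceeds $t_j$.

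Concretely, the key step is: for any threshold $a > 2r$, pricing item $j$ alone at price $a$ yields revenue $a \cdot \Prob{t \sim \hat V}{t_j \ge a} \le \SRev(\hat V) = r$, hence $\Prob{t \sim \hat V}{t_j \ge a} \le r/a$ for all $a > 2r$ (and trivially this is the standard Myerson-style bound giving $a \cdot \Prob{}{t_j \ge a} \le r$). Then I would write the expected truncated contribution of item $j$ as an integral of its tail: $\expect{}{t_j \cdot \indic{t_j > 2r}} = 2r \Prob{}{t_j > 2r} + \int_{2r}^{\infty} \Prob{}{t_j \ge a}\, \mathrm{d}a \le 2r \cdot \frac{r}{2r} + \int_{2r}^{\infty} \frac{r}{a}\, \mathrm{d}a$. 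But the naive $\int r/a$ diverges, so the honest approach must use the \emph{non-favorite} indicator to get a summable bound. The standard trick (as in~\cite{BabaioffILW14,CaiDW16}) is to charge $t_j \indic{t \notin R_j}$ against the favorite item: on $\{t \notin R_j, t_j > 2r\}$, there is a favorite $k \neq j$ with $t_k \ge t_j$, and summing over $j$ the event $\{t \notin R_j, t_j > a\}$ is contained in $\{\text{at least two coordinates} \ge a\}$, whose probability is controlled by a union-type bound against a single coordinate's tail squared or by a careful counting. I would follow the CaiDW16 tail analysis: bound $\sum_j \expect{}{t_j \indic{t \notin R_j}\indic{t_j > 2r}}$ by $\sum_j \int_{2r}^\infty \Prob{}{t_j \ge a, t \notin R_j}\,\mathrm{d}a$ and use that $\sum_j \Prob{}{t_j \ge a, t \notin R_j} \le \Prob{}{\norminf{t} \ge a} \cdot (\text{something})$ together with $\Prob{}{\norminf{t} \ge a} \le \SRev(\hat V)/a$ via a union bound over the separate prices.

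The main obstacle I anticipate is making the tail sum converge with the right constant: the bare $1/a$ tails are not integrable, so the argument genuinely needs the ``not-favorite'' restriction to produce an extra factor that decays, and one must be careful that the union bound $\Prob{}{\norminf{t}\ge a} \le \sum_k \Prob{}{t_k \ge a}$ combined with the per-item Myerson bound $\Prob{}{t_k \ge a} \le \SRev(\hat V)/a$ is applied in a way that still sums to $O(\SRev(\hat V))$ rather than $O(m \cdot \SRev(\hat V))$. The resolution, as in prior work, is that each type $t$ has exactly one favorite, so the events $\{t \notin R_j\}$ summed over $j$ double-count by at most a controllable amount; pushing this through with the truncation at $2r$ (which is what forces the geometric decay when combined with $\Prob{}{t_j \ge 2a} \le \frac12 \Prob{}{t_j \ge a}$-type halving from the Myerson bound $a\Prob{}{t_j\ge a}\le r$) gives the clean bound $\TAIL \le \SRev(\hat V)$. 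Notably this part of the argument does not use weak correlation at all — it only uses $\pi_j \le 1$ and the definition of $\SRev$ — so I expect it to go through essentially verbatim from the independent case.
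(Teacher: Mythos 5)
There is a genuine gap, and it is exactly in the spot where you wave it away at the end. You write that this part of the argument ``does not use weak correlation at all --- it only uses $\pi_j \le 1$ and the definition of $\SRev$ --- so I expect it to go through essentially verbatim from the independent case.'' This is wrong, and the paper's proof devotes its central step precisely to handling the correlation. After the trivial step $\pi_j(t)\le 1$, the paper writes the remaining sum as $\sum_j \sum_{t_j > 2r} f(t_j) \sum_{t_{-j}} f(t_{-j}\,|\,t_j)\, t_j \,\indic{t\notin R_j}$. In the independent case one would simply replace $f(t_{-j}\,|\,t_j)$ by $f(t_{-j})$ and then use ``sell all other items at price $t_j$'' to bound the inner sum by $r$. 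Under correlation that replacement is false as stated, and the paper proves a substitute: for $t_j^* > 2r$ one has $\sum_{t_{-j}} f(t_{-j}\,|\,t_j^*)\,\indic{t\notin R_j} \le 2\sum_{t_{-j}} f(t_{-j})\,\indic{t\notin R_j}$. The factor 2 comes from $\Prob{}{t_j < 2r} \ge 1/2$ (a consequence of $r = \SRev(\hat V)$), and the inequality crucially invokes the coordinate-wise stochastic dominance $\hat V_{-j|\,t_j} \preceq \hat V_{-j|\,t_j'}$ for $t_j > t_j'$ --- this is Lemma~\ref{CLM:C1-PREC-C2}, which is a property specific to the truncation $\hat V = V_{|(\normone{v}\le c)}$ and fails for general correlated distributions. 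So weak correlation \emph{is} used, and it is the load-bearing part of the lemma; this is exactly why the paper cannot just cite~\cite{BabaioffILW14} in Section~\ref{sec:overview}.

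Separately, your proposed route via the tail integral $\int_{2r}^\infty \Prob{}{t_j\ge a,\ t\notin R_j}\,\mathrm{d}a$ together with $\Prob{}{\norminf{t}\ge a}\le r/a$ is genuinely incomplete as written --- you acknowledge the convergence issue yourself and gesture at ``a careful counting'' and a ``halving'' bound without specifying it, and closing that gap would again require decoupling $t_j$ from $t_{-j}$, i.e., the same correlation argument you claimed was unnecessary. The paper avoids the integral entirely: once the conditional density is replaced by the marginal (paying a factor 2), it bounds $\sum_{t_{-j}} f(t_{-j})\, t_j\, \indic{t\notin R_j} \le r$ in one shot, by noting that on $\{t\notin R_j\}$ some $k\ne j$ has $t_k\ge t_j$, so posting each item at price $t_j$ already earns $t_j$; then $2r\sum_j \Prob{}{t_j>2r}\le r$ is the revenue of posting every item at $2r$. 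That is a cleaner route than the tail-integral version and, importantly, makes explicit where the weakly-correlated structure enters.
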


\begin{lemma}
\label{lem:vhat-core}
$\CORE \le 2 \BRev(\hat V) + 3 \SRev(\hat V)$.
\end{lemma}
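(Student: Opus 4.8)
The plan is to prove Lemma~\ref{lem:vhat-core} via the ``core'' half of a core--tail decomposition in the style of~\cite{CaiDW16,BabaioffILW14}, but carried out so that it uses only the structure of $\hat V = V_{|(\normone{v}\le c)}$ and not full independence. Write $r=\SRev(\hat V)$. Since $0\le\pi_j(t)\le 1$ and discarding the indicator $\indic{t\notin R_j}$ only enlarges the sum, the first step is to reduce to
\[
\CORE \;\le\; \mu \;:=\; \sum_{j\in[m]}\expect{\hat v\sim\hat V}{\hat v_j\,\indic{\hat v_j\le 2r}}\;=\;\expect{\hat v\sim\hat V}{Y},\qquad Y:=\sum_{j\in[m]}\hat v_j\,\indic{\hat v_j\le 2r},
\]
noting that $Y\le\normone{\hat v}\le c$ pointwise. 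It then suffices to show $\mu\le 2\BRev(\hat V)+3\SRev(\hat V)$.

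The main tool is the separate-selling tail bound: for any price $x>0$, offering every item at price $x$ to the (budget-free) additive buyer yields revenue $x\sum_j\Prob{\hat v\sim\hat V}{\hat v_j\ge x}$, which is at most $\SRev(\hat V)=r$, so $\sum_j\Prob{\hat v\sim\hat V}{\hat v_j\ge x}\le r/x$. Taking $x=2r$ gives $\sum_j\Prob{\hat v\sim\hat V}{\hat v_j>2r}\le\tfrac12$, so by a union bound with probability at least $\tfrac12$ every coordinate of $\hat v$ is at most $2r$, and hence $Y=\normone{\hat v}$; this is what converts lower bounds on $\Prob{}{Y\ge\cdot}$ into lower bounds on $\Prob{}{\normone{\hat v}\ge\cdot}$, i.e.\ into $\BRev(\hat V)$. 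With this in hand I would split on the size of $\mu$. If $\mu\le 3r$ then $\CORE\le\mu\le 3\SRev(\hat V)$ and we are done. Otherwise $\mu$ is a large multiple of $r$, and the goal becomes $\mu=O(\BRev(\hat V))$: pricing the grand bundle at $\mu/2$ gives
\[
\BRev(\hat V)\;\ge\;\tfrac{\mu}{2}\,\Prob{\hat v\sim\hat V}{\normone{\hat v}\ge\tfrac{\mu}{2}}\;\ge\;\tfrac{\mu}{2}\,\Prob{\hat v\sim\hat V}{Y\ge\tfrac{\mu}{2}},
\]
so it remains to show $Y$ does not fall much below its mean, i.e.\ $\Prob{}{Y\ge\mu/2}=\Omega(1)$ (in fact with a constant approaching $1$ as $\mu/r\to\infty$, so the final constants work out).

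The crux, and the step I expect to be the main obstacle, is this concentration statement for the \emph{correlated} distribution $\hat V$. In the independent case one bounds $\var(Y)\le\sum_j\expect{}{\hat v_j^2\,\indic{\hat v_j\le 2r}}\le 2r\mu$ and finishes by Chebyshev; here the cross covariances $\cov\!\big(\hat v_j\indic{\hat v_j\le 2r},\,\hat v_k\indic{\hat v_k\le 2r}\big)$ under $\hat V$ must be controlled. The plan is to write the truncation as a difference of nondecreasing functions, $x\,\indic{x\le 2r}=x-x\,\indic{x>2r}$, to use that conditioning the independent $V$ on the downward-closed event $\{\normone{v}\le c\}$ makes monotone functionals of distinct coordinates negatively correlated, and to bound the remaining ``correction'' covariances (those involving $x\,\indic{x>2r}$) with the separate-selling estimates above together with $\expect{}{\normone{\hat v}}\le c$; as a complement, since $Y\le c$ pointwise a reverse-Markov inequality already gives $\Prob{}{Y\ge\mu/2}\ge\mu/(2c)$, which handles the regime where $\mu$ is comparable to $c$, leaving the genuinely hard case $\mu\ll c$ to the covariance argument. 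Establishing this covariance control — working around the failure of the naive independence-based variance bound, which is exactly the place where the ``cap on the sum'' defining $\hat V$ is essential — is the heart of the proof; once it is in hand, plugging the anti-concentration bound into the displayed inequality gives $\mu\le 2\BRev(\hat V)$ in the large-$\mu$ case, and combining with the easy case yields $\CORE\le\mu\le 2\BRev(\hat V)+3\SRev(\hat V)$, as claimed.
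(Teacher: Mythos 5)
Your high-level plan is the same as the paper's: upper-bound $\CORE$ by (the expectation of) a truncated sum, show the truncated sum concentrates because conditioning the independent $V$ on the downward-closed event $\{\normone{v}\le c\}$ makes it ``negatively associated,'' and then charge the result to $\BRev(\hat V)$ by posting a bundle price slightly below the mean. You correctly identify the covariance control as the crux. But two concrete choices in your sketch create real gaps that the paper's proof avoids.

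First, the truncation. You work with $Y=\sum_j \hat v_j\,\indic{\hat v_j\le 2r}$, whose summands $x\mapsto x\,\indic{x\le 2r}$ are \emph{not} monotone, so the negative-correlation property of $\hat V$ (which holds for coordinatewise nondecreasing functionals) does not apply directly, and you are forced to decompose $x\,\indic{x\le 2r}=x-x\,\indic{x>2r}$ and chase ``correction'' covariances. That decomposition does not obviously close: writing $Y_j=A_j-B_j$ with $A_j=\hat v_j$ and $B_j=\hat v_j\indic{\hat v_j>2r}$, both monotone, gives $\cov(Y_i,Y_j)=\cov(A_i,A_j)-\cov(A_i,B_j)-\cov(B_i,A_j)+\cov(B_i,B_j)$; negative association makes the first and last terms $\le 0$ but makes the two cross-terms \emph{nonnegative after the sign flip}, so the total is not controlled without a separate quantitative argument. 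The paper sidesteps this entirely by capping with the \emph{monotone} function $\vcap_j=\min(\hat v_j,2r)$ (so $\normone{\vcap}\ge Y$ is still an upper bound on $\CORE$). Then negative correlation of $\vcap_i,\vcap_j$ is immediate from $\hat V_{j\mid \hat v_i=x}\succeq \hat V_{j\mid \hat v_i=y}$ for $x<y$, and $\var(\normone\vcap)\le\sum_j\var(\vcap_j)\le 4r^2$ follows by the standard per-item estimate. If you replace $Y$ with $\sum_j\min(\hat v_j,2r)$, your covariance difficulty disappears.

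Second, the constants. Your case split (``$\mu\le 3r$'' vs.\ pricing the bundle at $\mu/2$) does not reach the stated $\CORE\le 2\BRev(\hat V)+3\SRev(\hat V)$: with $\BRev\ge(\mu/2)\Prob{}{Y\ge\mu/2}$ you would need $\Prob{}{Y\ge\mu/2}\ge 1$ to conclude $\mu\le 2\BRev$, and Chebyshev with $\var\lesssim r^2$ cannot supply that unless $\mu/r$ is enormous. The paper instead prices the bundle at $\expect{}{\normone\vcap}-3r$ and applies Chebyshev at distance $3r$ to the variance bound $\var(\normone\vcap)\le 4r^2$, giving $\Prob{}{\normone\vcap\ge\expect{}{\normone\vcap}-3r}\ge 5/9$, hence $\BRev(\hat V)\ge\frac{5}{9}\bigl(\expect{}{\normone\vcap}-3r\bigr)$ and $\CORE\le\expect{}{\normone\vcap}\le\frac{9}{5}\BRev(\hat V)+3r\le 2\BRev(\hat V)+3\SRev(\hat V)$. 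The additive ``$-3r$'' shift, rather than the multiplicative ``$\mu/2$'' split, is what makes the constants close. In short: right strategy, but switch to the monotone cap $\min(\cdot,2r)$ and price the bundle at mean-minus-$3r$ rather than half the mean.
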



\paragraph{\bf Organization.} For completeness, we first recall the approach in \cite{CaiDW16} in Section~\ref{sec:stoc16-duality}.
In Section~\ref{sec:canon-flow}, we show that there exists a choice of the Lagrangian multipliers that $\Rev(\hat V)$ can be upper bounded by $\SINGLE + \CORE + \TAIL$.
Lemmas~\ref{lem:vhat-single},~\ref{lem:vhat-tail},~and~\ref{lem:vhat-core} are proved in Section~\ref{sec:vhat-proofs}.

\subsection{The Duality Based Approach in~\cite{CaiDW16}.}
\label{sec:stoc16-duality}

The optimal mechanism $M^* = (\pi^*, p^*)$ for $\hat V$ is the optimal solution to the following LP:
\begin{lp}
\label{lp:rev-reduced}
\maxi{\sum_{t \in T} f(t) p(t)}
\st \qcon{\pi(t')^\top t - p(t') \le \pi(t)^\top t - p(t)}{\forall t \in T, t' \in T^+}
    \qcon{0 \le \pi_j(t) \le 1}{\forall t \in T, j \in [m]}
    \con{\pi(\varnothing) = 0, \; p(\varnothing) = 0.}
\end{lp}

We can upper bound the optimal primal value by Lagrangifying the incentive constraints.
\begin{align*}
\Rev(\hat V) = \min_{\lambda \ge 0} \max_{\pi, p} L(\lambda, \pi, p),
\end{align*}
where the Lagrangian dual of LP~\eqref{lp:rev-reduced} is given by
\begin{align*}
\begin{split}
L(\lambda,\pi,p) 
  & = \sum_{t \in T} f(t) p(t) + \sum_{t \in T, t' \in T^+} \lambda(t, t') \left[ (\pi(t)-\pi(t'))^\top t - (p(t) - p(t')) \right] \\
  & = \sum_{t \in T} p(t) \left[f(t) - \sum_{t' \in T^+} \lambda(t, t') + \sum_{t' \in T} \lambda(t', t) \right] \\
    & \qquad + \sum_{t \in T} \pi(t)^\top \left[\sum_{t' \in T^+} \lambda(t, t') t - \sum_{t' \in T} \lambda(t', t) t' \right].
\end{split}
\end{align*}

Because the $p(t)$'s are unconstrained variables, any dual solution with a finite value must have
\begin{align}
\label{eqn:multiplier-flow}
f(t) - \sum_{t' \in T^+} \lambda(t, t') + \sum_{t' \in T} \lambda(t', t) = 0, \; \forall t \in T.
\end{align}
From now on, we restrict our attention to only dual solution with finite values.
We can then simplify $L(\lambda,\pi,p)$ by replacing $\sum_{t' \in T^+} \lambda(t, t')$ with $f(t) + \sum_{t' \in T} \lambda(t', t)$ to get rid of $T^+$:
\begin{align*}
L(\lambda,\pi,p)
  & = \sum_{t \in T} \pi(t)^\top \left[f(t) t + \sum_{t' \in T} \lambda(t', t) t - \sum_{t' \in T} \lambda(t', t) t' \right] \\
  & = \sum_{t \in T} f(t) \pi(t)^\top \left[t - \frac{1}{f(t)} \sum_{t' \in T} \lambda(t', t) (t' - t) \right].
\end{align*}

We write $\VVAL(t)$ as a shorthand for the term in the bracket:
$
\VVAL(t) = t - \frac{1}{f(t)} \sum_{t' \in T} \lambda(t', t) (t' - t).
$
We know that 
$
L(\lambda,\pi,p) = \sum_{t \in T} f(t) \pi(t)^\top \VVAL(t) \ge \Rev(\hat V)
$
  is an upper bound on the revenue of the optimal mechanism.
We can rewrite Equation~\eqref{eqn:multiplier-flow} as
\[
f(t) + \sum_{t' \in T} \lambda(t', t) = \sum_{t' \in T^+} \lambda(t, t') = \sum_{t' \in T} \lambda(t, t') + \lambda(t, \varnothing), \; \forall t \in T.
\]
\cite{CaiDW16} interpreted these constraints as flow conservation constraints.
Let $\lambda(t, t') \ge 0$ denote the amount of flow $t$ sends to $t'$.
The left-hand side is the total flow received by $t$, where every type $t$ receives $f(t)$ units of flow from the source;
and the right-hand side is the total flow send out from $t$, with all the excess flow sent to the sink ($\varnothing$).

They proposed a ``canonical flow'' which was shown to be a good guess for the Lagrangian multipliers.
It turns out the same dual solution is sufficient to prove our results for correlated distributions.
In the next section, we recall this canonical flow and use it to derive an upper bound on the optimal revenue.

\subsection{Canonical Flow for Weakly Correlated Distributions}
\label{sec:canon-flow}
Recall that $t \in R_j$ if and only if $j$ is the favorite item of type $t$.
Formally, there exists $\lambda(t,t') \ge 0$ such that
\begin{itemize}
\item For every $j$, all flows entering $R_j$ are from the source, and all flows leaving $R_j$ are to $\varnothing$.
\item For $t, t' \in R_j$, we can have $\lambda(t',t) > 0$ only if $t$ and $t'$ only differ on the $j$-th coordinate. When there is no ironing, $\lambda(t',t) > 0$ only if $t'_j$ is the smallest value larger than $t_j$ in $T_j$.
\end{itemize}

\begin{lemma}
\label{lem:canonical-flow}
There exists a set of the Lagrangian multipliers $\lambda$ that satisfies the flow conservation constraints, such that
\begin{enumerate}
\item[(1)] If $t \notin R_j$, then $\VVAL_j(t) = t_j$.
\item[(2)] If $t \in R_j$, then $\VVAL_j(t) \le \vVALt_j(t_j | t_{-j})$, where $\vVALt(t_j | t_{-j})$ is item $j$'s (ironed) Myerson's virtual value conditioned on $t_{-j}$.
\end{enumerate}
\end{lemma}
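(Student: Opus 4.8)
\textbf{Proof proposal for Lemma~\ref{lem:canonical-flow}.}

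The plan is to reuse the canonical flow of~\cite{CaiDW16} essentially verbatim, but applied \emph{within each region $R_j$ separately}, treating the conditional distribution of the $j$-th coordinate given $t_{-j}$ as a one-dimensional Myerson instance. First I would set up the flow region by region: on each $R_j$, no flow crosses the boundary except to/from the source/sink, so the flow-conservation constraint~\eqref{eqn:multiplier-flow} decomposes into independent constraints, one per region. Inside $R_j$, the flow should only move ``upward'' along the $j$-th coordinate while holding $t_{-j}$ fixed; that is, $\lambda(t',t) > 0$ only when $t'$ and $t$ agree on all coordinates but $j$, and (absent ironing) only when $t'_j$ is the next value above $t_j$ in $T_j$. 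For each fixed $t_{-j}$, this is exactly the classical single-item Myerson flow on the conditional type distribution $f(t_j \mid t_{-j})$, scaled by $f(t_{-j})$, routed so that each point pushes its accumulated mass up to the next type and dumps excess into $\varnothing$ whenever the running virtual value would turn negative (this is where ironing enters, collapsing consecutive types into a block that shares a common flow, exactly as in the one-dimensional construction).

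Granting this flow, the two claimed properties are short. For property~(1): if $t \notin R_j$, then by the first bullet no flow along coordinate $j$ is incident to $t$ at all — every arc touching $t$ lives in the region $R_{k}$ for the favorite item $k \ne j$ of $t$ and only perturbs coordinate $k$. Hence in the sum $\VVAL_j(t) = t_j - \frac{1}{f(t)}\sum_{t'} \lambda(t',t)(t'_j - t_j)$ every term has $t'_j = t_j$, so $\VVAL_j(t) = t_j$. For property~(2): if $t \in R_j$, the flow incident to $t$ that affects coordinate $j$ is precisely the one-dimensional Myerson flow on $f(\cdot \mid t_{-j})$, and the standard computation (the telescoping sum that turns the flow difference into the discrete derivative of $x(1-F)$) yields exactly the ironed virtual value $\vVALt_j(t_j \mid t_{-j})$; when the type $t$ is not at the top of its coordinate's support, we get equality, and if it is at the top the contribution from the virtual-value term can only help (one gets $\le$). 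Either way $\VVAL_j(t) \le \vVALt_j(t_j \mid t_{-j})$.

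The main obstacle — and the only place the proof is not a mechanical transcription of~\cite{CaiDW16} — is \emph{verifying global flow conservation}, i.e.\ that routing independently inside each $R_j$ is actually consistent with a single valid flow on all of $T^+$. The subtlety is that the partition $\{R_j\}$ is induced by the $\ell_\infty$-favorite, and the ``upward along coordinate $j$'' moves used inside $R_j$ must never exit $R_j$: increasing $t_j$ when $j$ is already the (lexicographically-first) argmax keeps $j$ the argmax, so the move stays inside $R_j$ — this is the one structural fact about $R_j$ that needs to be checked, and it is where the choice of the favorite coordinate (rather than an arbitrary coordinate) is used. The weak-correlation structure of $\hat V$ is \emph{not} needed for this lemma — it only enters later, in Lemmas~\ref{lem:vhat-single},~\ref{lem:vhat-tail},~and~\ref{lem:vhat-core}, to control the three resulting terms; Lemma~\ref{lem:canonical-flow} itself holds for any correlated $\hat V$. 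I would also remark that the per-region construction is well-defined because within $R_j$ the conditional $f(\cdot \mid t_{-j})$ is a genuine probability distribution on a finite chain, so the one-dimensional ironing procedure applies unchanged.
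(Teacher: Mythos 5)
Your high-level construction matches the paper's exactly: a region-by-region canonical flow that stays inside each $R_j$, moves only along the $j$-th coordinate, and never crosses region boundaries except through source/sink, with the observation that raising $t_j$ keeps $j$ the argmax so the chain above any $t \in R_j$ stays in $R_j$. Your argument for property~(1) is also the paper's. But there is a genuine gap in how you justify the $\le$ in property~(2), and the spot where you say ``the one-dimensional ironing procedure applies unchanged'' is precisely where it does \emph{not} apply unchanged.

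You attribute the $\le$ to whether $t$ is at the top of its coordinate's support, but the top type gives equality on both sides: with no incoming flow $\VVAL_j(t) = t_j$, and the Myerson virtual value at the top of the conditional support is also $t_j$ (ironing never lowers the top type's virtual value since it is already the largest). The actual source of the inequality is that the chain of types with fixed $t_{-j}$ exits $R_j$ from below, so an ironing interval of the full conditional distribution $f(\,\cdot \mid t_{-j})$ may straddle the boundary of $R_j$ and get truncated. Your flow can only iron within $R_j$, so it produces the average virtual value over a truncated interval $I' \subseteq I$ (truncated from below), not over the full ironed interval $I$. One then has to argue — as the paper does — that this truncated average is no larger than the full ironed value: if the average over $I'$ exceeded the average over $I$, then the lower piece $I \setminus I'$ would have a strictly smaller average, contradicting the minimality of $I$ as an ironing interval (one would have ironed $I'$ alone). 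Without this step, you have shown $\VVAL_j(t) = \vVALt_j(t_j \mid t_{-j})$ only for types whose ironing interval lies entirely inside $R_j$, and you have no bound at all for types in a truncated interval. Adding this truncation argument closes the gap and recovers the paper's proof.
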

\begin{proof}
Recall that
$
\VVAL(t) = t - \frac{1}{f(t)} \sum_{t' \in T} \lambda(t', t) (t' - t).
$
For (1), assume that $t \in R_k$ for some $k \neq j$.
If $\lambda(t', t) > 0$, it must be the case that $t$ and $t'$ are only different on the $k$-th coordinate, so $(t - t')_j = 0$ and $\VVAL_j(t) = t_j$.

Now we prove (2).
We first consider a canonical flow without ironing.
Fix any $j$ and $t \in R_j$.
If $t_j$ is the largest value in $T_j$, then $\lambda(t', t) = 0$ for all $t'$ and $\VVAL_j(t) = t_j$.
If $t_j$ is not the largest value in $T_j$, $t$ receives flow from the source and exactly one other node $t'$ where $t'_{-j} = t_{-j}$, and $t'_j$ is the smallest value larger than $t_j$ in $T_j$.
The total flow from $t'$ to $t$ includes the flows from the source to all $t^*$ with $t^*_{-j} = t_{-j}$ and $t^*_j > t_j$:
\begin{align*}
\lambda(t', t) & = \sum_{t^* \in T} f(t^*) \cdot \indic{t^*_{-j} = t_{-j} \wedge t^*_j > t_j} \\
   & = f(t_{-j}) \sum_{t^*_j \in T_j, t^*_j > t_j} f(t^*_j | t_{-j}) 
   = f(t_{-j}) \left(1 - F(t_j | t_{-j}) \right).
\end{align*}
Substituting $\lambda(t', t)$ in the expression of $\VVAL(t)$, this implies
\begin{align*}
\VVAL_j(t) & = t_j - \frac{1}{f(t)} \sum_{t^* \in T} \lambda(t^*, t) (t^*_j - t_j) \\
  & = t_j - \frac{1}{f(t)} \lambda(t', t)(t'_j - t_j) \\
  & = t_j - \frac{f(t_{-j}) \left(1 - F(t_j | t_{-j}) \right)}{f(t_{-j}) f(t_{j} | t_{-j})} (t'_j - t_j)
  = \vVAL(t_j | t_{-j}). 
\end{align*}

Finally, we show that the flow can be modified to implement Myerson's ironing procedure.
The analysis on modifying the flow to reflect ironing is given in~\cite{CaiDW16}, and we include it here for completeness.
Suppose there exist two types $t, t' \in R_j$ such that $t_j < t'_j$ but $\VVAL_j(t) > \VVAL_j(t')$.
We can add a cycle of $w$ units of flow between $t$ and $t'$, that is, we increase both $\lambda(t, t')$ and $\lambda(t', t)$ by $w$.
Notice that the resulting flow is still valid, and $\VVAL(t^*)$ for all $t^* \neq t, t'$ remain unchanged.
Moreover, the change does not alter $\VVAL_k(t)$ or $\VVAL_k(t')$ for all $k \neq j$.
The only effect of the change is to increase $\VVAL_j(t)$ by $w(t'_j - t_j) / f(t)$, and decrease $\VVAL_j(t')$ by $w(t'_j - t_j) / f(t')$.
Therefore, we can choose $w$ so that $\VVAL_j(t) = \VVAL_j(t')$ without changing any other virtual values.

Repeating this process allows us to simulate Myerson's ironing procedure.
One technical issue is that we may cut off an ironing interval of $f(t_j | t_{-j})$ because it leaves the region $R_j$.
However, we know that truncating an ironing interval $I$ to $I' \subseteq I$ from below can only decrease the virtual value on $I'$.
This is because the average virtual value on $I'$ is smaller than the average virtual value on $I$, otherwise we would not iron the entire interval $I$ in the first place. \qed
\end{proof}


\subsection{Upper Bounds for $\SINGLE$, $\CORE$, and $\TAIL$}
\label{sec:vhat-proofs}

We decompose the upper bound we had in the previous section into three components.
Recall that $r = \SRev(\hat V)$.
By Lemma~\ref{lem:canonical-flow}, we know that
\begin{align*}
\Rev(\hat V) & \le \sum_{t\in T} f(t) \pi(t)^\top \VVAL(t) \\
  & = \sum_{t \in T} f(t) \sum_{j \in [m]} \pi_j(t) \left(\vVALt(t_j | t_{-j}) \cdot \indic{t \in R_j} + t_j \cdot \indic{t \notin R_j}\right) \\
  & \le \sum_{t \in T} f(t) \sum_{j \in [m]} \pi_j(t) \vVALt(t_j | t_{-j}) \cdot \indic{t \in R_j} \tag{\SINGLE} \\
  & \quad + \sum_{t \in T} f(t) \sum_{j \in [m]} \pi_j(t) t_j \cdot \indic{t \notin R_j} \cdot \indic{t_j \le 2r} \tag{\CORE} \\
  & \quad + \sum_{t \in T} f(t) \sum_{j \in [m]} \pi_j(t) t_j \cdot \indic{t \notin R_j} \cdot \indic{t_j > 2r}. \tag{\TAIL}
\end{align*}

We now restate the lemmas that upper bounds each component.

\thmskip
{\noindent \bf Lemma~\ref{lem:vhat-single}.~}
{\em
$\SINGLE \le 2 \BRev(\hat V) + \SRev(\hat V)$.
}
\begin{proof}
We first recall the expression of \SINGLE.
\[
\SINGLE = \sum_{t \in T} f(t) \sum_{j \in [m]} \pi_j(t) \vVALt(t_j | t_{-j}) \cdot \indic{t \in R_j}.
\]
Because $\pi_j(t)\cdot \indic{t \in R_j}$ is between $0$ and $1$, we can upper bound $\SINGLE$ by setting it to $1$ whenever the ironed virtual value is positive, and $0$ otherwise. 
\begin{align*}
\SINGLE 
  & \le \sum_{t \in T} f(t) \sum_{j \in [m]} \vVALt(t_j | t_{-j}) \cdot \indic{\vVALt(t_j | t_{-j}) \ge 0} \\
  & = \sum_{j \in [m]} \sum_{t \in T} f(t_{-j}) f(t_j | t_{-j}) \vVALt(t_j | t_{-j}) \cdot \indic{\vVALt(t_j | t_{-j}) \ge 0} \\
  & = \sum_{j \in [m]} \sum_{t_{-j} \in T_{-j}} f(t_{-j}) \cdot \Rev(t_j | t_{-j}).
\end{align*}
Intuitively, we need to show that knowing $t_{-j}$ does not help us sell item $j$ by too much.
Observe that $\hat V$ is a correlated distribution obtained from capping an independent distribution.
The proof of the lemma crucially relies on the following property of $\hat V$: revealing $t_{-j}$ gives the same amount of information as revealing only the sum of $t_{-j}$.
For a fixed $j$, the revenue $\Rev(t_j|t_{-j})$ can be captured by the (non-disjoint) union of the following two cases:
\begin{enumerate}
\item[(1)] $\normone{t_{-j}} < \capsum/2$ and $t_j < \capsum/2$ both hold. Conditioned on this event, the valuation of $t_j$ is independent of $t_{-j}$. Hence, knowing $t_{-j}$ does not provide additional information.
\item[(2)] $\normone{t} \ge \capsum/2$. In this case, buyer's value for the grand bundle is at least $\capsum/2$, so we could charge this to $\BRev(\hat V)$.
\end{enumerate}
It is worth noting that $t_j$ and $t_{-j}$ are not independent when $\normone{t} < \capsum/2$, so we have to condition on stricter events for them to become independent. 
Formally, if $\normone{t_{-j}} < \capsum/2$,
\[
f(t_j|t_{-j}, t_j < \capsum/2) = f(t_j|t_j < \capsum/2), \quad \forall j \in [m], t \in T.
\]

We are now ready to bound $\SINGLE$.
For single-parameter distributions, the optimal auction simply sets a reserve price.
Let $p^*_j$ be the optimal reserve price for the distribution $f(t_j|t_j < \capsum/2)$, and let $p_j(t_{-j})$ be the optimal reserve price for the distribution $f(t_j|t_{-j})$.
\begin{align*}
\SINGLE & \le \sum_{j \in [m]} \sum_{t_{-j} \in T_{-j}} f(t_{-j}) \cdot \Rev(t_j | t_{-j}) \\
  & = \sum_{j \in [m]} \sum_{t \in T} f(t_{-j}) f(t_j|t_{-j}) p_j(t_{-j}) \cdot \indic{t_j \ge p_j(t_{-j})} \\
  & \le \sum_{j} \sum_{\normone{t} \ge \frac{\capsum}{2}} f(t) t_j + \sum_{j} \sum_{t_j < \frac{\capsum}{2}, \normone{t_{-j}} < \frac{\capsum}{2}} f(t) p_j(t_{-j}) \cdot \indic{t_j \ge p_j(t_{-j})} \\ 
  & \le \sum_{\normone{t} \ge \frac{\capsum}{2}} f(t) \sum_{j} t_j + \sum_{j} \sum_{t_j < \frac{\capsum}{2}, \normone{t_{-j}} < \frac{\capsum}{2}} f(t) p^*_j \cdot \indic{t_j \ge p^*_j} \\
  & \le \capsum \cdot \Prob{t \sim \hat V}{\normone{t} \ge \frac{\capsum}{2}} + \sum_{t\in T} f(t) \sum_{j\in[m]} p^*_j \cdot \indic{t_j \ge p^*_j} \\
  & \le 2 \BRev(\hat V) + \SRev(\hat V).
\end{align*}
The last step uses the facts that (1) we can price the grand bundle at price $\capsum/2$ and therefore $\BRev(\hat V) \ge (\capsum/2) \cdot \Prob{}{\normone{t} \ge \capsum/2}$; and (2) the second term is exactly the revenue we can obtain if we post each item $j$ separately at price $p^*_j$. \qed
\end{proof}

Recall that $r = \SRev(\hat V)$. We continue to upper bound $\TAIL$ and $\CORE$.

\thmskip
{\noindent \bf Lemma~\ref{lem:vhat-tail}.~}
{\em
$\TAIL \le \SRev(\hat V)$.
}
\begin{proof}
Recall that $R_j \subseteq T$ is the subset of buyer types whose favorite item is $j$.
\begin{align*}
\TAIL & = \sum_{t \in T} f(t) \sum_{j \in [m]} \pi_j(t) t_j \cdot \indic{t \notin R_j} \cdot \indic{t_j > 2r} \\
  & \le \sum_{t \in T} f(t) \sum_{j \in [m]} t_j \cdot \indic{t \notin R_j} \cdot \indic{t_j > 2r} \\
  & = \sum_{j \in [m]} \sum_{t_j > 2r} f(t_j) \sum_{t_{-j} \in T_{-j}} f(t_{-j}|t_j) t_j \cdot \indic{t \notin R_j}.
\end{align*}
We first show that for any fixed $t^*_j > 2r$, knowing the value for item $j$ to be $t^*_j$ will not increase the probability of $t \notin R_j$ by more than a factor of $2$.
Intuitively, it is because $t_j > 2r$ is large enough.
This is another place where we use the special property of the $\hat V$ we designed.

Recall that $\hat V_j$ is the marginal distribution of $\hat V$ on the $j$-th coordinate, and $f(t_j)$ is the probability mass function of $\hat V_j$.
The definition $r = \SRev(\hat V)$ implies that $\Prob{t_j \sim \hat V_j}{t_j < 2r} \ge 1/2$, otherwise selling only item $j$ at price $2r$ gives revenue more than $r$.
\begin{align*}
& \sum_{t_{-j} \in T_{-j}} f(t_{-j}|t^*_j) \cdot \indic{(t^*_j, t_{-j}) \notin R_j} \\
  & \le 2 \Prob{t_j \sim \hat V_j}{t_j < 2r} \cdot \sum_{t_{-j} \in T_{-j}} f(t_{-j}|t^*_j) \cdot \indic{(t^*_j, t_{-j}) \notin R_j} \\
  & = 2 \sum_{t_j < 2r} f(t_j) \sum_{t_{-j} \in T_{-j}} f(t_{-j}|t^*_j) \cdot \indic{(t^*_j, t_{-j}) \notin R_j} \\
  & \le 2 \sum_{t_j < 2r} f(t_j) \sum_{t_{-j} \in T_{-j}} f(t_{-j}|t_j) \cdot \indic{(t_j, t_{-j}) \notin R_j} \\
  & \le 2 \sum_{t_{-j} \in T_{-j}} f(t_{-j}) \cdot \indic{t \notin R_j}.
\end{align*}
The last step uses the fact that $f(t_{-j}) = \sum_{t_j} f(t_j)f(t_{-j}|t_j)$.
The second last step states the event $t \notin R_j$ (that $j$ is not the largest coordinate) is more likely to happen when the value of $t_j$ is smaller. It uses the monotonicity of $f(t_{-j} | \, \cdot \,)$ and $t_j < 2r < t^*_j$.
This fact is captured in Lemma~\ref{CLM:C1-PREC-C2} and will be proved in Appendix~\ref{apx:hatv-preceq-v}.

\thmskip
{\noindent \bf Lemma~\ref{CLM:C1-PREC-C2}.~}
{\em
Let $\hat V = V_{|(\normone{v} \le \capsum_1)}$ and $\tilde V = V_{|(\normone{v} \le \capsum_2)}$ for any $c_1 \leq c_2$. We have $\hat V \preceq \tilde V$.
}

Finally, we relate our upper bound on $\TAIL$ to $r$.
\begin{align*}
\TAIL & \le \sum_{j \in [m]} \sum_{t_j > 2r} f(t_j) \sum_{t_{-j} \in T_{-j}} f(t_{-j}|t_j) t_j \cdot \indic{t \notin R_j} \\
  & \le 2 \sum_{j \in [m]} \sum_{t_j > 2r} f(t_j) \sum_{t_{-j} \in T_{-j}} f(t_{-j}) t_j \cdot \indic{t \notin R_j} \\
  & \le 2 \sum_{j \in [m]} \sum_{t_j > 2r} f(t_j) \cdot r
    \le \SRev(\hat V).
\end{align*}
The second last step is because for any $j$, selling each item separately at the same price $t^*_j$ gives revenue at least $\sum_{t_{-j}} f(t_{-j}) t^*_j \cdot \indic{t \notin R_j}$:
If some item $k \neq j$ satisfies that $t_k \ge t^*_j$, then the buyer would purchase at least one of such items.
The last step holds because $\sum_{j \in [m]} \sum_{t_j \ge 2r} f(t_j) \cdot 2r$ is exactly the revenue of selling each item at $2r$, so it is at most $\SRev(\hat V)$. \qed
\end{proof}

\thmskip
{\noindent \bf Lemma~\ref{lem:vhat-core}.~}
{\em
$\CORE \le 2 \BRev(\hat V) + 3 \SRev(\hat V)$.
}
\begin{proof}
Recall $r = \SRev(\hat V)$.
If we sell the grand bundle at price $\CORE - 3r$, we show that the buyer will purchase it with probability at least $5/9$.
This implies that $\BRev(\hat V) \ge \frac{5}{9}(\CORE - 3r)$, or equivalently
$
\CORE \le \frac{9}{5} \BRev(\hat V) + 3 r \le 2 \BRev(\hat V) + 3 \SRev(\hat V).
$

For the simplicity of presentation, we define a new random variable $\vcap \in \R^m$ as follows: we first draw a sample $\hat v$ from $\hat V$, and set $\vcap_j = \min(\hat v_j, 2r)$ for all $j \in [m]$.
\begin{align*}
\CORE & = \sum_{t \in T} f(t) \sum_{j \in [m]} \pi_j(t) t_j \cdot \indic{t \notin R_j} \cdot \indic{t_j \le 2r} \\
  & \le \sum_{t \in T} f(t) \sum_{j \in [m]} t_j \cdot \indic{t_j \le 2r} \\
  & \le \sum_{t \in T} f(t) \sum_{j \in [m]} \left(t_j \cdot \indic{t_j \le 2r} + 2r \cdot \indic{t_j > 2r} \right) 
    = \expect{}{\normone{\vcap}}.
\end{align*}
Since we price the grand bundle at $\CORE - 3r \le \expect{}{\normone{\vcap}} - 3r$, and the buyer's value for the grand bundle is $\normone{t}$, it is sufficient to prove 
\[
\Prob{t \sim \hat V}{\normone{t} \ge \expect{}{\normone{\vcap}} - 3r} \ge \frac{5}{9}.
\]
Moreover, because $\hat V$ stochastically dominates $\vcap$, 
it is sufficient to prove
\[
\Prob{}{\normone{\vcap} \ge \expect{}{\normone{\vcap}} - 3r} \ge \frac{5}{9}.
\]
Intuitively, the condition says that the $\ell_1$-norm of the random variable $\vcap$ concentrates around its expectation.
This is the crucial reason why we design our $\hat V$ to be negatively correlated. 

In the next claim (Lemma~\ref{clm:vhat-varc}), we are going to prove $\var[\normone{\vcap}] \le 4 r^2$.
Assume this is true, we conclude the proof by applying the Chebyshev inequality,
\[
\Prob{}{\normone{\vcap} < \expect{}{\normone{\vcap}} - 3r} \le \frac{\var{\normone{\vcap}}}{9r^2} \leq \frac{4}{9}. \tag*{\qed} 
\] 
\end{proof}

\begin{lemma}
\label{clm:vhat-varc}
Let $c \in \R^m$ be the random variable with $\vcap_j = \min(\hat v_j, 2r)$ for all $j$ and $\hat v \sim \hat V$.
We have $\var[\normone{\vcap}] \le 4 r^2$.
\end{lemma}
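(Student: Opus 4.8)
The plan is to bound $\var[\normone{\vcap}]$ by the sum of the per-coordinate variances and then control each $\var[\vcap_j]$ by a single-item revenue. Recall $r = \SRev(\hat V)$ and $\vcap_j = \min(\hat v_j, 2r) \in [0, 2r]$. Expanding the variance of a sum,
\[
\var[\normone{\vcap}] = \sum_{j \in [m]} \var[\vcap_j] + \sum_{j \ne k} \cov[\vcap_j, \vcap_k],
\]
and I would first argue that every cross term is non-positive. This is the one place where the \emph{weak} correlation of $\hat V$ is used: since $\hat V = V_{|(\normone{v} \le \capsum)}$ is a product distribution conditioned on a cap on the coordinate sum, its coordinates are negatively associated (a larger value in one coordinate consumes the shared $\ell_1$-budget and pushes the others down), and this negative dependence is preserved by the coordinatewise nondecreasing maps $x \mapsto \min(x, 2r)$; hence $\cov[\vcap_j, \vcap_k] \le 0$ for $j \ne k$, giving $\var[\normone{\vcap}] \le \sum_j \var[\vcap_j]$. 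I expect establishing the sign of these covariances carefully to be the main obstacle, as it is exactly the point where the argument would fail for an arbitrarily correlated distribution.

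Next I would bound each $\var[\vcap_j]$ via the marginal $\hat V_j$. Let $r_j$ denote the optimal revenue from posting a single price for item $j$; posting price $t$ shows $t \cdot \Prob{}{\hat v_j \ge t} \le r_j$ for every $t > 0$, and since separate selling decomposes additively across items, $\sum_{j} r_j = \SRev(\hat V) = r$, so in particular $r_j \le r < 2r$ and the split point $t = r_j$ below lies in $[0, 2r]$. Using $\vcap_j \le 2r$, the layer-cake formula, and $\Prob{}{\vcap_j > t} = \Prob{}{\hat v_j > t} \le \min(1, r_j / t)$ for $t < 2r$,
\[
\var[\vcap_j] \le \expect{}{\vcap_j^2} = \int_0^{2r} 2t \cdot \Prob{}{\vcap_j > t} \, \mathrm{d}t \le \int_0^{r_j} 2t \, \mathrm{d}t + \int_{r_j}^{2r} 2 r_j \, \mathrm{d}t = r_j^2 + 2 r_j (2r - r_j) \le 4 r\, r_j .
\]

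Finally, summing over $j$ and using $\sum_j r_j = \SRev(\hat V) = r$,
\[
\var[\normone{\vcap}] \le \sum_{j \in [m]} \var[\vcap_j] \le 4 r \sum_{j \in [m]} r_j = 4 r^2 ,
\]
as claimed. (The degenerate case $r = 0$ is immediate, since then every $\hat V_j$ is concentrated at $0$ and $\normone{\vcap} = 0$ almost surely.)
\qed
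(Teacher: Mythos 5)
Your proposal follows the paper's proof step for step: show pairwise negative covariance of the $\vcap_j$'s, conclude $\var[\normone{\vcap}] \le \sum_j \var[\vcap_j]$, then bound $\var[\vcap_j] \le 4 r r_j$ and sum using $\sum_j r_j = r$. Steps two and three are complete and correct; your layer-cake integral is a clean continuous counterpart of the paper's discrete summation-by-parts and yields the same $4 r r_j$.

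The one real gap --- which you yourself flag as ``the main obstacle'' --- is the claim $\cov(\vcap_i,\vcap_j) \le 0$. You assert that the coordinates of $\hat V = V_{|(\normone{v} \le \capsum)}$ are negatively associated and that this property is preserved under the coordinatewise nondecreasing truncation $\min(\cdot, 2r)$. That is the right intuition, but it is not a proof, and invoking full negative association is actually more than is needed (and the standard NA theorems condition on the sum \emph{equaling} a value, not being at most $\capsum$, so it is not a routine citation). The paper instead proves the weaker, sufficient pairwise statement directly: for $x < y$ in the support of $\hat V_i$, the conditional law $\hat V_{j \mid \hat v_i = x}$ first-order stochastically dominates $\hat V_{j \mid \hat v_i = y}$. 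This is because conditioning on $\hat v_i = x$ together with $\normone{v} \le \capsum$ amounts to conditioning $v_j$ on $v_j \le \capsum - x - \sum_{k \ne i, j} v_k$, and the cap is looser when $x$ is smaller. Applying $\min(\cdot, 2r)$ preserves this dominance, so $\expect{}{\vcap_j \mid \vcap_i}$ is nonincreasing in $\vcap_i$, and a rearrangement-type inequality gives $\expect{}{\vcap_i \vcap_j} \le \expect{}{\vcap_i}\expect{}{\vcap_j}$. If you replace the negative-association assertion with this direct stochastic-dominance computation, your proof is complete and essentially identical to the paper's.
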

\begin{proof}
We first show that for any $i \neq j$, $\vcap_i$ and $\vcap_j$ are negatively correlated, i.e., 
\[
\cov(\vcap_i, \vcap_j) = \mathbb{E}[\vcap_i \vcap_j] - \expect{}{\vcap_i} \cdot \mathbb{E}[\vcap_j] \le 0.
\]

Observe that $\hat V_{j|\hat v_i = x}$ stochastically dominates $\hat V_{j|\hat v_i = y}$ for any two possible values $x < y$ of $\hat v_i$.
This is because, for any $a \in \R$,
\begin{align*}
& \Prob{\hat v \sim \hat V}{\hat v_j \le a \ | \ \hat v_i = x} \\
  & = \Prob{v \sim V}{v_j \le a \ \left| \ v_i = x, \normone{v} \le \capsum\right.} \\
  & = \Prob{v_j \sim V_j}{v_j \le a \ \left| \ v_j \le \capsum - x - \sum_{k\neq i,j} v_k\right.} \\
  & \le \Prob{v_j \sim V_j}{v_j \le a \ \left| \ v_j \le \capsum - y - \sum_{k\neq i,j} v_k\right.} 
    = \Prob{\hat v \sim \hat V}{\hat v_j \le a \ \left| \ \hat v_i = y\right.}.
\end{align*}
Because $\vcap_j = \min(\hat v_j, 2r)$, we know that $\vcap_{j | v_i = x} \succeq \vcap_{j | v_i = y}$ as well.
It follows that $\mathbb{E}[\vcap_j | v_i = x] \ge \mathbb{E}[\vcap_j | v_i = y]$.
In addition, since $\mathbb{E}[\vcap_j | \vcap_i = x] = \mathbb{E}[\vcap_j | v_i = x]$ when $x < 2r$, and $\mathbb{E}[\vcap_j | \vcap_i = 2r] = \mathbb{E}[\vcap_j | v_i \ge 2r]$, we can deduce that $\mathbb{E}[\vcap_j|\vcap_i]$ weakly decreases as $\vcap_i$ increases. Therefore, we get
\[
\mathbb{E}[\vcap_i \vcap_j] = \sum_{\vcap_i} \Prob{}{\vcap_i} \left(\vcap_i \cdot \mathbb{E}[\vcap_j | \vcap_i] \right)
  \le \sum_{\vcap_i} \Prob{}{\vcap_i} \left(\vcap_i \cdot \mathbb{E}[\vcap_j]\right) = \expect{}{\vcap_i} \cdot \mathbb{E}[\vcap_j],
\]
by an application of (a generalization of) the rearrangement inequality.

Given the negative correlations between the $\vcap_i$'s, we can upper bound the variance of their sum.
\[
\var\left(\sum_{i \in [m]} \vcap_i\right) = \sum_{i \in [m]} \var(\vcap_i) + 2 \sum_{1 \le i < j \le m} \cov(\vcap_i, \vcap_j) \le \sum_{i \in [m]} \var(\vcap_i).
\]
It remains to show $\sum_{i} \var(\vcap_i) \le 4 r^2$.
This part is identical to the analysis in earlier works~\cite{LiY13,BabaioffILW14,CaiDW16}, but we include it for completeness.
Let $r_j \in \R$ denote the maximum revenue one can extract by selling item $j$ alone.
Note that $\SRev(\hat V) = r = \sum_{j} r_j$, so it is sufficient to show $\var(\vcap_j) \le 4 r r_j$ for all $j$.

Fix some $j \in [m]$.
We use $x = \vcap_j$ as an alias for the random variable $\vcap_j$.
We know that
\begin{enumerate}
\item[(1)] $x$ is at most $2r$, and
\item[(2)] the revenue of $x$ is at most $r_j$: $\left(a \cdot \Prob{}{x \ge a}\right) \le r_j$ for any $a \in \R$.
\end{enumerate}
Combining these two facts gives the required bound on the variance of $x$.
Let $0 < a_1 < \ldots < a_\ell \le 2r$ be the support of $x$, and let $a_0 = 0$.
\begin{align*}
\mathbb{E}[x^2] & = \sum_{k=1}^\ell \Prob{}{x = a_k} \cdot a_k^2 \\
  & = \sum_{k=1}^\ell (a_k^2 - a_{k-1}^2) \cdot \Prob{}{x \ge a_k} \\
  & < \sum_{k=1}^\ell 2 (a_k - a_{k-1}) \left(a_k \cdot \Prob{}{x \ge a_k}\right) \\
  & \le r_j \sum_{k=1}^\ell 2 (a_k - a_{k-1}) \\
  & = 2 r_j a_\ell \le 4 r r_j. \tag*{\qed} 
\end{align*}
\end{proof}

\section{Proof of Lemma~\ref{CLM:HATV-PREC-V}~and~\ref{CLM:C1-PREC-C2}}
\label{apx:hatv-preceq-v}

In this section we prove Lemma~\ref{CLM:HATV-PREC-V}~and~\ref{CLM:C1-PREC-C2}.

\begin{lemma}
\label{CLM:C1-PREC-C2}
$V_{|(\normone{v} \le \capsum_1)} \preceq V_{|(\normone{v} \le \capsum_2)}$ for any $c_1 \leq c_2$.
\end{lemma}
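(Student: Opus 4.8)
The plan is to construct directly the randomized coupling demanded by the definition of $\preceq$: a (randomized) map $f$ from $\supp(\tilde V)$ to $\supp(\hat V)$ with $f(x)\le x$ coordinatewise almost surely and with $f(x)\sim\hat V$ when $x\sim\tilde V$, where $\hat V=V_{|(\normone v\le\capsum_1)}$ and $\tilde V=V_{|(\normone v\le\capsum_2)}$. First I would reduce to an \emph{incremental} step: since $V$ is discrete, $\normone v$ takes finitely many values, and $\preceq$ is transitive (one simply composes the randomized maps), so it suffices to handle the case in which $\capsum_2$ is the next achievable value of $\normone v$ strictly above $\capsum_1$. In that case $\supp(\hat V)\subseteq\supp(\tilde V)$, the two distributions agree up to a common scaling factor on the low-sum types (those $t$ with $\normone t\le\capsum_1$), and only the freshly re-added high-sum mass of $\tilde V$ (the types lying on the single new sum-level) has to be redistributed \emph{downward}, coordinatewise, onto the low-sum types.

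\textbf{Main argument.} The core would be an induction on the dimension $m$. Writing $V=V_1\times V'$ with $V'=V_2\times\cdots\times V_m$ and $h(a)=\Prob{v'\sim V'}{\normone{v'}\le a}$, the distribution $V_{|(\normone v\le c)}$ can be sampled in two stages: draw the first coordinate from the tilted marginal with mass proportional to $f_{V_1}(x)\,h(c-x)$, then draw the remaining coordinates from $V'_{|(\normone{v'}\le c-x)}$. To build $f$ I would (i) couple the first coordinates so that $\hat v_1\le\tilde v_1$ and, crucially, $\tilde v_1-\hat v_1\le\capsum_2-\capsum_1$, and then (ii) on the last $m-1$ coordinates invoke the inductive hypothesis for the independent distribution $V'$ with thresholds $\capsum_1-\hat v_1\le\capsum_2-\tilde v_1$ --- the inequality between these thresholds being exactly the controlled-increase bound from (i). The base case $m=1$ is the elementary fact that a one-dimensional distribution truncated at a higher threshold stochastically dominates the same distribution truncated lower, with the quantile coupling supplying the required monotone map.

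\textbf{Main obstacle.} The hard part --- and the step I expect to be the real obstacle --- is (i): producing a coupling of the two tilted first-coordinate marginals whose increase is bounded by $\capsum_2-\capsum_1$. This is strictly stronger than first-order stochastic dominance of those marginals, and it cannot be read off from monotonicity of $h$ alone; it must exploit the \emph{product} structure of $V$, i.e.\ the fact that $h$ is itself the distribution function of a sum of independent bounded contributions. Concretely, one is led to a Hall-type / transportation feasibility condition: for every set $S$ of low-sum types, its $V$-mass must be controlled (up to the appropriate normalizing factor) by the $V$-mass of those high-sum types that coordinatewise dominate some member of $S$. Verifying this requires unwinding the independence coordinate by coordinate and is where the delicate bookkeeping lies; it is at its cleanest in the special case $\capsum_2\ge\max_{t\in\supp(V)}\normone t$, i.e.\ $\tilde V=V$, which is the companion statement (Lemma~\ref{CLM:HATV-PREC-V}), and the general incremental case would be obtained by the same flow argument carried out one new sum-level at a time.
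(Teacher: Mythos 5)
Your route is structurally different from the paper's. The paper constructs the randomized map directly, sweeping over coordinates $1,\ldots,n$ and calibrating each $\sigma_k$ to a conditional of $\tilde V$ with the cap event $\sum_{i<k}(\tau_i(v))_i+\sum_{i\ge k}v_i\le\capsum_1$ substituted in, and it imposes no quantitative bound on how far a coordinate may drop; you split off the first coordinate, recurse on $V'$, and therefore must supply an additional controlled-increase bound $\tilde v_1-\hat v_1\le\capsum_2-\capsum_1$ so that the recursive thresholds remain ordered. That extra demand is exactly where the plan breaks, and it cannot be met. Take $m=2$ with $V_1$ uniform on $\{0,3\}$, $V_2$ uniform on $\{0,1\}$, $\capsum_1=2$, $\capsum_2=3$. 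Then $\hat V=V_{|(\normone v\le 2)}$ is uniform on $\{(0,0),(0,1)\}$, $\tilde V=V_{|(\normone v\le 3)}$ is uniform on $\{(0,0),(0,1),(3,0)\}$, and this is already the single-level incremental case you reduce to. The tilted first marginal of $\hat V$ is a point mass at $0$, while that of $\tilde V$ places mass $1/3$ at $3$; every coupling therefore has $\tilde v_1-\hat v_1=3>1=\capsum_2-\capsum_1$ with positive probability, so step~(i) fails.

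The problem is not confined to your route, and no amount of Hall-type bookkeeping rescues it: in the above instance the stated lemma itself does not hold. Any randomized $f:\supp(\tilde V)\to\supp(\hat V)$ with $f(x)\le x$ coordinate-wise must map both $(0,0)$ and $(3,0)$ to $(0,0)$, since $(0,1)\not\le(3,0)$; hence $\Prob{x\sim\tilde V}{f(x)=(0,0)}\ge 2/3$ while $\Prob{\hat v\sim\hat V}{\hat v=(0,0)}=1/2$, so no coupling can realize the $\hat V$ marginal. Equivalently, the increasing set $\{t:t_2\ge 1\}$ has $\hat V$-probability $1/2$ but $\tilde V$-probability $1/3$, violating the necessary monotone-set condition for $\hat V\preceq\tilde V$. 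Only the extreme case $\capsum_2\ge\max_{t\in\supp(V)}\normone t$ (Lemma~\ref{CLM:HATV-PREC-V}) is safe in general, where it reduces to the Harris/FKG fact that an increasing event and a decreasing event are non-positively correlated under a product measure. For intermediate $\capsum_2$, neither your recursion nor the paper's coordinate-by-coordinate construction as written can succeed, and the statement needs to be weakened to the probability monotonicity the downstream applications actually use rather than full coordinate-wise dominance.
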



\begin{proof}
We are going to modify any $\tilde v$ dimension by dimension to reach $\hat v$. Define $n$ random functions $\sigma_1, \sigma_2, \ldots, \sigma_n$, where $\sigma_i(u)$ and $u$ only differs in $u_i$. In other words, $\sigma_i$ only modifies the $i$-th dimension of its input. Also define $\tau_i = \sigma_i \circ \sigma_{i - 1} \circ \cdots \circ \sigma_1$. Note that $\tau_n(v) \leq v$ as long as $\sigma_i(\tau_{i - 1}(v)) \leq \tau_{i - 1}(v)$ for all $i$.

In the first step, select a $\sigma_1$ such that
\begin{align*}
\Pr_{v \sim \tilde V}[(\sigma_1(v))_1] = \Pr_{v \sim \tilde V}\left[v_1 \left| \sum_{i = 1}^n v_i \leq \capsum_1 \right.\right].
\end{align*}
Then we separately deal with $v_2$'s according to their $(\tau_1(v))_1$ value. Select a $\sigma_2$ such that
\begin{align*}
\Pr_{v \sim \tilde V}[(\sigma_2(v))_2 \mid (\tau_1(v))_1] = \Pr_{v \sim \tilde V}\left[v_2 \left|(\tau_1(v))_1,\, (\tau_1(v))_1 + \sum_{i = 2}^n v_i \leq \capsum_1 \right.\right].
\end{align*}
We continue this procedure till we get $\sigma_n$. The $k$-th step will be setting
\begin{align*}
&\Pr_{v \sim \tilde V}[(\sigma_k(v))_k | (\tau_1(v))_1, \ldots, (\tau_{k - 1}(v))_{k - 1}]\\
= &\Pr_{v \sim \tilde V}\left[v_k \left|(\tau_1(v))_1, \ldots, (\tau_{k - 1}(v))_{k - 1},\, \sum_{i = 1}^{k - 1} (\tau_i(v))_i+ \sum_{i = k}^n v_i \leq \capsum_1 \right.\right].
\end{align*}

Next we show for all $k$, there exists $\sigma_k$ satisfying $(\sigma_k(x))_k \leq x_k$ for all $x$. This is simply first-order stochastic dominance for one-dimensional distributions, and it is equivalent to the following condition:
\begin{align*}
&\Pr_{v \sim \tilde V}[v_k \leq a \mid (\tau_1(v))_1, \ldots, (\tau_{k - 1}(v))_{k - 1}]\\
\leq &\Pr_{v \sim \tilde V}\left[v_k \leq a \left|(\tau_1(v))_1, \ldots, (\tau_{k - 1}(v))_{k - 1},\, \sum_{i = 1}^{k - 1} (\tau_i(v))_i+ \sum_{i = k}^n v_i \leq \capsum_1 \right.\right], \; \forall a \in \R.
\end{align*}

Writing $p_k = (v_{k + 1}, \ldots, v_n)$, $q_k = ((\tau_1(v))_1, \ldots, (\tau_{k - 1}(v))_{k - 1})$, and $r_k = (v_1, \ldots, v_{k - 1})$, the inequality above is true because
\begin{align*}
&\Pr_{v \sim \tilde V}\left[v_k \leq a \mid q_k,\, v_k + \normone{p_k} + \normone{q_k} \leq \capsum_1 \right]\\
= &\sum_{p_k, r_k} \Pr_{v \sim V}[p_k, r_k] \cdot 
\Pr_{v \sim V}[v_k \leq a \mid q_k, p_k, r_k,\, v_k  \leq \min(\capsum_1  - \normone{p_k} - \normone{q_k},\\ &\capsum_2 - \normone{p_k} - \normone{r_k})]\\
\geq &\sum_{p_k, r_k}\Pr_{v \sim V}[p_k, r_k] \cdot 
\Pr_{v \sim V}[v_k \leq a \mid q_k, p_k, r_k,\, v_k  \leq \min(\capsum_2  - \normone{p_k} - \normone{q_k},\\ &\capsum_2 - \normone{p_k} - \normone{r_k})]\\
= &\Pr_{v \sim \tilde V}\left[v_k \leq a \mid q_k\right].
\end{align*}

Therefore $\tau_n$ is the random function that defines coordinate-wise stochastic dominance, as every $\sigma_i$ satisfies $\sigma_i(x) \leq x$ for all $x$.
\end{proof}

\begin{lemma}
\label{CLM:HATV-PREC-V}
$V_{|(\normone{v} \le \capsum)} \preceq V$ for any $c > 0$.
\end{lemma}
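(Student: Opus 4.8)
The plan is to reuse, essentially verbatim, the coordinate-by-coordinate coupling argument from the proof of Lemma~\ref{CLM:C1-PREC-C2}. Indeed Lemma~\ref{CLM:HATV-PREC-V} is just the degenerate case of Lemma~\ref{CLM:C1-PREC-C2} in which the larger cap is vacuous: taking $\capsum_2 \ge \sup_{v \in \supp(V)} \normone{v}$ gives $V_{|(\normone{v} \le \capsum_2)} = V$, and the claim is exactly $V_{|(\normone{v}\le\capsum)} \preceq V_{|(\normone{v}\le\capsum_2)}$ with $\capsum_1 = \capsum$. (If $\supp(V)$ is unbounded one instead lets $\capsum_2 \to \infty$ in the construction below.)

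Concretely, I would build the randomized map witnessing $V \succeq V_{|(\normone{v}\le\capsum)}$ dimension by dimension. Define random functions $\sigma_1,\ldots,\sigma_m$ where $\sigma_k$ alters only coordinate $k$, and set $\tau_k = \sigma_k\circ\cdots\circ\sigma_1$. Starting from $v \sim V$, I would let $\sigma_k$ resample the $k$-th coordinate from the law of $v_k$ under $V$ conditioned on the already-fixed prefix $(\tau_1(v))_1,\ldots,(\tau_{k-1}(v))_{k-1}$ and on the partial-sum event $\sum_{i<k}(\tau_i(v))_i + \sum_{i\ge k} v_i \le \capsum$. Telescoping these conditional laws reconstructs the joint law of $V_{|(\normone{v}\le\capsum)}$, so $\tau_m(v) \sim V_{|(\normone{v}\le\capsum)}$.

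The key step is to check that each $\sigma_k$ can be chosen so that $(\sigma_k(x))_k \le x_k$ pointwise, i.e.\ it only decreases coordinate $k$. By the one-dimensional characterization of stochastic dominance this amounts to: conditioning the law of $v_k$ additionally on $v_k \le \capsum - \sum_{i<k}(\tau_i(v))_i - \sum_{i>k} v_i$ (a threshold not involving $v_k$) yields a distribution stochastically dominated by the unconditioned one. Since $V$ is a product distribution, this extra conditioning is literally truncation from above, which can only move mass downward --- this is precisely where independence of $V$ is used, exactly as in Lemma~\ref{CLM:C1-PREC-C2}.

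The only real obstacle is bookkeeping: one must verify that the prescribed sequence of conditional laws is consistent (so $\tau_m(v)$ genuinely has law $V_{|(\normone{v}\le\capsum)}$) and that every conditioning event has positive probability along the way, which holds under the standing assumption $\capsum > \min_{v\in\supp(V)}\normone{v}$. Since this is word-for-word the argument already carried out for Lemma~\ref{CLM:C1-PREC-C2}, I would simply invoke that proof rather than repeat it.
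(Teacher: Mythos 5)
Your proposal is correct and matches the paper's proof exactly: Lemma~\ref{CLM:HATV-PREC-V} is deduced as the special case of Lemma~\ref{CLM:C1-PREC-C2} with $\capsum_1 = \capsum$ and $\capsum_2 = \max_{v\in\supp(V)}\normone{v}$, and your recap of the coordinate-by-coordinate coupling is the argument the paper uses for that lemma.
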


\begin{proof}
It is implied by Lemma~\ref{CLM:C1-PREC-C2} when $c_1 = c$ and $c_2 = \max_{v \in \supp(V)} \normone{v}$.
\end{proof}

\end{document}